\documentclass{article}
\usepackage[utf8]{inputenc}
\usepackage{amsmath}
\usepackage{amsthm}
\usepackage{lscape}
\usepackage{amsfonts}
\usepackage{mathtools}
\usepackage{amsmath,amssymb}
\usepackage{graphicx}
\usepackage[vcentermath]{youngtab}
\usepackage{mathtools}
\usepackage{latexsym,amssymb,amsmath,amsfonts, amsthm, xcolor, amsmath,esint}

\usepackage{graphicx}
\usepackage{caption}

\newtheorem{theorem}{Theorem}
\newtheorem{lemma}[theorem]{Lemma}
\newtheorem{corollary}{Corollary}[theorem]

\newtheorem{proposition}[theorem]{Proposition}

\title{Quantum Robust Control using Geometric Optimal Control Theory}

\author{Francesca Albertini\thanks{Universita' di Padova}, and Domenico D'Alessandro\thanks{Iowa State University}}

\begin{document}

\maketitle

\begin{abstract}
In open loop quantum control, there are different ways to achieve a state transfer or a given quantum operation for the {\it nominal} system. The actual system Hamiltonian differs from the nominal one by certain (small) terms modeling interaction with the environment and/or uncertainties in the model. The goal of {\it quantum robust control},, \cite{Barnes3}, \cite{Barnes2} , \cite{Kosut},  \cite{Barnes1},    is to design the   nominal control and trajectory to minimize the discrepancy between the actual trajectory and the nominal one.

In this paper, we demonstrate an approach to quantum robust control based on the tools of {\it geometric optimal control} (see, e.g., \cite{Agrachev}, \cite{JurGeo}). The central objects of interest are the {\it sensitivity functions}  \cite{CDC2001} defined as the coefficients in the Taylor expansion of the trajectory with respect to the (unknown, small) parameters which describe the deviation of the actual model  from nominal one. In terms of these quantities, we formalize an {\it optimal control problem} where one searches for the optimal nominal  trajectory which minimizes the size of the sensitivity  while taking into account other aspects of the control design such as the energy of the control field. 
 
We consider in detail  the case of a single qubit with a dephasing   Hamiltonian term, and the  optimal control problem of obtaining a state transfer by minimizing the weighted sum of the energy of the controlling field and the first order sensitivity. At the limit of a very large weight on the sensitivity,  we obtain the optimal control which zeros the sensitivity and minimizes the control field energy. This problem has a rich mathematical structure which enables its solution in terms of elliptic integrals. For this problem, we obtain an explicit solution which is particularly simple and also smooth, avoiding discontinuities which are present in other approaches.

We extend the results to the robust control of two quantum bits minimizing cross-talk contamination, as we show that such a problem decouples in two independent one qubit problems.   

\end{abstract}

\vspace{0.25cm}

{\bf Keywords:} {Quantum Robust Control, Sensitivity Functions, Geometric Optimal Control, Quantum Bits Control}

\vspace{0.25cm}

\section{Quantum robust control, sensitivity functions and geometric optimal control theory}

The possibility of actively controlling quantum dynamics has generated increasing interest in the last few decades (see, e.g., \cite{BoscaRev},  \cite{Shapiro},  \cite{Mikobook}) due to the potential applications in many areas of science and technology including quantum information, communication, and metrology. As opposed to the control of classical systems,  feedback is not the most natural way to deal with imperfections in the model and/or unwanted or not modeled influence  of the external environment. Due to the quantum mechanics measurement postulate,  any measurement of the state, used in the feedback loop, would necessarily modify the quantum dynamics. Over the years, several methods have been proposed to counteract the undesired effect of unmodeled dynamics, uncertainty in the parameters and interaction with the environment. These include quantum feedback with a careful balance between information gained and state disturbance \cite{Mabuchi}, \cite{Mabuchi2}, dynamical decoupling \cite{Lidar}, and control theory for open quantum systems to minimize the interaction with the external environment (see,e.g., \cite{AltaTico} and references therein  for an introduction to control of quantum open systems). 
When dealing with the control of open (interacting with the environment) quantum systems one often replaces the basic Schr\"odinger equation of quantum mechanics with the quantum master equation \cite{Breuer}.

The {\it Quantum robust control} approach (see, e.g.,  \cite{Barnes3}, \cite{Barnes2} , \cite{Kosut},  \cite{Barnes1}), allows us to remain in the closed system (Schr\"odinger) formulation. The goal is to find the trajectories which will drive the nominal (closed) quantum system to the desired   configuration and that would, be as much as possible,  insensitive to the external environment and/or the parameter uncertainty. In the case where the environment is present one can think of the target system as  a sub-system of a large closed system which includes the (weakly coupled) environment. A  simple but representative Hamiltonian that can be used in the resulting Schr\"odinger equation can be of the form
\begin{equation}\label{Hamil}
H_{TOT}:={(H_S}(t)+\delta H) \otimes {\bf 1}_E+{\bf 1}_S \otimes {H_E}+\epsilon H_1 \otimes H_2. 
\end{equation}    
Here $H_S=H_S(t)$ is the {\it nominal} Hamiltonian of the system which includes the control functions that have to be designed, $\delta H$ is a term modeling the (static) uncertainty in the system  Hamiltonian (with $0\leq \delta << 1$), $H_E$ is the environment's Hamiltonian while $0\leq \epsilon<<1$ is the coupling constant between the system and the environment.\footnote{We could have introduced more general Hamiltonians $\sum_j \delta_j H^j$ and $\sum_{j} \epsilon_j H_1^j \otimes H_2^j$ instead of $\delta H$ and $\epsilon H_1 \otimes H_2$, respectively, at the price of only notational complexity.} If $\delta=\epsilon=0$, system and environment are independent with the system following the {\it nominal} trajectory. If $\delta\not=0$ and/or $\epsilon \not=0$ the actual trajectory of the system deviates from the nominal trajectory. The evolution operator will be $X=X(t,\delta,\epsilon)$, with $X(t,0,0):=X_S(t) \otimes X_E(t)$ being the nominal (total) trajectory. \footnote{In this paper we shall focus on control problems for the evolution operator. Similar problems can be considered for the quantum state (whether is a pure state or a density matrix).}

To quantify how much the actual trajectory differs from the nominal one, one introduces {\it sensitivity functions} which are the derivatives at $\delta=0,$ $\epsilon=0$ of $X(t,\delta, \epsilon)$. They are the coefficients appearing in the Taylor series expansion of $X(t,\delta,\epsilon)$. Sensitivity functions in this context were discussed in \cite{CDC2001} and used in  \cite{Barnes3},   \cite{Barnes2} \cite{Barnes1} , although the terminology `sensitivity function' never appears in  \cite{Barnes3},   \cite{Barnes2} \cite{Barnes1}. The geometric approach  of  \cite{Barnes3},   \cite{Barnes2} \cite{Barnes1}  aims at deriving controls that achieve a specified state transfer in a finite time by zeroing the first order and in some cases the second order sensitivity. In this paper we propose to use geometric optimal control theory for this task.

To obtain a recursive formula for the sensitivity functions, it is convenient to consider the dynamics in the {\it interaction picture}. In particular, consider  $X_S=X_S(t)$ and $X_E=X_E(t)$  the nominal trajectories of the system and environment respectively, i.e., according to Schr\"odinger operator equation,  $\dot X_S=-iH_S X_S$ and $\dot X_E=-iH_E X_E$, with $X_S(0)={\bf 1}_S$ and 
$X_E(0)={\bf 1}_E$ . By defining $X_{int}:=X_S^\dagger \otimes X_E^\dagger X$, which satisfies 
\begin{equation}\label{Scronew}
\dot X_{int}=-i \left( \delta H_{int} \otimes {\bf 1}_E + \epsilon H_{1int} \otimes H_{2int}\right)X_{int}, \qquad X_{int}(0)={\bf 1}_S \otimes {\bf 1}_E, 
\end{equation}
with $H_{int}=X_S^\dagger(t) H X_S(t)$, $H_{1int}=X_S^\dagger(t) H_1 X_S(t)$, $H_{2int}=X_E^\dagger(t) H_2 X_E(t)$.  Notice $X_{int}=X_{int}(t,\delta,\epsilon)$ with the dependence on $\delta$ and $\epsilon$ not appearing in $X_S$ and $X_E$. The following proposition, which is adapted from \cite{CDC2001}, gives recursive general formulas for the sensitivities of various order. First write the McLaurin series for $X_{int}(t,\delta,\epsilon)$ as 
$$
X_{int}(t,\delta,\epsilon)=\sum_{n=0}^\infty \sum_{j=0}^n \frac{Z_j^n(t)}{j! (n-j)!} \delta^j \epsilon^{n-j}, 
$$
where $Z^n_j$, for $j=0,1,2,...,n$ are the {\it sensitivity functions}  of order $n$ defined as the derivatives 
$$
Z^n_j(t):=\frac{\partial^n X_{int}(t,\delta,\epsilon)}{\partial \delta^j \,  \partial \epsilon^{n-j}}\large|_{\delta=0, \epsilon=0}=X_S^\dagger \otimes X_E^\dagger \frac{\partial^n X(t,\delta,\epsilon)}{\partial \delta^j \,  \partial \epsilon^{n-j}}\large|_{\delta=0, \epsilon=0},  
$$
with the special case $Z^0_0(t)\equiv {\bf 1}_S \otimes {\bf 1}_E$. 

\begin{proposition}\label{recusens} The sensitivity functions $Z_j^n=Z_j^n(t)$, for $j=0,1,...,n$ and $n \geq 1$ satisfy the following recursive differential equations\footnote{The sensitivity function $Z_j^k$ is to be considered equal to zero if $j>k$.} 
\begin{equation}\label{recusenseq}
i\frac{d}{dt} Z_j^n(t)=j H_{int} \otimes {\bf 1}_E Z^{n-1}_{j-1}+(n-j)H_{1int} \otimes H_{2int} Z_j^{n-1}, \qquad Z_j^n(0)=0. 
\end{equation}
\end{proposition}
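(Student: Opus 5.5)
The plan is to differentiate the interaction-picture equation (\ref{Scronew}) with respect to the parameters and then evaluate at $\delta=\epsilon=0$. The key observation is that $H_{int}$, $H_{1int}$ and $H_{2int}$ depend only on the nominal trajectories $X_S$ and $X_E$. Hence they do not depend on $\delta$ or $\epsilon$. The right-hand side of (\ref{Scronew}) is therefore $-i\,(\delta A+\epsilon B)X_{int}$, where
$$
A:=H_{int}\otimes {\bf 1}_E, \qquad B:=H_{1int}\otimes H_{2int}
$$
are parameter-independent. The coefficient is affine in $(\delta,\epsilon)$, and this is what makes the recursion close after one step.

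First I justify exchanging $\frac{d}{dt}$ with $\partial^n/\partial\delta^j\partial\epsilon^{n-j}$. The equation is linear, with coefficients continuous in $t$ and polynomial (affine) in the parameters. By standard smooth (indeed analytic) dependence of solutions of linear ODEs on parameters, $X_{int}(t,\delta,\epsilon)$ is jointly smooth and mixed partials commute with $d/dt$. This is also what legitimizes the McLaurin expansion used to define $Z^n_j$. If working in infinite dimension (the environment), I would assume bounded $H$, $H_1$, $H_2$, so that the Dyson series converges and can be differentiated termwise. This regularity point is the only real obstacle; the remaining steps are algebraic.

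Next, apply $\partial^n/\partial\delta^j\partial\epsilon^{n-j}$ to $i\dot X_{int}=(\delta A+\epsilon B)X_{int}$. I use the Leibniz rule in the two variables separately. Since $\delta A$ is linear in $\delta$ and independent of $\epsilon$, only the terms where at most one $\delta$-derivative falls on $\delta A$ survive:
$$
\frac{\partial^j}{\partial\delta^j}(\delta A X_{int})=\delta A\,\partial_\delta^j X_{int}+jA\,\partial_\delta^{j-1}X_{int},
$$
and then $\partial_\epsilon^{n-j}$ passes through unchanged. Similarly,
$$
\partial_\epsilon^{n-j}(\epsilon B X_{int})=\epsilon B\,\partial_\epsilon^{n-j}X_{int}+(n-j)B\,\partial_\epsilon^{n-j-1}X_{int},
$$
with $\partial_\delta^j$ passing through. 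Evaluating at $\delta=\epsilon=0$ kills the terms carrying a factor $\delta$ or $\epsilon$. What remains is
$$
i\dot Z^n_j=jA\,Z^{n-1}_{j-1}+(n-j)B\,Z^{n-1}_j .
$$
Here $Z^{n-1}_{j-1}$ carries $j-1$ $\delta$-derivatives and $n-j$ $\epsilon$-derivatives, and $Z^{n-1}_j$ carries $j$ and $n-j-1$. This is (\ref{recusenseq}). The edge cases are consistent with the footnote convention: for $j=0$ the first term has coefficient $0$, and for $j=n$ the second does, which matches $Z^{n-1}_n=0$.

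Finally, for the initial condition, $X_{int}(0,\delta,\epsilon)={\bf 1}_S\otimes{\bf 1}_E$ for all parameters. Every derivative of order $n\ge 1$ therefore vanishes at $t=0$, giving $Z^n_j(0)=0$. The identity $Z^n_j=X_S^\dagger\otimes X_E^\dagger\,\partial^n X$ holds because the prefactor is parameter-independent.
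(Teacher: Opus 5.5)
Your proof is correct and follows essentially the same route as the paper: differentiate (\ref{Scronew}) in the parameters, commute the parameter derivatives with $\frac{d}{dt}$, and set $\delta=\epsilon=0$. The paper does the differentiation by induction on $n$, keeping the extra term $(\delta H_{int}\otimes {\bf 1}_E+\epsilon H_{1int}\otimes H_{2int})\,\partial^n X_{int}$ at each step, which amounts to your one-shot Leibniz computation for the affine coefficient; your explicit treatment of $Z_j^n(0)=0$ and of the regularity needed to exchange derivatives covers points the paper leaves implicit.
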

\begin{proof}
By induction on $n\geq 1$, we have that, for any $j=0,1,...,n$, 
\begin{equation}\label{recureq}
i\frac{d}{dt} \frac{\partial^n X_{int}}{\partial \delta^j \, \partial \epsilon^{n-j}}= j H_{int} \otimes {\bf 1}_E  \frac{\partial^{n-1} X_{int}}{\partial \delta^{j-1} \, \partial \epsilon^{n-j}}
+ (n-j) H_{1int} \otimes H_{2int} \frac{\partial^{n-1} X_{int}}{\partial \delta^j \, \partial \epsilon^{n-j-1}}
\end{equation}
$$+(\delta H_{int} \otimes {\bf 1}_E+\epsilon H_{1int}\otimes H_{2int}) \frac{\partial^n X_{int}}{\partial \delta^j \, \partial \epsilon^{n-j}}.$$
For $n=1$, this is obtained from (\ref{Scronew}) by differentiating with respect to $\delta$ ($j=1$) or $\epsilon$ ($j=0$) and then exchanging the order of differentiation with $\frac{d}{dt}$. Doing the same thing for equation  (\ref{recureq}), diffferentiating  with respect to $\delta$, one obtains the same equation with $n$ replaced by $n+1$ and $j$ replaced by $j+1$. Differentiating with respect to $\epsilon$, 
one obtains the same equation with $n$ replaced by $n+1$ and $n-j$ replaced by $n-j+1$. By evaluating (\ref{recureq}) at $\delta=0$ and $\epsilon=0$ one obtains   (\ref{recusenseq}). 
\end{proof}



The system (\ref{Scronew}) (\ref{recusenseq}) poses a number of important questions that have an immediate interpretation from a control theoretic viewpoint. The control here affects the nominal trajectory (which appears in the equations in $H_{int}$ and $H_{1int}$) and one would like to find the control and trajectory to reach at the final time $T$ the desired final condition for $X_{int}$ and $X$ and at the same time drive to zero the sensitivities up to order $k$, for $k$ as large as possible. Whether this is possible, and up to which order sensitivities can be zeroed is a question of {\it controllability theory}. On can 
also try to find controls that drive to the desired final condition with zero (or small) sensitivities  but are somehow restricted for example in norm, or total energy or frequency content.  This quantum robust control scenario naturally lends itself to a an {\it optimal control formulation}.

In this paper we shall focus on an optimal control problem where the cost is a weighted sum of the energy of the control and the sensitivity. This approach has, among other things, the advantage of producing {\it smooth}  solutions, as we shall see.   Define $C_E$, the {\it energy cost} when the control is $u=u(t)$, 
$C_E:=\frac{1}{2}\int_0^T \|u(t)\|^2 dt$ and by $C_S^n$ the {\it sensitivity cost of order $n$} defined as $C_S^n=\frac{1}{2} \left( \sum_{j=0}^n \|Z_j^n(T)\|^2 \right)$. For a fixed $n=1,2,....$, 
one can set up the following optimal control problems. 
\begin{itemize}
\item {\bf Problem n} Find the control which drives the nominal system to a desired final condition,\footnote{In the case where the presence of the environment is considered the desired final condition is of the form $X_f \otimes X_E$ with given $X_f$ and arbitrary $X_E$ for the environment.}  gives zero for the sensitivity costs $C_S^k$, of order $k=1,2,...,n$ and minimizes  the energy functional $C_E$. 
\item {\bf Problem n-$\boldsymbol{\gamma}$} Find the control which drives the nominal system to a desired final condition,  gives zero for the sensitivity costs $C_S^k$, of order $k=1,2,...,n-1$, and minimizes a cost which is a a weighted sum of the energy functional $C_E$ and the sensitivity cost of order $n$, that is, 
\begin{equation}\label{Jgamma}
C=C(\gamma):=C_E+\gamma C_S^n, 
\end{equation} 
for $\gamma \geq  0$. 
\end{itemize} 

One can think of Problem n with $n=0$  as the standard minimum energy problem which has received much attention in the geometric control literature (see, e.g,, \cite{Agrachev}\cite{Flerish}). Problem n-1 is the same as Problem n-$\gamma$ when $\gamma=0$, while the solution of Problem n-$\gamma$ tends to the solution of Problem n when $\gamma \rightarrow \infty$, that is, with a very large weight $\gamma$ on the sensitivity cost.  


While the above described set-up is general, the rest of the paper will focus on a special case of fundamental interest: The robust control of a single qubit, without environment but with a de-phasing Hamiltonian. For this system, we shall demonstrate  the use of geometric control techniques in quantum robust control. We shall solve the Problem 1-$\gamma$ for every $\gamma$, and obtain the solution of Problem 1 as a limit when $\gamma \rightarrow \infty$. We shall see that this problem is mathematically very rich, but the solution we shall find is explicit and particularly simple, with several desirable features including smoothness and boundedness. There are several introductions to the theory of  geometric optimal control and in particular the {\it Pontryagin Maximum Principle} which gives the general necessary conditions of optimality. The book \cite{Mikobook} and the paper \cite{BoscaRev} contain an introduction with an emphasis con the application to quantum systems. We shall summarize the main results that we will need in Appendix \ref{Pontriag} to which we will often refer for notation and terminology. The problem of robust control is described and solved in section  \ref{ORCQB}. We shall see that the solutions can be expressed in terms of elliptic integrals. More analysis and an explicit solution for the case of a $\frac{\pi}{2}$ rotation, that is a NOT gate, is given in Section \ref{NOT}. From this solution of Problem 1-$\gamma$, we obtain the solution of Problem 1 by allowing $\gamma$ to  go to $\infty$. In Section \ref{TQB} we consider the problem of the robust control of two quantum bits where we want to mitigate `cross-talk' interaction via a robust control approach. We show that this problem nicely decouples into two independent  one qubit problems, as treated in Sections \ref{ORCQB}, \ref{NOT}. We draw some conclusions in Section \ref{Conclu}. Some more technical and auxiliary results are collected in the Appendix.

\section{Optimal Robust Control of a Quantum Bit}\label{ORCQB}

  \subsection{The model} 

 We recall the notation for the Pauli matrices 
\begin{equation}\label{Paulimat}
\sigma_x:=\begin{pmatrix} 0 & 1 \cr 1 & 0 \end{pmatrix}, \quad \sigma_y:=\begin{pmatrix} 0 & -i \cr i & 0 \end{pmatrix}, \quad \sigma_z:=\begin{pmatrix} 1 & 0 \cr 0 & -1 \end{pmatrix}. 
\end{equation}
The norm of $2\times 2$ Hermitian matrix $A$ is defined as $\|A\|^2=\frac{1}{2}Tr(A^2)$ so that the Pauli matrices are orthonormal. 

The Hamiltonian for the  model one qubit in a de-phasing environment  is (cf. (1) in \cite{Barnes1})
$$
H_{TOT}=u\sigma_y+\delta \sigma_z, 
$$
with $u$ the control and $\delta $ a small parameter. In relation to (\ref{Hamil}), the nominal Hamiltonian is $H_S=u\sigma_y$ and the parameter uncertainty Hamiltonian is  $H=\sigma_z$. There is no  explicit environment considered in this model. The nominal solution of the Schr\"odinger equation is  given by 
\begin{equation}{\label{nominals}}
X_S(t)=\begin{pmatrix} \cos(\theta(t)) & \sin(\theta(t)) \cr -\sin(\theta(t)) & \cos(\theta(t)) \end{pmatrix} =\cos(\theta(t)){\bf 1}+ \sin(\theta(t)) i \sigma_y, 
\end{equation}
with $\theta$ satisfying 
$
\dot \theta=u, \, \theta(0)=0.
$
There is only one sensitivity function of first order, $Z_1^1$, obtained by integration of (\ref{recusenseq}) without the environment part and setting $Z_0^0\equiv {\bf 1}_2$. This gives $
Z_1^1(t)=\int_0^t X_S^\dagger(s) \sigma_z X_S(s)ds=S_z(t)\sigma_z+S_x(t)\sigma_x,$ with $
S_z(t):= \int_0^t \cos(2\theta(s))ds,$ and $S_x(t):=\int_0^t \sin(2\theta(s))ds$. 
Thus the augmented system (state+sensitivity  variables) satisfies the equations: 

\begin{eqnarray}
&&\dot \theta=u, \qquad   \qquad  \, \, \theta(0)=0  \label{E1}\\
&&\dot S_z= \cos(2 \theta), \quad S_z(0)=0  \label{E2}\\
&&\dot S_x= \sin(2\theta),  \quad S_x(0)=0  \label{E3}. 
\end{eqnarray}

We consider the Problem $1-\gamma$ of driving the state $\theta$ from $0$ to a desired value $\theta_{des}$, that is, inducing a desired rotation on the qubit 
with $\theta_{des}$ defined $\mod(2\pi)$ in finite time $T$.  The energy cost $C_E$ is   
\begin{equation}\label{energycost}
J_E=\frac{1}{2}\int_0^T u^2(t)dt, 
\end{equation}
While the sensitivity cost $C_S^1$ is  
\begin{equation}\label{sensitivitycost}
C_S^1=\frac{1}{2} \|Z_1^1(T)\|^2=\
\end{equation}
$$\frac{1}{2}(S_z^2(T)+S_x^2(T))=\frac{1}{2} \int_0^T\frac{d}{dt} (S_z^2(t)+S_x^2(t))dt=\int_0^TS_z(t)\cos(2\theta(t))+S_x(t) \sin(2\theta(t))dt, 
$$
where we used (\ref{E2}) and (\ref{E3}) and wrote the cost both in the {\it Mayer form} (penalty on the final state) and {\it Lagrange form} (integral of a function over the control interval)(see e.g. \cite{Flerish}). We shall consider the Problem $1-\gamma$ for this setting, i.e., we want to minimize,  subject the desired  final condition,  the cost (\ref{Jgamma}) with $n=1$, for varying $\gamma$. We remark that for $\gamma=0$, this problem is a minimum energy problem which has the solution $u\equiv \frac{\theta_{des}}{T}$, that is, a constant function.
This gives in general a {\it lower} bound for the cost because the optimal cost grows (at most linearly) with $\gamma$. We shall show 
this in the following subsection \ref{monot}. The cost associated with any control that zeros the first order sensitivity is a 
uniform {\it upper} bound for the cost as function of $\gamma$, $C(\gamma)$. Such controls exist (see, e.g., \cite{Barnes1}).  There is therefore a limit 
$\lim_{\gamma \rightarrow \infty} C(\gamma)$ which we shall calculate explicitly. Starting from subsection  \ref{NCO}, we shall derive the form and the properties of 
the optimal control and trajectory using the techniques of geometric optimal control summarized in Appendix \ref{Pontriag}.

\subsection{Monotonicity of $C=C(\gamma)$.}\label{monot}


Fix two values $\gamma_1$ and $\gamma_2$ for $\gamma$ in (\ref{Jgamma}), $0\leq\gamma_1<\gamma_2$,  and let $u_i(t)$ be the optimal control for $\gamma_i$, $i=1,2$. Denote by $S_{x,u_i}(t)$ and $S_{z,u_i}(t)$ the two corresponding sensitivity variable.
Since both optimal controls drive $\theta=0$ to $\theta(T)=\theta_{des}$, we have, denoting by $C^{opt}({\gamma_i})$ the optimal costs,
\begin{equation}\label{prima}
\begin{split}
C^{opt}({\gamma_2})&=\frac{1}{2}\int_0^T u^2_2(s)ds +\frac{\gamma_2}{2}\left(S^2_{z,u_2}(T)+S^2_{x,u_2}(T)\right) \\ &\geq \frac{1}{2}\int_0^T u^2_2(s)ds +\frac{\gamma_1}{2}\left(S^2_{z,u_2}(T)+S^2_{x,u_2}(T)\right)\geq C^{opt}(\gamma_1).
\end{split}
\end{equation}
The first inequality is due to $\gamma_2 > \gamma_1$ while the second, and last, is a consequence of optimality.  
Thus the optimal cost is nondecreasing  with respect to $\gamma$. We remark  that we also have because of the optimality of $u_2$:
$$
C^{opt}({\gamma_2}) \leq \frac{1}{2}\int_0^T u^2_1(s)ds +\frac{\gamma_2}{2}\left(S^2_{z,u_1}(T)+S^2_{x,u_1}(T)\right) 
$$
By adding and subtracting $\frac{\gamma_1}{2}\left(S^2_{z,u_1}(T)+S^2_{x,u_1}(T)\right)$ we have that:
\begin{equation}\label{crescita}
 C^{opt}({\gamma_2}) \leq C^{opt}({\gamma_1}) +\frac{\gamma_2-\gamma_1}{2}\left(S^2_{z,u_1}(T)+S^2_{x,u_1}(T)\right)\leq C^{opt}({\gamma_1}) + T^2\left({\gamma_2-\gamma_1}\right)
\end{equation}
where to get the last inequality, we have used the fact that 
$
|S_{z,u_1}(T)|\leq \int_0^T|\cos(2\theta(s)|ds\leq T
$
and similarly  for $S_{x,u_1}(T)$. Therefore, the cost grows at most linearly with $\gamma$.

Let us specialize the above bound using our knowledge for the case $\gamma=\gamma_1=0$.  As said in the previous section if $\gamma=0$, the optimal control is  $u(t)\equiv \frac{\theta_{des}}{T}:=u_0$, so 
$
C^{opt}(0)=\frac{\theta^2_{des}}{2T}
$. 
Using the constant control $u_0$,  we have, that $\theta(t)=\frac{t\theta_{des}}{T}$, and
\[
S_z(T)=\int_0^T\cos(\frac{2t\theta_{des}}{T}) dt=\frac{\sin(2\theta_{des})}{2\theta_{des}},  \  \  \ S_x(T)=\int_0^T\sin(\frac{2t\theta_{des}}{T}) dt=\frac{\left(1-\cos(2\theta_{des})\right)}{2\theta_{des}}.
\]
Using this equation we can rewrite (\ref{prima}) and (\ref{crescita}), with  $\gamma_1=0$, and we  have  that, for all $\gamma$:
\begin{equation}{\label{crescita1}}
\frac{\theta^2_{des}}{2T}\leq C^{opt}(\gamma)\leq  \frac{\theta^2_{des}}{2T}+ \frac{\gamma}{4\theta^2_{des}}\left(1-\cos(2\theta_{des}) \right)= \frac{\theta^2_{des}}{2T}+ \frac{\gamma}{2\theta^2_{des}} \sin^2(\theta_{des}), 
\end{equation}
which gives lower and upper bound for $C^{opt}({\gamma})$ for any $\gamma$. Notice that in the special case where $\theta_{des}=k\pi$ equality 
$J_{\gamma}^{opt}=\frac{\theta^2_{des}}{2T}$ holds. In this case, the optimal constant control $u\equiv \frac{\theta_{des}}{T}$ which minimizes the energy also makes the sensitivity vector equal to zero. 

\subsection{Necessary conditions of optimality}\label{NCO} 

We now apply the conditions of the Pontryagin maximum principle of Theorem \ref{PMP} in Appendix \ref{Pontriag}  to the problem at hand. We refer to the Appendix \ref{Pontriag} for the terminology we shall use. Write the costate as $(\lambda_\theta, \lambda_z, \lambda_x)^T$, so that, from (\ref{E1}), (\ref{E2}), (\ref{E3}) and (\ref{energycost}) and (\ref{sensitivitycost}), along with (\ref{Jgamma}) we get for the control PMP Hamiltonian in (\ref{PMPHamiltonian}), 
\begin{equation}\label{Hhat}
\hat H=\lambda_\theta u+\lambda_z \cos(2\theta)+\lambda_x \sin(2\theta)+\mu_0\left( \frac{1}{2} u^2+\gamma(S_z \cos(2\theta)+S_x\sin(2\theta))\right). 
\end{equation}
 From (\ref{Hhat}),  the costate equations (\ref{equala}) become,  
\begin{eqnarray}
&&\dot \lambda_\theta=-\frac{\partial \hat H}{\partial \theta}=2(\lambda_z+\mu_0 \gamma S_z)\sin(2\theta)-2(\lambda_x+\mu_0 \gamma S_x)\cos(2\theta), \label{CE1}\\
&&\dot \lambda_z=-\frac{\partial \hat H}{\partial S_z}=-\mu_0 \gamma \cos(2\theta),  \label{CE2} \\
&&\dot \lambda_x=-\frac{\partial \hat H}{\partial S_x}=-\mu_0 \gamma \sin(2\theta). \label{CE3} 
\end{eqnarray} 
These equations have to be integrated together with (\ref{E1}), (\ref{E2}), (\ref{E3})  with boundary conditions given by the transversality conditions. In our  case the final value of the sensitivity is free 
(and need to be minimized in conjunction with the energy). The transversality conditions give $\lambda_z(T)=\lambda_x(T)=0$ besides $\theta(T)=\theta_{des}$. With these boundary conditions (together with $\theta(0)=S_x(0)=S_z(0)$ one should in principle integrate equations (\ref{E1})-(\ref{E3}), (\ref{CE1})-(\ref{CE3}) with the control maximizing $\hat H$ in (\ref{Hhat}).  Each possible trajectory (satisfying the boundary conditions) would give an optimal candidate for which to compute and compare the costs. 

We start by proving that abnormal extremals do not exist and therefore we can and will take $\mu_0=-1$. 

\begin{proposition}\label{DNEx}
The problem admits no abnormal extremals. 
\end{proposition}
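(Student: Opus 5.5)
Suppose, by contradiction, that an abnormal extremal exists, i.e., $\mu_0=0$ in (\ref{Hhat}). By the Pontryagin maximum principle (Theorem \ref{PMP}), the costate $(\lambda_\theta,\lambda_z,\lambda_x)$ must then be nonzero at every $t\in[0,T]$. The plan is to show that with $\mu_0=0$ the costate equations and transversality conditions force the whole costate to vanish, which is the contradiction.

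First I use (\ref{CE2}) and (\ref{CE3}) with $\mu_0=0$. They give $\dot\lambda_z=\dot\lambda_x=0$, so $\lambda_z$ and $\lambda_x$ are constant. The transversality conditions $\lambda_z(T)=\lambda_x(T)=0$ then give $\lambda_z\equiv\lambda_x\equiv 0$ on $[0,T]$. Substituting into (\ref{CE1}), again with $\mu_0=0$, gives $\dot\lambda_\theta\equiv 0$, so $\lambda_\theta$ is a constant $c$.

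Next I use the maximization condition. With $\mu_0=0$ the PMP Hamiltonian reduces to $\hat H=\lambda_\theta u=cu$. The control $u$ is unconstrained in $\mathbb{R}$, and the optimal control must maximize $\hat H$ over $u$ at almost every $t$. A linear function $cu$ has a finite maximum over $u\in\mathbb{R}$ only if $c=0$. Equivalently, the stationarity condition $\partial\hat H/\partial u=\lambda_\theta=0$ must hold. Hence $\lambda_\theta\equiv 0$.

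Therefore $(\mu_0,\lambda_\theta,\lambda_z,\lambda_x)\equiv 0$. This contradicts the nontriviality requirement of the PMP, so no abnormal extremals exist and we may normalize $\mu_0=-1$.

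The one point needing care is the transversality conditions. They must be applied exactly as stated, for a free final sensitivity with the cost written in Lagrange form. If the Mayer form were used instead, $\lambda_z(T)$ and $\lambda_x(T)$ would equal $\mu_0\gamma S_z(T)$ and $\mu_0\gamma S_x(T)$, which also vanish when $\mu_0=0$. So the argument goes through in either formulation, and I expect no real obstacle.
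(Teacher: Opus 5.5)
Your proof is correct and follows essentially the same route as the paper: with $\mu_0=0$, maximization over the unconstrained control forces $\lambda_\theta\equiv 0$, while (\ref{CE2})--(\ref{CE3}) make $\lambda_z,\lambda_x$ constant and the transversality conditions make them vanish, which contradicts the nontriviality requirement in Theorem \ref{PMP}. Your extra step of using (\ref{CE1}) to show $\lambda_\theta$ is constant is not in the paper but is a useful touch, since it upgrades the almost-everywhere conclusion of the maximization condition to $\lambda_\theta\equiv 0$; your remark on the Mayer formulation is a harmless aside.
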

\begin{proof}
Let us assume for our case that an extremal is abnormal. Then by definition $\mu_0=0$ in (\ref{Hhat}). Maximization of the PMP Hamiltonian $\hat H$ with respect to $u$ gives that $\lambda_\theta$ must be identically zero. Furthermore from (\ref{CE2}) and (\ref{CE3}), we have that $\lambda_x$ and $\lambda_z$ are constants. Since, from the transversality conditions they are zero at the final time $T$, they are identically zero.     
From $\lambda_z\equiv \lambda_x\equiv \lambda_\theta\equiv 0$ together with $\mu_0=0$ we obtain a contradiction of  the Pontryagin Maximum Principle  of Theorem \ref{PMP}. 
\end{proof}

\subsection{Constants of motion and simplified necessary conditions}\label{simplyR} 
We now show that systems (\ref{E1})-(\ref{E3}) and (\ref{CE1})-(\ref{CE3}) (with $\mu_0=-1$) presents several constants of motion which allow us to write it in a simplified form. Define $a_{z}:=2(\lambda_z-\gamma S_z)$ and 
$a_{x}:=2(\lambda_x-\gamma S_x)$. Differentiation $a_z$ and using (\ref{E2}) and (\ref{CE2}) we obtain $\dot a_z\equiv 0$. Analogously, using (\ref{E3}) and (\ref{CE3}), we get $\dot a_x\equiv 0$. That is, $a_z$ and $a_x$ are constants. Furthermore, from the maximization of the PMP Hamiltonian  (\ref{Hhat}) (with $\mu_0=-1)$ with respect to $u$, we get $u=\lambda_\theta$, and therefore we can identify $u$ with $\lambda_\theta$ and eliminate $\lambda_\theta$ from the equations.\footnote{Notice that normality has forced the control to be continuous and, in fact, an analytic function.} We obtain the simplified equations 
\begin{eqnarray}
&&\dot \theta = u  \quad \theta(0)=0. \label{sim1}  \\
&&\dot u= a_z \sin(2\theta)-a_x \cos(2\theta) \quad u(0)=c \label{sim2}\\
&&\dot S_z=\cos(2 \theta)  \quad S_z(0)=0, \label{sim3}  \\
&&\dot S_x=\sin(2\theta)  \quad S_x(0)=0.  \label{sim4} 
\end{eqnarray}
There is no a priori constraint on $S_z$ and $S_x$ at the final time $T$ for our Problem $1-\gamma$.  However the constants $a_{z,x}$ have a precise meaning in terms of the sensitivity at time $T$. Recall  that from the transversality conditions $\lambda_z(T)=\lambda_x(T)=0$. Therefore 
\begin{equation}\label{aZ}
a_z=a_z(T)=2(\lambda_z(T)-\gamma S_z(T))=-2\gamma S_z(T), 
\end{equation}
\begin{equation}\label{aX}
a_x=a_x(T)=2(\lambda_x(T)-\gamma S_x(T))=-2 \gamma S_x(T). 
\end{equation}
System (\ref{sim1})-(\ref{sim4}) admits an extra constant of motion given by the Hamiltonian of the maximum principle which is constant according to Theorem \ref{PMP}. Using $\lambda_\theta=u$ and the above definitions, we have 
 \begin{equation}\label{CoM1}
H=\frac{u^2}{2}+\frac{a_z}{2} \cos(2\theta)+\frac{a_x}{2} \sin(2\theta)=constant=\frac{c^2}{2}+\frac{a_z}{2}, 
\end{equation}   
where we can use (\ref{aZ}) and (\ref{aX}).

\subsection{Cost scaling with time $T$}\label{resca} 
We now analyze how the optimal control and trajectory scale with the final time $T$. A consequence of this analysis will be that there is no loss of generality in considering $T=1$ and we will do so in the following sections. 

Assume that for an  arbitrary  control $u=u(t)$ with $t\in [0,T]$, $(\theta=\theta(t), S_z=S_z(t), S_x=S_x(t))$ is the  solution of (\ref{E1}), (\ref{E2}) (\ref{E3}) in $[0,T]$, 
then  $(\tilde \theta=\tilde \theta(t), \tilde S_z=\tilde S_z(t), \tilde S_x=\tilde S_x(t))$, with $\tilde \theta=\tilde \theta(t)=\theta(tT),$ $\tilde S_z=\tilde S_z(t)=\frac{1}{T}S_z(tT)$,     
$\tilde S_x=\tilde S_x(t)=\frac{1}{T}S_x(tT)$  is the  solution of (\ref{E1}), (\ref{E2}) (\ref{E3}) in $[0,1]$ with $\tilde u(t)=T(u(tT))$, $t \in [0,1]$. This can be verified by directly replacing $\tilde \theta$, $\tilde S_z$,  $\tilde S_x$ and $\tilde u$ in the equations  (\ref{E1}), (\ref{E2}), (\ref{E3}).   The energy cost $\tilde C_E$ in (\ref{energycost}) for $\tilde u$ is 
$
\tilde C_E=\frac{1}{2} \int_0^1 \tilde u^2(t)dt=\frac{T^2}{2} \int_0^1  u^2(tT)dt=\frac{T}{2}\int_0^Tu^2(s)ds=TC_E. 
$ The sensitivity cost for $\tilde C_S^1$ scales as  
$
\tilde C_S^1=\frac{1}{2}(\tilde S_z^2(1)+\tilde S_x^2(1))=\frac{1}{2}\left( \frac{S_z^2(T)}{T^2}+ \frac{S_x^2(T)}{T^2} \right)=\frac{1}{T^2}C_S^1. 
$
This means that for any control in the interval $[0,T]$ with a combined  cost $C_E+\gamma C_S^1$ there is a control in the interval $[0,1]$ with  cost $TC_E+\gamma \frac{C_S^1}{T^2}$. Denote the optimal cost as $C^{opt}$. This in general will depend on the time interval width $T$ and the penalty parameter $\gamma$. Therefore we write it as $C^{opt}(T,\gamma)$. The above argument shows that 
\begin{equation}\label{timescaling}
C^{opt}(1,\gamma)=TC_{opt}\left(T, \frac{\gamma}{T^3}\right), \text{  i.e., }  C^{opt}(T,\gamma) =\frac{1}{T}C^{opt}(1,T^3 \gamma),  
\end{equation}
with the corresponding scaling of the controls and trajectories. Since we have not placed a priori any restriction on $\gamma$ (except $\gamma\geq0$), we shall, from now on normalize the time to $T=1$. If a fixed  final time $T$ and penalty parameter $\gamma$ is given. We can solve the problem to determine $C^{opt}$ in the interval $[0,1]$   and for $\gamma$ replaced by $T^3\gamma$ and then rescale the solution to obtain the solution on $[0,T]$.


\subsection{Symmetry Properties of   the optimal   control and trajectory} 

We show now that the optimal control and trajectory, that, as we have seen, have to satisfy equations (\ref{sim1})-(\ref{CoM1}) must have some extra symmetry properties which, we shall see in the next section, help in finding them.  

\subsubsection{ Symmetry with respect to zero  of the optimal control and trajectory} 

It can be verified that  if $(\theta,u,S_z,S_x)$ is the solution of  (\ref{sim1})-(\ref{sim4}), with initial condition $(0,c,0,0)$ and parameters $(a_z,a_x)$,  then $(-\theta,-u,S_z,-S_x)$ is the solution of  (\ref{sim1})-(\ref{sim4}), with initial condition $(0,-c,0,0)$ and parameters $(a_z,-a_x)$. The following proposition shows that we can always, without loss of generality, assume that the final condition (angle) $\theta_{des}$ is positive. 

\begin{proposition}\label{consequence1}
The minimum cost to reach $\theta_{des}$ is the same as the minimum cost to reach $-\theta_{des}$ with control being one the opposite of the other. 
\end{proposition}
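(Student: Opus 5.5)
The plan is to exhibit a cost-preserving bijection between admissible controls for the two targets, namely $u \mapsto -u$, and conclude that the two infima coincide and are attained by opposite controls. I will work directly with the original system (\ref{E1})--(\ref{E3}) on $[0,T]$ (or $[0,1]$ after the normalization of subsection \ref{resca}) and with the cost (\ref{Jgamma}) for $n=1$. The extremal equations (\ref{sim1})--(\ref{sim4}) are not needed, although the same symmetry is visible there.

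First, fix any admissible control $u$ steering $\theta(0)=0$ to $\theta(T)=\theta_{des}$, with trajectory $(\theta,S_z,S_x)$, and set $\tilde u:=-u$. By uniqueness of solutions of (\ref{E1})--(\ref{E3}), the corresponding trajectory is $\tilde\theta=-\theta$. Since $\cos$ is even and $\sin$ is odd, we get $\tilde S_z(t)=\int_0^t\cos(-2\theta)\,ds=S_z(t)$ and $\tilde S_x(t)=\int_0^t\sin(-2\theta)\,ds=-S_x(t)$; this is the same reflection as the one noted just before the proposition. In particular $\tilde\theta(T)=-\theta_{des}$, so $\tilde u$ is admissible for the target $-\theta_{des}$.

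Next, I compare costs. The energy term satisfies $\frac12\int_0^T\tilde u^2\,dt=\frac12\int_0^T u^2\,dt$. The sensitivity term satisfies $\frac12(\tilde S_z^2(T)+\tilde S_x^2(T))=\frac12(S_z^2(T)+S_x^2(T))$. Hence $C(\tilde u)=C(u)$ for every $\gamma\ge0$.

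The map $u\mapsto -u$ is an involution, so it is a bijection between the admissible sets for the two targets and it preserves the cost. The infima are therefore equal, and $u^*$ is optimal for $\theta_{des}$ if and only if $-u^*$ is optimal for $-\theta_{des}$. The only subtle point is that $\theta_{des}$ is defined $\bmod 2\pi$: the admissible set for $\theta_{des}$ is the union over $k$ of controls reaching $\theta_{des}+2k\pi$. Negation sends this union to the union of controls reaching $-\theta_{des}-2k\pi$, which is exactly the admissible set for $-\theta_{des}$ modulo $2\pi$. The argument is therefore unaffected, and I do not expect a real obstacle.
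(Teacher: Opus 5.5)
Your proposal is correct and is essentially the paper's argument. The paper also uses that $u\mapsto -u$ sends $(\theta,S_z,S_x)$ to $(-\theta,S_z,-S_x)$ with unchanged energy and sensitivity cost; it phrases this as a contradiction ($C_+<C_-$ is impossible, and symmetrically) and tracks the extremal parameters $(a_z,a_x)\mapsto(a_z,-a_x)$. Your direct cost-preserving bijection on admissible controls is slightly cleaner, since it does not presuppose that optimal controls exist and needs no reference to the extremal equations.
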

\begin{proof} Define $C_+$ ($C_-$) the minimum cost to reach $\theta_{des}$ ($-\theta_{des}$). Then assume that  $C_+< C_-$. Let    $u$ with $(a_z,a_x)$ be  the optimal control for $\theta_{des}$, then  we can use $-u$ with $(a_z,-a_x)$ to drive $\theta$ to $-\theta_{des}$ with the same cost $C_+$. However this contradicts the optimality of $C_-$.   A similar   argument shows that $C_-$ cannot be strictly less than $C_+$. 
\end{proof}

\subsubsection{Equality of the initial and final value of the control}

\begin{proposition}\label{equainfin}
The optimal control $u$ is such that it attains equal values at the initial and final time, that is, 
\begin{equation}\label{uinfin}
c:=u(0)=u(1). 
\end{equation} 
\end{proposition}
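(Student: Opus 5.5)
The natural route is to combine the constant of motion (\ref{CoM1}) with the transversality relations (\ref{aZ})--(\ref{aX}). Evaluating $H$ at $t=0$, where $\theta(0)=0$, gives $H=\frac{c^2}{2}+\frac{a_z}{2}$. Evaluating at $t=1$, where $\theta(1)=\theta_{des}$, gives
\[
H=\frac{u(1)^2}{2}+\frac{a_z}{2}\cos(2\theta_{des})+\frac{a_x}{2}\sin(2\theta_{des}).
\]
Equating the two yields $u(1)^2-c^2=a_z(1-\cos(2\theta_{des}))-a_x\sin(2\theta_{des})$. This expression involves $\theta_{des}$ and both parameters, so it does not close the argument on its own. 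We need a second relation tying the endpoint data to $a_z,a_x$, and for this we use the meaning of $a_z,a_x$ as $-2\gamma$ times the final sensitivities.

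First we integrate (\ref{sim2}) in time, using (\ref{sim3})--(\ref{sim4}):
\[
u(1)-c=a_z S_x(1)-a_x S_z(1).
\]
Substituting (\ref{aZ})--(\ref{aX}) with $T=1$ gives $a_zS_x(1)-a_xS_z(1)=-2\gamma S_z(1)S_x(1)+2\gamma S_x(1)S_z(1)=0$. Hence $u(1)=c$ directly, and this first-integral argument already proves the proposition.

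The only point to check is that (\ref{aZ})--(\ref{aX}) hold with $T=1$ after the rescaling of Section \ref{resca}. They do, because the rescaled problem is again a Problem $1$-$\gamma$ on $[0,1]$ with penalty $T^3\gamma$. The cancellation is valid for every $\gamma$, and for $\gamma=0$ both sides vanish trivially. As a consistency check, the Hamiltonian identity then implies $a_z(1-\cos 2\theta_{des})=a_x\sin 2\theta_{des}$, a constraint that will be useful later.

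The main obstacle would have been trying to get the result from the Hamiltonian alone, which only yields $u(1)^2$ and leaves the sign of $u(1)$ undetermined. The integrated costate equation avoids this: it is linear in $u$, so it gives the equality with the correct sign.
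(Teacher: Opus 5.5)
Your argument is correct, and it reaches the result by a shorter route than the paper's.

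\textbf{The paper's route.} The paper does not use the transversality conditions of Problem $1$-$\gamma$ directly. It proceeds as follows.
\begin{enumerate}
\item It excludes constant $u$, so that $V^2=S_z^2(1)+S_x^2(1)\neq 0$.
\item It observes that $u_{opt}$ is also optimal for minimizing energy with terminal set $\{\theta=\theta_{des},\ S_z^2+S_x^2=V^2\}$, and re-checks normality for this auxiliary problem.
\item It uses the auxiliary transversality relation $\lambda_x S_z(1)-\lambda_z S_x(1)=0$. It then introduces $\tilde S=\lambda_x S_z-\lambda_z S_x$, which vanishes at both ends and whose derivative is a constant multiple of $\dot u$.
\end{enumerate}

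\textbf{Comparison.} The underlying cancellation is the same as yours. In both proofs $\dot u$ is a fixed linear combination of $\dot S_x$ and $\dot S_z$, and its coefficient vector is parallel to $(S_z(1),S_x(1))$. You get that parallelism for free from (\ref{aZ})--(\ref{aX}), that is, from $\lambda_z(1)=\lambda_x(1)=0$ in the original problem. This removes the auxiliary problem, the second normality argument and the case split on constant $u$. Your identity also holds on any $[0,T]$, so the rescaling remark is unnecessary.

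\textbf{What the paper's detour buys.} Mainly a reformulation: it shows $u(0)=u(1)$ as a consequence of the costate being normal to a circle of fixed sensitivity norm. The same computation therefore applies to normal extremals of energy minimization with a prescribed terminal sensitivity norm, not only to the penalized problem.

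Your closing remark is worth keeping. For $\gamma>0$, combining $u(1)=c$ with the constancy of $H$ gives $\sin(\theta_{des})\bigl(S_z(1)\sin(\theta_{des})-S_x(1)\cos(\theta_{des})\bigr)=0$. This yields (\ref{relazione}) whenever $\sin(\theta_{des})\neq 0$, without invoking Proposition \ref{simmetrie}.
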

\begin{proof}
Assume $u_{opt}$ is  an optimal control. The claim is obvious if $u_{opt}$ is constant. Assume that $u_{opt}$ is not constant. Then the  control $u_{opt}$  will  give a certain value $V^2$ of the final sensitivity $\|Z_1^1(1)\|^2$, that is, $S_z^2(1)+S_x^2(1)=V^2$ for some $V \not=0$.\footnote{Notice that $V$ has to be different from zero because if it was zero we would have from (\ref{sim1}),(\ref{sim2}) and (\ref{aZ}), (\ref{aX}) $u_{opt}$ constant which we have excluded.} .   Consider now a slightly different optimal control problem:  driving system (\ref{E1}), (\ref{E2}), (\ref{E3}) from $(0,0,0)$ to the submanifold in $R^3$ given by 
${\cal S}:=\left\{ ( \theta_{des}, S_z,S_x) \,  \in \, R^3 \, \| S_z^2+S_x^2=V^2 \right\}$, minimizing the energy $C_E$ in (\ref{energycost}).  The  control $u_{opt}$ is optimal for the latter problem as well, since a control which would give a strictly lower value for the energy functional $C_E$, while reaching the submanifold ${\cal S}$, would be optimal for the original problem (Problem $1-\gamma$) as well, thus contradicting the optimality of $u_{opt}$. The control $u_{opt}$ and corresponding trajectory  thus satisfies the necessary conditions of the Pontryagin Maximum Principle, in particular (\ref{CE1}), (\ref{CE2}) (\ref{CE3}) with $\gamma=0$. However the transversality conditions change. The costate variables $\lambda_z$ and $\lambda_x$ are not zero anymore, but they are constant and from the transversality conditions (transversality to the tangent to ${\cal S}$ (cf. (\ref{transversalitycond})) they satisfy the relation 
\begin{equation}\label{relatTr}
\lambda_x S_z(1)-\lambda_zS_x(1)=0.
\end{equation}
Moreover, from the maximization of the PMP Hamiltonian (\ref{Hhat}),  with $\gamma=0$ and $\mu_0=-1$,\footnote{Notice the argument for normality of Proposition \ref{DNEx} goes through in this case as well. If $\mu_0=0$, 
$\lambda_{\theta}\equiv 0$ in (\ref{Hhat}) and from (\ref{CE1}) $\lambda_z=\lambda_x=0$ which together contradict the maximum principle.}  the control is given by $u_{opt}=\lambda_{\theta}$ and satisfies (from (\ref{CE1}))
$
\dot u_{opt}=2 \lambda_z \sin(2\theta)-2 \lambda_x \cos(2\theta). 
$
Consider now the function $\tilde S:=\tilde S(t)=\lambda_x S_z(t)-\lambda_zS_x(t)$ which is such that $\tilde S(0)=\tilde S(1)=0$ from (\ref{relatTr}). Using (\ref{E2}) and (\ref{E3}),  we obtain 
$
2 \frac{d \tilde S}{dt}=\frac{d u_{opt}}{dt}. 
$ 
Integrating this relation in the interval $[0,1]$,  using $\tilde S(0)=\tilde S(1)$, we obtain (\ref{uinfin}).

\end{proof}

\subsubsection{ Symmetry of the optimal control and trajectory} 

%
%
Using (\ref{uinfin}) we can, in fact, extend the symmetry of the control to the whole interval $[0,1]$ as well as obtain similar symmetry properties on the whole interval for $\theta$, $S_z$ and $S_x$. 
\begin{proposition}\label{simmetrie} Assume $(\theta,u,S_z,S_x)$ are optimal control and trajectory. Then 
\begin{eqnarray}
&&{\theta}(t)=\theta_{des}-\theta(1-t) \label{sss1}  \\
&& {u}(t)= u(1-t) \nonumber \\
&&{S_z}(t)=\cos(2 \theta_{des})\left(S_z(1)-S_z(1-t) \right)+\sin(2 \theta_{des})\left(S_x(1)-S_x(1-t) \right) \label{sss3}  \\
&&{S_x}(t)=\sin(2 \theta_{des})\left(S_z(1)-S_z(1-t) \right)-\cos(2 \theta_{des})\left(S_x(1)-S_x(1-t) \right)  \label{sss4} 
\end{eqnarray}

\end{proposition}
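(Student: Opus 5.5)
The approach is to use uniqueness of solutions of the autonomous system (\ref{sim1})--(\ref{sim2}). Define the time-reversed candidate $\hat\theta(t):=\theta_{des}-\theta(1-t)$ and $\hat u(t):=u(1-t)$. The plan is to show that $(\hat\theta,\hat u)$ solves the same second-order equation $\ddot\theta=a_z\sin(2\theta)-a_x\cos(2\theta)$, with the same initial data $(0,c)$. Uniqueness then gives $\hat\theta\equiv\theta$ and $\hat u\equiv u$. Once these two relations hold, (\ref{sss3}) and (\ref{sss4}) follow by integrating (\ref{sim3})--(\ref{sim4}).

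\textbf{Initial data.} Clearly $\hat\theta(0)=\theta_{des}-\theta(1)=0$, $\dot{\hat\theta}(t)=\dot\theta(1-t)=u(1-t)=\hat u(t)$, and $\hat u(0)=u(1)=c$ by Proposition \ref{equainfin}.

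\textbf{The equation.} Differentiating gives $\dot{\hat u}(t)=-\dot u(1-t)=-a_z\sin(2\theta(1-t))+a_x\cos(2\theta(1-t))$. Substituting $\theta(1-t)=\theta_{des}-\hat\theta(t)$ and expanding with the angle-difference formulas, we need this to equal $a_z\sin(2\hat\theta)-a_x\cos(2\hat\theta)$ for all $t$. Matching the coefficients of $\sin 2\hat\theta$ and $\cos 2\hat\theta$ reduces this to a linear condition on $(a_z,a_x)$. Writing $(a_z,a_x)=\rho(\cos\phi,\sin\phi)$, the right-hand side of (\ref{sim2}) is $\rho\sin(2\theta-\phi)$, and the condition becomes $\sin(2\theta_{des}-2\hat\theta-\phi)=\sin(2\hat\theta-\phi)$ identically. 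This holds exactly when $2\theta_{des}-\phi\equiv\pi+\phi \pmod{2\pi}$, i.e.\ the vector $(a_z,a_x)$ is orthogonal to $(\cos 2\theta_{des},\sin 2\theta_{des})$ up to a sign convention.

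\textbf{Main obstacle: proving the orthogonality.} The orthogonality condition is a statement about the endpoint data, which by (\ref{aZ})--(\ref{aX}) is proportional to $(S_z(1),S_x(1))$. I would derive it from the constancy of $H$ in (\ref{CoM1}). Evaluating $H$ at $t=1$, where $u(1)=c$ and $\theta(1)=\theta_{des}$, and comparing with its value at $t=0$, gives
\[
a_z\cos(2\theta_{des})+a_x\sin(2\theta_{des})=a_z ,
\]
which is the needed relation $\rho\cos(2\theta_{des}-\phi)=\rho\cos\phi$. In the form above this yields $2\theta_{des}-\phi\equiv\pm\phi$. The branch $2\theta_{des}-\phi\equiv\phi$ must be handled separately. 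In that case I would try to show it forces $\rho=0$, or that the reflected solution still coincides with $\theta$, or else exclude it by comparing costs. I must recheck the signs carefully, because the symmetry ultimately needed is $\sin(2\theta_{des}-x-\phi)=-\sin(x-\phi)$ once the minus sign from time reversal is included. That redoes the case analysis, and it may be that the branch singled out by $H$ is exactly the right one. If $\rho=0$ then $u$ is constant and everything is trivial.

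\textbf{Sensitivities.} With $\theta(t)=\theta_{des}-\theta(1-t)$ established, write
\[
S_z(t)=\int_0^t\cos\bigl(2\theta_{des}-2\theta(1-s)\bigr)\,ds .
\]
Substitute $r=1-s$ and expand with the angle-difference formula. This produces $\cos(2\theta_{des})\int_{1-t}^1\cos 2\theta+\sin(2\theta_{des})\int_{1-t}^1\sin 2\theta$, which is (\ref{sss3}). The same computation for $S_x$ gives (\ref{sss4}).
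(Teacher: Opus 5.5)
Your route is the paper's route. You show that the reflected pair $(\hat\theta,\hat u)$ solves the same initial value problem, invoke uniqueness, and then integrate (\ref{sim3})--(\ref{sim4}); your initial-data check and the $r=1-s$ computation for the sensitivities are fine. You also correctly spot what the paper's one-line ``direct calculation'' passes over: (\ref{sim2}) is reflection-invariant only under a constraint on $(a_z,a_x)$. The paper records that constraint only afterwards, as Corollary~\ref{relSxSz}, and derives it from the proposition itself. However, you never pin the constraint down, and the version you write is wrong. Keeping the minus sign from time reversal, you need
\[
-a_z\sin(2\theta_{des}-2\hat\theta)+a_x\cos(2\theta_{des}-2\hat\theta)=a_z\sin 2\hat\theta-a_x\cos 2\hat\theta .
\]
Set $D:=a_x\cos\theta_{des}-a_z\sin\theta_{des}$. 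Matching the coefficients of $\sin 2\hat\theta$ and $\cos 2\hat\theta$ gives $2\sin\theta_{des}\,D=0$ and $2\cos\theta_{des}\,D=0$, i.e.\ $D=0$. Equivalently, $(a_z,a_x)$ is parallel to $(\cos\theta_{des},\sin\theta_{des})$ (in your notation $\phi\equiv\theta_{des}$ mod $\pi$), which is exactly (\ref{relazione}). It is not your $2\theta_{des}-\phi\equiv\pi+\phi$, which is an artifact of the dropped sign. Nor is it orthogonality to $(\cos2\theta_{des},\sin2\theta_{des})$.

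With the right target, your $H$ idea closes the gap, but your branch bookkeeping is reversed. $H(0)=H(1)$ together with $u(1)=c$ (Proposition~\ref{equainfin}) gives $a_z(\cos2\theta_{des}-1)+a_x\sin2\theta_{des}=2\sin\theta_{des}\,D=0$. In polar form, the branch $2\theta_{des}-\phi\equiv\phi$ that you set aside is precisely the one you need. The branch that needs separate treatment is the other one, $2\theta_{des}\equiv0$, i.e.\ $\theta_{des}\in\pi\mathbb{Z}$, where $H$ carries no information. So your argument is complete when $\sin\theta_{des}\neq0$, in particular for the NOT gate, which is the only case used later.

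For $\theta_{des}=k\pi$ with $k\neq0$, (\ref{crescita1}) shows the optimal cost is $\theta_{des}^2/2$. Since $C_E\ge\frac12\bigl(\int_0^1u\,dt\bigr)^2$ with equality only for constant $u$, every optimal control is constant with zero sensitivity, so $a_z=a_x=0$.

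For $\theta_{des}=0$, none of your fallbacks can work, because the statement itself fails there. Extend $u$ periodically. The shift $u\mapsto u(\cdot+s)$ preserves $\theta(1)=0$ and the energy, and it multiplies $S_z(1)+iS_x(1)=\int_0^1e^{2i\theta}\,dt$ by the phase $e^{-2i\theta(s)}$. Hence it maps optimal controls to optimal controls. For $\gamma>\pi^2$ the control $u\equiv0$ is not optimal: along $\theta=\epsilon\sin2\pi t$ the cost changes by $\epsilon^2(\pi^2-\gamma)+O(\epsilon^3)$. A nonconstant $u$ cannot have all its cyclic shifts satisfy $u(t)=u(1-t)$.

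So the missing step is this: correct the sign, derive $D=0$ from $H$ when $\sin\theta_{des}\neq0$, handle $\theta_{des}=k\pi\neq0$ by the cost bound, and exclude $\theta_{des}=0$.
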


\begin{proof}  Assuming that the left hand sides of these equality satisfy equations (\ref{sim1})-(\ref{sim4}), direct calculation shows that so do the right hand sides. Furthermore the initial conditions coincide. Therefore the functions on the left hand sides and right hand sides coincide, giving the above equalities.

\end{proof}

\begin{corollary}\label{relSxSz}
\begin{equation}\label{relazione}
S_z(1) \sin(\theta_{des})=  \cos(\theta_{des})S_x(1).
\end{equation}
\end{corollary}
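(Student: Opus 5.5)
Evaluating the symmetry relations of Proposition \ref{simmetrie} at the final time $t=1$ gives the claim directly. Since $\theta(0)=0$, $S_z(0)=S_x(0)=0$, and $1-t=0$ when $t=1$, the differences $S_z(1)-S_z(0)$ and $S_x(1)-S_x(0)$ are just $S_z(1)$ and $S_x(1)$. Writing $c_2:=\cos(2\theta_{des})$ and $s_2:=\sin(2\theta_{des})$, relations (\ref{sss3}) and (\ref{sss4}) at $t=1$ read
\begin{equation*}
S_z(1)=c_2 S_z(1)+s_2 S_x(1), \qquad S_x(1)=s_2 S_z(1)-c_2 S_x(1).
\end{equation*}
In other words, the vector $(S_z(1),S_x(1))^T$ is a fixed point of the reflection matrix $R:=\begin{pmatrix} c_2 & s_2 \cr s_2 & -c_2\end{pmatrix}$.

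Next I use the half-angle identities $1-c_2=2\sin^2(\theta_{des})$ and $s_2=2\sin(\theta_{des})\cos(\theta_{des})$. The first equation becomes
\[
2\sin^2(\theta_{des})S_z(1)=2\sin(\theta_{des})\cos(\theta_{des})S_x(1),
\]
that is, $\sin(\theta_{des})\bigl(\sin(\theta_{des})S_z(1)-\cos(\theta_{des})S_x(1)\bigr)=0$. The second equation, using $1+c_2=2\cos^2(\theta_{des})$, becomes
\[
\cos(\theta_{des})\bigl(\sin(\theta_{des})S_z(1)-\cos(\theta_{des})S_x(1)\bigr)=0.
\]

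The only care needed is the degenerate case where $\sin(\theta_{des})=0$, so that the first equation carries no information. Since $\sin$ and $\cos$ of $\theta_{des}$ never vanish simultaneously, at least one of the two equations can be divided by its nonzero prefactor. Either way this yields $S_z(1)\sin(\theta_{des})=\cos(\theta_{des})S_x(1)$, which is (\ref{relazione}).

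Geometrically, the final sensitivity vector lies on the fixed line of $R$, namely the line spanned by $(\cos\theta_{des},\sin\theta_{des})$. I expect no real obstacle: the argument is purely algebraic once Proposition \ref{simmetrie} is available.
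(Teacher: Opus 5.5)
Your proposal is correct and is the paper's own argument: evaluate (\ref{sss3}) and (\ref{sss4}) at $t=1$ and note that the resulting pair of equations is equivalent to (\ref{relazione}). You additionally spell out the half-angle algebra and the observation that $\sin(\theta_{des})$ and $\cos(\theta_{des})$ cannot both vanish, which the paper leaves to the reader when it says the two relations are equivalent to (\ref{relazione}).
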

\begin{proof}
By computing (\ref{sss3}) and (\ref{sss4}) at $t=1$ we have:
$
{S_z}(1)=\cos(2 \theta_{des})S_z(1)+\sin(2 \theta_{des})S_x(1)  
$, and 
$
{S_x}(1)=\sin(2 \theta_{des}) S_z(1)-\cos(2 \theta_{des}) S_x(1).
$
These two relations are equivalent to (\ref{relazione}). 
\end{proof}

\subsection{Alternative expression for the optimal cost}

Using the constant of motion (equation (\ref{CoM1})), we can  express the cost functional  $C= C_E+\gamma C_S^1$ in (\ref{Jgamma}) as a function of $c$, $S_x(1)$ and $S_z(1)$, therefore eliminating the integral.  Furthermore,   using (\ref{relazione}) we can   express $C=C(\gamma)$ as a function of $c$ and of only one component of the sensitivity vector $(S_x(1),S_z(1))$.

Let us first  rewrite  (\ref{CoM1}), with (\ref{aZ}), (\ref{aX}), as:
\[
u^2(t)=c^2-2\gamma S_z(1)\left(1-\cos(2\theta(t))\right) +2\gamma S_x(1) \sin(2\theta(t)), \qquad t \in [0,1]. 
\]
Thus
\[
\int_0^1 u^2(t) dt= c^2-2\gamma S_z(1) +2\gamma \left( S_z^2(1)+S_x^2(1) \right).
\]
This implies,  using (\ref{sim3}) and (\ref{sim4}), 
 \[
 C(\gamma)=
  \frac{1}{2}\int_0^1 u^2(t) dt +
 \frac{\gamma}{2} \left(S_x^2(1)+S_z^2(1)\right)= 
\frac{1}{2}\left(c^2-2\gamma S_z(1)\right) +\frac{3}{2} \gamma  \left(S_x^2(1)+S_z^2(1)\right) 
 \]
 Now we use (\ref{relazione}), to conclude that:
\begin{eqnarray}
&& C(\gamma)= \frac{1}{2} c^2 +\frac{3}{2}\gamma  S_x^2(1) \quad  {\text{     if }} \cos(\theta_{des})=0  \label{costosipi2} \\
&& C(\gamma)= \frac{1}{2} c^2 -\gamma S_z(1)+ \frac{3}{2}\gamma  S_z^2(1)\left(\frac{1}{\cos^2(\theta_{des})}  \right)  \quad  {\text{ if }} \cos(\theta_{des})\neq 0  \label{costonopi2} 
\end{eqnarray}

 \subsection{Solution of the optimal control problem and  elliptic integrals}\label{SOCP}
 
 We shall explicitly solve the optimal control Problem $1-\gamma$ and $1$  for a given desired final condition (a NOT gate) and any $\gamma$ in the following section. However, we conclude this general section with some comments about the solution of the optimal control Problem $1-\gamma$ for general final condition.    
 
 With the above structure of the solution of the optimal control Problem $1-\gamma$  entails finding two parameters $c$ and $S_z(1)$ or $S_x(1)$ (cf. (\ref{relazione}))  so that 
 \begin{enumerate}
 \item The solution of (\ref{sim1})-(\ref{sim4}), (\ref{aZ}), (\ref{aX}) goes in time $[0,1]$ from $(0,c,0,0)$ to $(\theta_{des}, c, S_{z}(1), S_x(1))$. Notice in fact that we can consider only the first three equations and imposing a condition only on the first three variables since $S_z(1)$ and $S_x(1)$ are automatically related via (\ref{relazione}). 
 
 \item Among the possible parameters choose the one(s) which gives the lowest cost calculated with (\ref{costosipi2}) (\ref{costonopi2}).   
 
 \end{enumerate}
 
 In the problem for the evolution operator of quantum mechanical system,  $\theta_{des}$ is  defined up to multiples of $2\pi$ and although it can always be taken positive without loss of generality (cf. Proposition (\ref{consequence1})), one should in principle carry out  the above search for any additional multiple of $2\pi$. It is however often possible to show a priori that the minimum cost is obtained with the smallest $\theta_{des}$. In order to do that one only needs to find a control that drives to the smallest $\theta_{des}$ with sensitivity zero. This gives a uniform (valid for any $\gamma$) upper bound for $C^{opt}(\gamma)$. Call it $B(\theta_{des})$. If in addition $B(\theta_{des}) \leq \frac{(\theta_{des}+2\pi)^2}{2T}$,  we can use twice (\ref{crescita1}) to show that $C^{opt}(\gamma)$ for  $\theta_{des}$ is not larger than $C^{opt}(\gamma)$ for   $\theta_{des}+2k\pi$ for every $k>0$, and therefore, we can reduce ourselves to the smallest $\theta_{des}$, that is $\theta_{des}\in (0,2\pi)$. In fact, in these cases, we can reduce ourselves to $\theta_{des} \in [0,\pi]$ because if $\theta_{des} > \pi$ Proposition \ref{consequence1}   tells us that the optimal control is the opposite as the optimal control to reach $2\pi-\theta_{des}$. Notice also that the value $\theta_{des}=\pi$ is very special as the optimal control coincides with the minimum energy cost and zero the sensitivity, that is, the solution of Problem $1-\gamma$ coincides with the solution of Problem 1, for any $\gamma$.   We shall use  this bounding technique at the beginning of next section and therefore we record  it in a proposition. 
 
 \begin{proposition}\label{bounding}
 Assume $C^{opt}(\gamma) \leq B(\theta_{des})$ for some upper bound $B(\theta_{des})$ and for any $\gamma$. Assume  $B(\theta_{des}) \leq \frac{(\theta_{des}+2\pi)^2}{2T}$. Then the optimal among all $\theta_{des}+2k\pi$ is achieved for $k=0$. Furthermore, we can assume without loss of generality $\theta_{des}\in [0,\pi]$. 
 \end{proposition}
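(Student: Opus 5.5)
The plan is to compare optimal costs across the different lifts $\theta_{des}+2k\pi$ of the same rotation using the two-sided bound (\ref{crescita1}). Denote by $C^{opt}_k(\gamma)$ the optimal cost of Problem $1-\gamma$ with target $\theta_{des}+2k\pi$.

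First I take $k\geq 1$. Since $\theta_{des}\geq 0$, we have $(\theta_{des}+2k\pi)^2\geq(\theta_{des}+2\pi)^2$. The lower bound in (\ref{crescita1}) holds for every $\gamma$, because the optimal cost is nondecreasing in $\gamma$ (subsection \ref{monot}) and equals $\frac{\theta^2}{2T}$ at $\gamma=0$. Applied to target $\theta_{des}+2k\pi$, it gives
$$C^{opt}_k(\gamma)\geq \frac{(\theta_{des}+2k\pi)^2}{2T}\geq \frac{(\theta_{des}+2\pi)^2}{2T}\geq B(\theta_{des})\geq C^{opt}_0(\gamma).$$
So no $k\geq1$ beats $k=0$.

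For negative $k$, I note that $\theta_{des}+2k\pi$ is negative, up to the normalization $\theta_{des}\in[0,2\pi)$. Proposition \ref{consequence1} says its optimal cost equals that of $|\theta_{des}+2k\pi|=2|k|\pi-\theta_{des}$, with the opposite control. These targets are the nonnegative lifts of $2\pi-\theta_{des}$. The same argument then applies, with the bound written for the smallest representative of the relevant residue. This bookkeeping of signs and residues is the only delicate point. In the statement, "smallest $\theta_{des}$" should be read as the representative in $[0,2\pi)$, with the hypothesis on $B$ imposed on that representative.

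For the reduction to $[0,\pi]$, suppose $\theta_{des}\in(\pi,2\pi)$. Then $\theta_{des}-2\pi=-(2\pi-\theta_{des})$ represents the same rotation modulo $2\pi$, and $2\pi-\theta_{des}\in(0,\pi)$. By Proposition \ref{consequence1}, the problem for $\theta_{des}-2\pi$ has the same optimal cost as the problem for $2\pi-\theta_{des}$, and its optimal control is the negative of the latter. So one solves the problem for $2\pi-\theta_{des}\in[0,\pi]$ and flips the sign of the control. The endpoints $0$ and $\pi$ are trivial, since there the constant control already zeros the sensitivity.
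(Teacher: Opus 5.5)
Your chain $C^{opt}_k(\gamma)\ge\frac{(\theta_{des}+2k\pi)^2}{2T}\ge\frac{(\theta_{des}+2\pi)^2}{2T}\ge B(\theta_{des})\ge C^{opt}_0(\gamma)$ for $k\ge 1$ is exactly the paper's argument, and so is your passage from $\theta_{des}\in(\pi,2\pi)$ to $2\pi-\theta_{des}$ via Proposition~\ref{consequence1}. Note that the paper only claims the comparison ``for every $k>0$''.

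The genuine gap is your paragraph on negative $k$. You rewrite the lifts $\theta_{des}+2k\pi$, $k\le -1$, as negatives of the lifts $(2\pi-\theta_{des})+2j\pi$, $j\ge 0$, and then rerun the argument. That needs a bound $B(2\pi-\theta_{des})\le\frac{(4\pi-\theta_{des})^2}{2T}$, which is not among the hypotheses. Even granting it, it only shows that $k=-1$ is best among the negative lifts. It never compares $k=-1$ with $k=0$, i.e.\ the optimal cost for $2\pi-\theta_{des}$ with that for $\theta_{des}$. The only available lower bound for $k=-1$ is $\frac{(2\pi-\theta_{des})^2}{2T}$, and the hypothesis $B(\theta_{des})\le\frac{(\theta_{des}+2\pi)^2}{2T}$ does not dominate it.

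That comparison is in fact false in general. For $\theta_{des}\in(\pi,2\pi)$ at $\gamma=0$, the lift $\theta_{des}-2\pi$ costs $\frac{(2\pi-\theta_{des})^2}{2T}<\frac{\theta_{des}^2}{2T}$. The hypotheses can hold there: for $\theta_{des}=\frac{3\pi}{2}$, $T=1$, the control of magnitude $\frac{19\pi}{6}$ with signs $+,-,+$ switching at $t=\frac{7}{19}$ and $t=\frac{12}{19}$ reaches $\frac{3\pi}{2}$ with zero sensitivity and energy $\frac12(\frac{19\pi}{6})^2<\frac12(\frac{7\pi}{2})^2$. Reinterpreting ``smallest $\theta_{des}$'' as the representative in $[0,2\pi)$ does not fix this.

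So the first claim must be read with $k\ge 0$, as in the paper. The sign issue is handled only by the without-loss-of-generality reduction to $[0,\pi]$. That reduction swaps the residue class of $\theta_{des}$ for that of $2\pi-\theta_{des}$, but says nothing about $k=0$ versus $k=-1$ within a class. If you want that comparison for $\theta_{des}\in[0,\pi]$, you need an extra hypothesis such as $B(\theta_{des})\le\frac{(2\pi-\theta_{des})^2}{2T}$. For the NOT gate, the paper's $B=\frac12(\frac{13\pi}{6})^2$ violates this; only the later bound $U\approx 4.609$ settles it.

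Finally, your side remark that the endpoint $0$ is trivial is wrong. There the constant control is $u\equiv 0$, which gives $S_z(T)=T\neq 0$; only $\theta_{des}=\pi$ has the property that the constant control zeros the sensitivity.
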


 Once $\theta_{des}$ is fixed, numerical integration of (\ref{sim1})-(\ref{aX}) for {\it every value} of $c$ and $S_z(1)$ and $S_x(1)$ connected via (\ref{relazione}) is obviously impossible.  It can be numerically implemented with good approximation   if these parameters are restricted to a {\it compact} set in $R^3$. In order to do this one can use the expression of the cost $C(\gamma)$ in (\ref{costosipi2}), (\ref{costonopi2}) and the bounds (\ref{crescita1}), together with the fact that from (\ref{E2}) (\ref{E3}) $|S_x(1)|\leq 1$ and $|S_z(1)|\leq 1$. The bounds (\ref{crescita1}) are obtained from our knowledge of the case $\gamma=0$. A more sophisticated approach, is to use the general bounds (\ref{prima}) and (\ref{crescita}) incrementally in $\gamma$ starting from $\gamma=0$ and then considering higher values of $\gamma$. In particular, assume we are interested in a fixed value of $\gamma$, say $\bar \gamma$. Then starting with $\gamma=0$ we look for the optimal cost $C^{opt}{\hat \gamma_1}$ for a small $\hat \gamma_1$ using the bounds from (\ref{prima}) and (\ref{crescita}) with $\gamma_1=0$ and $\gamma_2=\hat \gamma_1$. Then we use the knowledge for the $\hat \gamma_1$ case to calculate the optimal control for $\gamma=\hat \gamma_2 > \hat \gamma_1$ using again the bounds from (\ref{prima}) and (\ref{crescita}) for $\gamma_1=\hat \gamma_1$ and $\gamma_2=\hat \gamma_2$, and so on. We proceed this way until we reach the value $\bar \gamma$ for $\gamma$. Notice that at each step the range of values for $c$ and $S_x(1)$, $S_z(1)$ is relatively small as the gap in the values of $\gamma$ is small.

 The solution of the optimal control problem involves, in any case,  repeated integration of the system (\ref{sim1})-(\ref{aX}) with various initial conditions and parameters. This involves {\it integrals of the elliptic type} (see, e.g., \cite{Ellbook}), as we shall now discuss.

Using  equation (\ref{relazione}), we can rewrite the constant of motion (equation (\ref{CoM1})) as a function of $c$ and only one of the two parameters $S_z(1)$ or $S_x(1)$. 
 \begin{enumerate}
 \item
 If $\cos(\theta_{des})=0$, then $S_z(1)=0$ and  we have for $t \in [0,1]$:
 \begin{equation}\label{const1}
 u^2(t)=c^2+2\gamma S_x(1) \sin(2\theta(t)).
\end{equation}
 \item
  If $\cos(\theta_{des})\neq0$, then $S_x(1)=\frac{\sin(\theta_{des})}{\cos(\theta_{des})}S_z(1)$, and  we have  for $t \in [0,1]$:
   \begin{equation}\label{const2}
 u^2(t)=c^2-2\gamma S_z(1) +2\gamma S_z(1) \frac{\cos(2\theta(t)-\theta_{des})}{\cos(\theta_{des})}
\end{equation}
  \end{enumerate}
Using these equations, we can  integrate equation (\ref{sim1}) in the intervals where $u(t)\neq 0$ using  separation of variables. For example, assume that we are in the second case (equation (\ref{const2})). 
Then we have:
\[
\dot{\theta}(s)= \pm \sqrt{c^2-2\gamma S_z(1) +2\gamma S_z(1) \frac{\cos(2\theta(s)-\theta_{des})}{\cos(\theta_{des})}},
\]
which can be integrated by separation of variables in an interval $[t_1,t_2]$ as  
\[
\scalebox{1.4}{$\displaystyle\int_{\theta(t_1)}^{\theta(t_2)}$}\frac{1}{\pm \sqrt{c^2-2\gamma S_z(1) +2\gamma S_z(1) \left(\cos(\theta_{des})\right)^{-1} \cos(2x-\theta_{des})}}dx=t_2-t_1
\]
These are integrals of the elliptic type. The sums of these integrals on the various intervals has to be consistent with the final condition to be $\theta(1)=\theta_{des}$.\footnote{Notice that $u$ cannot be zero on an interval of positive measure, because since $u$ is an analytic function this would mean that $u$ is identically zero, which is impossible.} Similar integrals are obtained by integrating (\ref{sim3}) or (\ref{sim4}) and using again the expression of $\dot \theta$ and $u$ from (\ref{const1}) and (\ref{const2}) on the various intervals where $u$ is positive or negative. Even here, the computation of these integrals on the full interval $[0,1]$ should be consistent with the final condition $S_x(1)$ or $S_z(1)$ that appear in the equations (\ref{sim2}), (\ref{aZ}) (\ref{aX}).  Once one has found a value of $u(0)=c$ (within the bounds discussed above) together with an alternation of signs for $u$ which are consistent with the desired final conditions, one has an extremal. Comparison of the associated costs gives the optimal control.

Many extremals are possible because the control can in principle switch signs many times and can start positive, negative or zero. In order to organize the search we found it useful to implement the following approach: We describe the simplest possible extremal, that is, the one that has the minimum (zero or two) number of switches.\footnote{Recall that from Proposition \ref{equainfin} the control has to necessarily change sign an even number of times.} Then we prove that such extremal is  optimal by showing that any other extremal will give a larger cost. We implement this approach in the next section where we we will solve Problem $1-\gamma$ for any $\gamma\geq 0$ for a NOT gate, thus obtaining also  the solution of Problem $1$ at the limit as $\gamma \rightarrow \infty$.

\section{The Explicit Optimal  Robust Control for a NOT Gate}\label{NOT}

We shall now use the theory developed in the previous sections to solve Problem $1-\gamma$ for a NOT gate, for any $\gamma$. In this case, from (\ref{nominals}) $\theta_{des}=\pm \frac{\pi}{2}+2k\pi$.  We know from Proposition \ref{consequence1} that the minimum cost to reach $\bar{\theta}$ is equal to the one to reach $-\bar{\theta}$, so we may assume that we need to reach 
${\theta_{des}}=\frac{\pi}{2} +2k\pi$, with $k$ non negative integer. In fact, we can construct a (non-optimal) control which gives zero sensitivity. This is given by $u(t) \equiv \frac{13 \pi}{6}$ for $t \in[ 0, \frac{1}{13}]$, 
$u(t) \equiv  -\frac{13 \pi}{6}$ for $t \in (\frac{1}{13},  \frac{6}{13}]$, $u(t) \equiv  \frac{13 \pi}{6}$ for $t \in (\frac{6}{13}, 1]$. Direct integration of equations (\ref{E1}), (\ref{E2}), (\ref{E3}) shows that this control gives 
$\theta(1)=\frac{\pi}{2}$ and $S_x(1)=S_z(1)=0$, and therefore zero sensitivity. The associated (energy) cost $C_E=\frac{1}{2} \left( \frac{13 \pi }{6}\right)^2:=B$. The value $B$ gives a useful uniform upper bound for the cost $C(\gamma)$ for any $\gamma$. In particular, since $B< \frac{(\frac{\pi}{2}+2\pi)^2}{2}$ we can apply Proposition \ref{bounding} to conclude that the optimal is obtained for $\theta_{des}=\frac{\pi}{2}$, and therefore we shall assume this to be the case in the following.

For $\theta_{des}=\frac{\pi}{2}$ equation  (\ref{relazione}) says that $S_z(1)=0$. Let us rewrite the state and costate equations for our case, and use the notation $S:=S_x(1)$. We have  

\begin{eqnarray}
&&\dot \theta = u,  \quad \theta(0)=0 \label{sim1NOT}  \\
&&\dot u= 2\gamma S \cos(2\theta), \quad u(0)=c \label{sim2NOT}\\
&&\dot S_x=\sin(2\theta),  \quad S_x(0)=0.  \label{sim4NOT} 
\end{eqnarray}
Our base case is the case $\gamma=0$ for which we have constant control $u\equiv \frac{\pi}{2}=c$. Another important case will be the case of $\gamma=\gamma_c$, where $\gamma_c$ is defined in terms of two basic integrals as follows (cf. (\ref{IandJ} below)): 

\begin{equation}\label{basicinte}
I(0):= \int_0^{\frac{\pi}{2}} \frac{1}{\sqrt{\sin(x)}}dx=\frac{\sqrt{\pi}}{2} \frac{\Gamma(\frac{1}{4})}{\Gamma(\frac{3}{4})}, \qquad J(0):=\int_0^{\frac{\pi}{2}}\sqrt{\sin(x)}dx=2\sqrt{\pi} \frac{\Gamma(\frac{3}{4})}{\Gamma(\frac{1}{4})}, 
\end{equation}
where $\Gamma$ denotes the $\Gamma$ function \cite{Whit}. This also gives $I(0)J(0)=\pi$. The value $\gamma_c$ is defined as 
\begin{equation}\label{gammacriticodef}
 \gamma_{c}=\frac{I^3(0)}{2J(0)}\approx 7.52. 
\end{equation}
As we shall see, the value $\gamma_c$ is the value of $\gamma$ after which the optimal control switches sign. In the next subsection we describe an extremal which we call the `simplest extremal' because it requires the minimum number of sign switches. Then, in subsection \ref{simopt} we will show that this extremal is in fact the optimal.

\subsection{The simplest extremals}\label{simplest}

\subsubsection{$\gamma \in [0,\gamma_c]$}

{

For $\gamma \in [0,\gamma_c]$, we look for an extremal where $u$ is always positive. With   $u(s)>0$,  $\theta(s)$ is increasing, so $0\leq \theta(s)\leq \frac{\pi}{2}$. 
From (\ref{sim1NOT}),   and using the constant of motion (\ref{CoM1}) which gives $\dot{\theta}=\sqrt{ c^2+2\gamma S \sin(2\theta(s))}$ we have:
\[
\int_0^1 \frac{\dot \theta(s)}{ \sqrt{ c^2+2\gamma S \sin(2\theta(s))}} ds= 1,
\]
Letting $x=2\theta(s)$ and using  the final condition $\theta(1)=\frac{\pi}{2}$ we have:
\begin{equation}\label{sim5NOT}
\frac{1}{2}\int_0^{\pi} \frac{1}{ \sqrt{ c^2+2\gamma S\sin(x)}} dx=\int_0^{\frac{\pi}{2}} \frac{1}{ \sqrt{ c^2+2\gamma S \sin(x)}} dx= 1.
\end{equation}
Since  it must also hold
$\int_0^1\sin(2\theta(s))ds=S$, using the same change of variables $x=2\theta(s)$, we get:
\begin{equation}\label{sim6NOT}
\int_0^{\frac{\pi}{2}} \frac{\sin(x)}{ \sqrt{ c^2+2\gamma S \sin(x)}} dx= S.
\end{equation}

}

Define now  $a:=2\gamma S$,\footnote{Notice that $S$ is always strictly positive, because $S:=S_x(1)$ is positive in the case $\gamma=0$ and by continuity if $S$ was negative for some $\gamma$ there should be a $\gamma$ such that $S=0$. However this is not compatible with equations (\ref{sim1NOT})-(\ref{sim4NOT}) with the requirement $S_x(1)=S$.}   and  $k:=\frac{c^2}{a}$  along with the functions of $k$, 
\begin{equation}\label{IandJ}
I(k)=\int_0^{\frac{\pi}{2}} \frac{1}{ \sqrt{ k+ \sin(x)}} dx, \qquad J(k)=\int_0^{\frac{\pi}{2}} \frac{\sin(x)}{ \sqrt{ k+ \sin(x)}} dx
.\end{equation}
 Equations (\ref{sim5NOT}) (\ref{sim6NOT}) are verified if and only if $I(k)=a^{\frac{1}{2}}$ and $J(k)=\frac{a^{\frac{3}{2}}}{2\gamma}$, or equivalently 
\begin{equation}\label{equiV5}
I(k)=a^{\frac{1}{2}}, \qquad 2\gamma=\frac{I^3(k)}{J(k)}, 
\end{equation}
\begin{lemma}\label{tty}
The function on the right hand side of the second one in (\ref{equiV5}) is equal to $2 \gamma_c$ for $k=0$ and it is decreasing with limit,  when $k \rightarrow +\infty$, equal to zero. 
\end{lemma}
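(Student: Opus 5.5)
The plan is to treat $F(k):=I^3(k)/J(k)$ on $k\ge 0$ and check the three claims separately: the value at $k=0$, the limit at infinity, and monotonicity.

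The value at $0$ is immediate from the definitions. $I(0)$ and $J(0)$ are the integrals in (\ref{basicinte}), so $F(0)=I^3(0)/J(0)=2\gamma_c$ by (\ref{gammacriticodef}). Continuity of $F$ at $k=0^+$ follows from dominated convergence. For $I$ the dominating function is $1/\sqrt{\sin x}$, which is integrable on $[0,\pi/2]$. For $J$ we use $\sin x/\sqrt{k+\sin x}\le \sqrt{\sin x}$.

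For the limit, pull out $k^{-1/2}$. Uniformly on $[0,\pi/2]$ we have $\sqrt{k}/\sqrt{k+\sin x}\to 1$, so $\sqrt{k}\,I(k)\to \pi/2$ and $\sqrt{k}\,J(k)\to \int_0^{\pi/2}\sin x\,dx=1$. Hence
$$F(k)=\frac{k^{-3/2}\,(\sqrt k I)^3}{k^{-1/2}\,\sqrt k J}\sim \frac{(\pi/2)^3}{k}\to 0 .$$

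The main step is to show $F$ is strictly decreasing for $k>0$. We only work with $k>0$, since the differentiated integrands blow up at $x=0$ when $k=0$. There, differentiation under the integral is legitimate. Set
$$A:=\int_0^{\pi/2}(k+\sin x)^{-3/2}dx,\qquad B:=\int_0^{\pi/2}\sin x\,(k+\sin x)^{-3/2}dx,\qquad K:=\int_0^{\pi/2}(k+\sin x)^{1/2}dx .$$
Then $I'=-A/2$ and $J'=-B/2$, so
$$\frac{d}{dk}\log F=\frac{3I'}{I}-\frac{J'}{J},$$
and $F'<0$ is equivalent to $3AJ>BI$.

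The key observation is the pair of linear identities obtained by writing $\sin x=(k+\sin x)-k$:
$$J=K-kI,\qquad B=I-kA .$$
Substituting these, the target inequality becomes $3AK-3kAI>I^2-kAI$, that is,
$$3AK>I^2+2kAI .$$
Cauchy--Schwarz applied to $(k+\sin x)^{-1/2}=(k+\sin x)^{-3/4}(k+\sin x)^{1/4}$ gives $I^2\le AK$. Moreover $K-kI=J>0$, so $2AK>2kAI$. Adding these two inequalities yields the target, strictly.

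Finally, strict decrease on $(0,\infty)$ together with continuity at $0$ gives that $F$ is decreasing on $[0,\infty)$. The only delicate point I anticipate is finding the right rewriting of $3AJ>BI$. A direct Chebyshev-type comparison of $B/A$ and $J/I$ only yields the weaker inequality without the factor $3$. The identities $J=K-kI$ and $B=I-kA$ are what reduce the problem to Cauchy--Schwarz.
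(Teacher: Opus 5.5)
Your proof is correct and follows essentially the paper's argument: both reduce $F'<0$ to $3I'J<J'I$ and close it with a single Cauchy--Schwarz step. In fact your $I^2\le AK$ is exactly the paper's $\bigl(\int w\sin x\bigr)^2\le\bigl(\int w\sin^2 x\bigr)\bigl(\int w\bigr)$, with $w=(k+\sin x)^{-3/2}$, once you expand $I=k\int w+\int w\sin x$ and $K=k^2\int w+2k\int w\sin x+\int w\sin^2 x$, and your leftover term $2AJ>0$ is the paper's remaining negative term. Your limit argument is the asymptotic form of the paper's bounds $I\le\frac{\pi}{2}k^{-1/2}$, $J\ge(k+1)^{-1/2}$, and your explicit treatment of continuity at $k=0$ (where $A$ diverges) fills in a point the paper leaves implicit.
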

\begin{proof}
The fact that the function of $k$, $\frac{I^3}{J}$ is equal to $\gamma_c$ in (\ref{gammacriticodef}) when $k=0$ follows from the definition of $\gamma_c$ in (\ref{gammacriticodef}). By taking the derivative with respect to $k$ of the right hand side of  (\ref{equiV5}), we see, using the definitions (\ref{IandJ}),  that this is negative, if and only if $2(3I^{'}J-J^{'}I)$ is negative.\footnote{Here and in the following  $^{'}$ denotes derivative with respect to $k$.} 
Define  $w(x)=(k+\sin(x))^{-\frac{3}{2}}$, notice that:
\begin{equation}\label{IJ-primo}
\begin{split}
&I'(k)= -\frac{1}{2} \int_0^{\frac{\pi}{2}} w(x) dx, \qquad \ \ \  \  \ I(k)=\int_0^{\frac{\pi}{2}}k w(x) dx + \int_0^{\frac{\pi}{2}}\sin(x) w(x) dx, \\
&J'(k)= -\frac{1}{2} \int_0^{\frac{\pi}{2}} \sin(x) w(x) dx, \qquad J(k)= \int_0^{\frac{\pi}{2}} k\sin(x)w(x) dx + \int_0^{\frac{\pi}{2}}\sin^2(x) w(x) dx.
\end{split}
\end{equation}
Using these equalities, and  rearranging the terms, we obtain;
\[
\begin{split}
2(3I^{'}J-J^{'}I) &= -2k \left(\int_0^{\frac{\pi}{2}} w(x) \sin(x) dx\right)\left( \int_0^{\frac{\pi}{2}}w(x)dx\right) \\ &-3\left(  \int_0^{\frac{\pi}{2}}w(x) \sin^2(x) dx\right) \left(\int_0^{\frac{\pi}{2}}w(x)dx\right)+
\left[ \int_0^{\frac{\pi}{2}} w(x)\sin(x) dx \right]^2
\end{split}
\]
Cauchy-Schwartz inequality gives $\left[ \int_0^{\frac{\pi}{2}} w(x)\sin(x) dx \right]^2 \leq \left(\int_0^{\frac{\pi}{2}} w(x) \sin^2(x)dx \right)\left(\int_0^{\frac{\pi}{2}}w(x)dx\right)$, and therefore
\[
\begin{split}
2(3I^{'}J&-J^{'}I)\leq \\ &2k \left(\int_0^{\frac{\pi}{2}} w(x) \sin(x) dx\right)\left( \int_0^{\frac{\pi}{2}}w(x)dx\right)-2 \left(  \int_0^{\frac{\pi}{2}}w(x) \sin^2(x) dx\right) \left(\int_0^{\frac{\pi}{2}}w(x)dx\right) <0
\end{split}
\]
Now we prove that the limit, when $k \rightarrow +\infty$, is zero. Notice that for $x\in [0,\frac{\pi}{2}]$, we have $k\leq k+\sin(x)\leq k+1$ thus:
\[
I(k)\leq \int_0^{\frac{\pi}{2}} \frac{1}{\sqrt{k}} dx= \frac{\pi}{2} \frac{1}{\sqrt{k}},  \ \ \ J(k)\geq  \int_0^{\frac{\pi}{2}} \frac{\sin(x)}{\sqrt{k+1}} dx= 
\frac{1}{\sqrt{k+1}}
\]
Therefore:
\[
0\leq \lim_{k\to +\infty} \frac{I^3(k)}{J(k)}\leq \lim_{k\to +\infty}  \frac{\pi^3}{8}\frac{ \sqrt{k+1}}{k\sqrt{k}} =0 
\]


\end{proof}

\begin{proposition}\label{noswitchlemma}
Assume $\gamma \in (0,\gamma_c]$. Then there exists a unique pair $(c,S) \in [0, \frac{\pi}{2}] \times (0,1)$ which satisfies 
%
{ equations (\ref{sim5NOT}) and (\ref{sim6NOT}). Furthermore,  if $\gamma>\gamma_c$   then equations (\ref{sim5NOT}) and (\ref{sim6NOT}) cannot hold.}
\end{proposition}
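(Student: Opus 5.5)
Via the substitution $a=2\gamma S$, $k=c^2/a$ introduced before (\ref{IandJ}), (\ref{sim5NOT})--(\ref{sim6NOT}) become the system (\ref{equiV5}); then Lemma \ref{tty} inverts the second equation of (\ref{equiV5}) for $k$. The claim is that the pair $(c,S)$ with $S>0$ (the footnote after (\ref{sim6NOT}) shows $S>0$) is in bijection with pairs $(k,a)$, $k\ge 0$, $a>0$, satisfying (\ref{equiV5}). Indeed, given $(c,S)$ solving (\ref{sim5NOT})--(\ref{sim6NOT}), factor $\sqrt a$ out of the square roots. (\ref{sim5NOT}) becomes $a^{-1/2}I(k)=1$ and (\ref{sim6NOT}) becomes $a^{-1/2}J(k)=S=a/(2\gamma)$, which is (\ref{equiV5}). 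Conversely, given $(k,a)$ satisfying (\ref{equiV5}), set $S:=a/(2\gamma)$ and $c:=\sqrt{ka}\ge 0$ (we take $c\ge 0$ because $u>0$ in this regime). Running the computation backward recovers (\ref{sim5NOT})--(\ref{sim6NOT}).

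\textbf{Existence and uniqueness for $\gamma\in(0,\gamma_c]$.} By Lemma \ref{tty}, $k\mapsto I^3(k)/J(k)$ equals $2\gamma_c$ at $k=0$, is strictly decreasing, and tends to $0$ as $k\to\infty$. It is continuous on $[0,\infty)$: at $k=0$ the integrands $1/\sqrt{k+\sin x}$ are dominated by $1/\sqrt{\sin x}$, which is integrable, so dominated convergence applies. Hence for each $\gamma\in(0,\gamma_c]$ there is exactly one $k\ge 0$ with $I^3(k)/J(k)=2\gamma$, and $k=0$ occurs exactly when $\gamma=\gamma_c$. The first equation of (\ref{equiV5}) then fixes $a=I(k)^2>0$ uniquely, and so $(c,S)$ is uniquely determined.

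\textbf{Range of $(c,S)$.} For $c$: since $c^2=ka$, we have
\[
1=\int_0^{\pi/2}\frac{dx}{\sqrt{c^2+a\sin x}}\le \int_0^{\pi/2}\frac{dx}{c}=\frac{\pi}{2c},
\]
so $c\le \pi/2$. For $S$: positivity comes from $a>0$. For the upper bound, (\ref{sim6NOT}) gives $S=\int_0^1\sin(2\theta(s))\,ds$, and $\sin(2\theta)<1$ except at isolated instants, because $\theta$ is strictly increasing. Alternatively, $S=a^{-1/2}J(k)$ with $J(k)<I(k)=a^{1/2}$, since $\sin x<1$ on most of the interval. Either way $S<1$.

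\textbf{Nonexistence for $\gamma>\gamma_c$.} Any solution of (\ref{sim5NOT})--(\ref{sim6NOT}) would give a $k=c^2/a\ge 0$ with $I^3(k)/J(k)=2\gamma>2\gamma_c$. This contradicts Lemma \ref{tty}, which says the function is at most $2\gamma_c$ on $[0,\infty)$.

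The main delicate point is the equivalence step. We need to rule out $a\le 0$ (handled by the footnote, or directly from the sign of $S$ in (\ref{sim6NOT})), and we need continuity at the endpoint $k=0$, where the integrals are improper but convergent.
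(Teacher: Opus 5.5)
Your proposal is correct and follows essentially the same route as the paper: you rewrite (\ref{sim5NOT})--(\ref{sim6NOT}) as (\ref{equiV5}), invert $I^3/J$ using the monotonicity and limits from Lemma \ref{tty} to get a unique $k_\gamma$, and then recover $a=I^2(k_\gamma)$, $S=J(k_\gamma)/I(k_\gamma)<1$ and $c=\sqrt{k_\gamma a}$. Your extra steps fill in details the paper leaves implicit: continuity at $k=0$ via dominated convergence, the direct bound $c\le \pi/2$, and $S>0$ from positivity of the integrand in (\ref{sim6NOT}).
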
 

\begin{proof}

For a given $\gamma$ in $(0,\gamma_c)$, it follows from Lemma \ref{tty} that   there exists a unique $k$, call it $k_\gamma \in (0,\infty)$ such that the second equation in (\ref{equiV5}) is verified. 
{ This also shows that  if $\gamma>\gamma_c$  then  the second equation in (\ref{equiV5}) is never verified, proving the second part of the lemma}

Use $k_\gamma$   in the first one of (\ref{equiV5}),  written as $\sqrt{\, 2\gamma S\,}=I(k_{\gamma})$ to get $S:=S_\gamma=\frac{I^2(k_\gamma)}{2\gamma}=\frac{J(k_\gamma)}{I(k_\gamma)} <1$. From this we get $c:=c_\gamma=\sqrt{a k_\gamma}=\sqrt{2\gamma S_\gamma k_\gamma}$. Clearly these two values $(c_\gamma, S_\gamma)$ satisfies  equations (\ref{sim5NOT}) and (\ref{sim6NOT}).  Notice that $\lim_{\gamma \rightarrow  \gamma_c} c_\gamma=0$, moreover 
$\lim_{\gamma \rightarrow 0} c_\gamma=\frac{\pi}{2}$. 
The latter one is obtained by letting $\gamma \rightarrow 0$ in (\ref{sim5NOT}). 
The uniqueness result follows from the fact that the functions $I(k)$ and $\frac{I^3(k)}{J(k)}$ are decreasing, so in particular injective. In fact, assume that 
there exist another pair $(\tilde{c},\tilde{S})$ that satisfies equations (\ref{sim5NOT}) and (\ref{sim6NOT}), and let $\tilde{k}$ be such that
$I(\tilde{k})=\sqrt{2\gamma\tilde{S}}$, then 
\[
\frac{I^3(\tilde{k})}{J(\tilde{k})}=2\gamma=\frac{I^3({k_\gamma})}{J({k_\gamma})}
\]
Thus $\tilde{k}=k_\gamma$ and so also  $(\tilde{c},\tilde{S})=(c_\gamma,S_{\gamma})$

\end{proof}
For $\gamma \in [0,\gamma_c]$ the {\bf simplest extremal} is defined as follows: 

\begin{enumerate}
\item  When $\gamma=0$, $c=\frac{\pi}{2}$ $S:=S_x(1)=\frac{2}{\pi}$ (this is the minimum energy optimal when we neglect the sensitivity cost). 

\item For $\gamma=\gamma_c$, $c=0$, $S:=S_x(1)=\frac{J(0)}{I(0)}\approx 0.4575$   

\item For $\gamma \in (0,\gamma_c)$ choose $k_\gamma$ as the unique solution $k$  of the second one 
of (\ref{equiV5}) and $c=c_\gamma=\sqrt{k_\gamma I^2(k_\gamma)}$ and  $S=S_x(1)=  \frac{J(k_\gamma)}{I(k_\gamma)}$ . 

\end{enumerate}
The fact that these are extremals follows immediately from the fact that they satisfy equations (\ref{sim5NOT}) and  (\ref{sim6NOT}) which are the integral versions
of (\ref{sim1NOT}) and (\ref{sim4NOT}) with the control (\ref{sim2NOT}) and the constraint on the final conditions. 

\subsubsection{$\gamma > \gamma_c$}

For $\gamma > \gamma_c$, {we already know from the second part of  Proposition \ref{noswitchlemma} that the extremals have to change sign.}  We  take $u(0)=c<0$, 
 $u$ has to switch sign at some point in $[0,1]$ and because of the symmetry of Proposition \ref{simmetrie} it is in fact zero at an even number of points. The {\it simplest case}  is when for a {\it unique} $\bar t \in [0,\frac{1}{2})$, $u(\bar t)=u(1-\bar t)=0$. Using the constant of motion (equation (\ref{const1})), with the definition $S=S_x(1)$   we have that $u(\bar t)=0$ if and only if
\begin{equation}\label{zeri}
\sin(2\theta(\bar t ))=-\frac{c^2}{2\gamma S}.
\end{equation}
This implies in particular that  we have necessarily $\frac{c^2}{2\gamma S}<1$. The inequality is strict since if $\frac{c^2}{2\gamma S}=1$ then $\theta(\bar t)=-\frac{\pi}{4}$  and this together with $u=0$ is an equilibrium point for the system (\ref{sim1NOT}) and (\ref{sim2NOT}). 
Equation (\ref{sim1NOT}) with (\ref{const1}) give $\dot \theta=-\sqrt{c^2 +2\gamma S \sin(2\theta)}$ for $t \in [0,\bar t]$ and $t \in [1-\bar t, 1]$ and $\dot \theta=+\sqrt{c^2 +2\gamma S \sin(2\theta)}$ for $t \in [\bar t,1- \bar t]$. Integrating by separation of variables and imposing that the final condition is $\frac{\pi}{2}$ we get 
$$
-\int_0^{\theta(\bar t)} \frac{d\theta}{\sqrt{c^2+2\gamma S \sin(2\theta)}} +\int_{\theta(\bar t)}^{\theta(1-\bar t)}  \frac{d\theta}{\sqrt{c^2+2\gamma S \sin(2\theta)}}-\int_{\theta(1-\bar t)}^{\frac{\pi}{2}} \frac{d\theta}{\sqrt{c^2+2\gamma S \sin(2\theta)}}=1. 
$$
{
Notice that when $\theta={\theta(\bar t)}$ or $\theta={\theta(1-\bar t)}$ (which is $\frac{\pi}{2}-\theta(\bar t)$ from (\ref{sss1}))   the denominator of the previous integrals goes to $0$. Therefore  these are improper integrals. However we have excluded  $2\theta(\bar{t})=- \frac{\pi}{2}$ and  $2\theta(1-\bar{t})= \frac{3\pi}{2}$ (to avoid equilibrium points)  and the Taylor expansion at $\theta(\bar t)$ and $\theta(1-\bar t)$ (with the above two points excluded) shows that they are convergent integrals.

Using (\ref{sss1}) and (\ref{zeri}) with $\bar \theta=\theta(\bar t)=-\frac{1}{2} \arcsin\left( \frac{c^2}{2\gamma S}\right)$, and letting $x=2\theta$ we have 

\begin{equation}\label{gammamaggiore1}
-\int_0^{2\bar \theta}\frac{dx}{\sqrt{c^2+2\gamma S \sin(x)}}+\int_{2\bar \theta}^{\pi-2\bar \theta}\frac{dx}{\sqrt{c^2+2\gamma S \sin(x)}}-\int_{\pi-2\bar \theta}^{\pi}\frac{dx}{\sqrt{c^2+2\gamma S \sin(x)}}=2
\end{equation}
For any function $f(\sin(x))$ we have:
\[
\begin{split}
\int_{2\bar \theta}^{\pi-2\bar \theta}f(\sin(x))dx  & =\int_{2\bar \theta}^{0}f(\sin(x))dx+\int_{0}^{\pi}f(\sin(x))dx+\int_{\pi}^{\pi-2\bar \theta}f(\sin(x))dx \\
&  =\int_{2\bar \theta}^{0}f(\sin(x))dx+\int_{0}^{\pi}f(\sin(x))dx-\int_{0}^{2\bar \theta}f(\sin(x))dx 
\end{split}
\]
Using this for the second integral of (\ref{gammamaggiore1}) and a change of coordinates $y=\pi-x$ on the third integral of (\ref{gammamaggiore1}) along with the symmetry of the $\sin$ function about $\frac{\pi}{2}$,  we arrive at (with  $\bar \theta=-\frac{1}{2} \arcsin\left( \frac{c^2}{2\gamma S}\right)$)
\begin{equation}\label{gammamaggiore2}
2\int_{2\bar \theta}^{0}\frac{dx}{\sqrt{c^2+2\gamma S \sin(x)}}+\int_{0}^{\frac{\pi}{2}}\frac{dx}{\sqrt{c^2+2\gamma S \sin(x)}}=1.
\end{equation}
Using the same argument, we can also  write  the equation $\dot{S}_x=\sin(2\theta)$ with conditions $S_x(0)=0$ and $S_x(1)=S$, in the following integral form:
\begin{equation}\label{gammamaggiore3}
2\int_{2\bar \theta}^{0}\frac{\sin(x)}{\sqrt{c^2+2\gamma S \sin(x)}}dx +\int_{0}^{\frac{\pi}{2}}\frac{\sin(x)}{\sqrt{c^2+2\gamma S \sin(x)}}dx=S.
\end{equation}

\begin{lemma}\label{oneswitchlemma}
Assume $\gamma >\gamma_c$. Then there exists a unique pair $(c,S)$ with $c<0$, $S\in (0,1)$  and $\frac{c^2}{2\gamma S}<1$ which satisfies equations (\ref{gammamaggiore2}) and (\ref{gammamaggiore3}).
\end{lemma}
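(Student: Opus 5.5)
We will mimic the proof of Proposition \ref{noswitchlemma}, rescaling by $a:=2\gamma S>0$ and introducing the parameter $h:=\frac{c^2}{a}\in[0,1)$. The condition $\frac{c^2}{2\gamma S}<1$ then reads $h<1$. The lower limit is $2\bar\theta=-\arcsin(h)$. Define, for $h\in[0,1)$,
\[
\tilde I(h):=2\int_{-\arcsin(h)}^{0}\frac{dx}{\sqrt{h+\sin(x)}}+\int_0^{\frac{\pi}{2}}\frac{dx}{\sqrt{h+\sin(x)}},
\]
\[
\tilde J(h):=2\int_{-\arcsin(h)}^{0}\frac{\sin(x)\,dx}{\sqrt{h+\sin(x)}}+\int_0^{\frac{\pi}{2}}\frac{\sin(x)\,dx}{\sqrt{h+\sin(x)}}.
\]
Both integrals converge, since near $x=-\arcsin(h)$ the radicand vanishes linearly for $h<1$. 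Factoring $\sqrt a$ out of the denominators, (\ref{gammamaggiore2}) and (\ref{gammamaggiore3}) become $\tilde I(h)=a^{1/2}$ and $\tilde J(h)=a^{1/2}S=\frac{a^{3/2}}{2\gamma}$. Exactly as in (\ref{equiV5}), this is equivalent to
\[
\tilde I(h)=a^{\frac12},\qquad 2\gamma=\frac{\tilde I^3(h)}{\tilde J(h)}.
\]

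Step 1 is to check the boundary behaviour of $F(h):=\tilde I^3(h)/\tilde J(h)$.
\begin{itemize}
\item At $h=0$ the first integrals vanish, so $F(0)=I^3(0)/J(0)=2\gamma_c$, matching $k=0$ in Lemma \ref{tty}.
\item As $h\to1^-$ the lower limit tends to $-\pi/2$, and $h+\sin x$ acquires a double zero there. Hence $\int_{-\arcsin h}^0 (h+\sin x)^{-1/2}dx\to+\infty$ logarithmically, and $\tilde I\to\infty$.
\item $\tilde J$ stays bounded, since $|\sin x|\le1$ and the divergence is only logarithmic. We also need $\tilde J>0$. For this, write $\sin x=(h+\sin x)-h$ in the negative part: its contribution is $2\int\sqrt{h+\sin x}\,dx-2h\int(h+\sin x)^{-1/2}dx$, and this competes with the positive term.
\end{itemize}
Positivity of $\tilde J$ is needed anyway for $S>0$. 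If it fails near $h=1$, the right target is instead that $F$ takes every value in $(2\gamma_c,\infty)$ exactly once on the set where $\tilde J>0$. In that case the limit to check is $F\to+\infty$ as $\tilde J\downarrow0$ or $h\to1$. Either way, continuity and the intermediate value theorem give existence of $h_\gamma\in(0,1)$ with $F(h_\gamma)=2\gamma$ for every $\gamma>\gamma_c$.

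Step 2 is uniqueness, which I expect to be the main obstacle: showing $F$ is strictly increasing on the relevant range. The plan is to differentiate in $h$, which is awkward because the lower limit depends on $h$ and the integrand is singular there. I would first substitute to remove the moving singular endpoint, e.g. $\sin x=-h\,s$ with $s\in[0,1]$ on the negative piece. This gives integrals over fixed intervals with integrable weights like $(1-s)^{-1/2}$ in which $h$ enters smoothly. Then I would compute $3\tilde I'\tilde J-\tilde J'\tilde I$ and aim to show it is positive. The tool is a Cauchy--Schwarz argument with the weight $w=(h+\sin x)^{-3/2}$, as in Lemma \ref{tty}, where the extra endpoint terms now push in the opposite direction. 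If a clean sign argument fails, the fallback is to show monotonicity of $\tilde I$ and of $S=\tilde J/\tilde I$ separately and combine them through $2\gamma=\tilde I^2/S$.

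Step 3 recovers $(c,S)$ as in Proposition \ref{noswitchlemma}: $S=\tilde J(h_\gamma)/\tilde I(h_\gamma)$, then $a=\tilde I^2(h_\gamma)$, and $c=-\sqrt{a h_\gamma}<0$. Since $|\tilde J|<\tilde I$ termwise, we get $S<1$, with $S>0$ by Step 1. The condition $c^2/(2\gamma S)=h_\gamma<1$ holds by construction. Uniqueness of $(c,S)$ follows from injectivity of $F$: any other solution $(\tilde c,\tilde S)$ defines $\tilde h=\tilde c^2/(2\gamma\tilde S)$ with $F(\tilde h)=2\gamma$, so $\tilde h=h_\gamma$ and hence $(\tilde c,\tilde S)=(c,S)$.
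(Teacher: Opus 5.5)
There is a genuine gap in Step 2, which is the real content of the lemma. Injectivity of $F=\tilde I^3/\tilde J$ is only announced, and the tool you suggest does not deliver it by itself. In Lemma~\ref{tty}, Cauchy--Schwarz bounds the only positive term from above, which is why it yields a \emph{negative} sign. Applied here, it still makes the $I,J$ part of $3\tilde I'\tilde J-\tilde J'\tilde I$ negative. What you need instead is a quantitative lower bound showing that the contribution of the piece on $[-\arcsin h,0]$ wins.

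Step 1 also contains a false claim: $\tilde J$ is not bounded as $h\to1^-$. Near $x=-\pi/2$ one has $\sin x\approx-1$, so the negative piece behaves like $-2\int(h+\sin x)^{-1/2}dx$. Hence $\tilde J\to-\infty$ at the same logarithmic rate at which $\tilde I\to+\infty$; the bound $|\sin x|\le1$ only gives $|\tilde J|\le\tilde I$. Your hedge rescues existence. But uniqueness would then also require $\{\tilde J>0\}$ to be an interval, which is again an unproved monotonicity statement.

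The missing idea, which is the paper's route, is to prove monotonicity of the two functions separately: $\tilde I=2A+I$ strictly increasing and $\tilde J=2B+J$ strictly decreasing (Proposition~\ref{monotImJm} with $m=1$). Then $G(h):=\tilde J(h)-\tilde I^3(h)/(2\gamma)$ is strictly decreasing on $[0,1)$. Moreover $G(0)=J(0)-I^3(0)/(2\gamma)>0$ exactly when $\gamma>\gamma_c$, and $G\to-\infty$ as $h\to1^-$ because $\tilde I\to\infty$ while $\tilde J\le J(0)$. So $G$ has exactly one zero, and at that zero $\tilde J=\tilde I^3/(2\gamma)>0$ automatically; you never need to locate the sign change of $\tilde J$.

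Your own substitution $\sin x=-hs$ is the right tool for this. It gives $A=\sqrt h\,g(h)$ and $B=-h^{3/2}q(h)$, with $g(h)=\int_0^1\frac{ds}{\sqrt{1-s}\sqrt{1-h^2s^2}}\ge2$ and $q(h)=\int_0^1\frac{s\,ds}{\sqrt{1-s}\sqrt{1-h^2s^2}}>0$ both nondecreasing. Hence $B'<0$ (and $J'<0$ trivially), while $A'\ge g/(2\sqrt h)\ge h^{-1/2}$. From $\sin x\ge 2x/\pi$ you get $-I'=\frac12\int_0^{\pi/2}(h+\sin x)^{-3/2}dx\le\frac{\pi}{2}h^{-1/2}$. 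Combining, $\tilde I'=2A'+I'\ge(2-\frac{\pi}{2})h^{-1/2}>0$. The paper proves the stronger $A'+I'\ge0$ in Appendix~\ref{IJC}, which it needs for all $m$.

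Step 3 is fine as written, and obtaining $S<1$ from $|\tilde J|<\tilde I$ is more direct than the paper's argument.
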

\begin{proof}
As in Proposition \ref{noswitchlemma}, define $a:=2\gamma S$, $k:=\frac{c^2}{a}$,  and assume   $0<k<1$.  Consider  the functions of $k$, 
\begin{equation}\label{kgammamaggiore2}
I_{1}(k)=2\int_{-\arcsin(k)}^{0}\frac{dx}{\sqrt{k+ \sin(x)}}+\int_{0}^{\frac{\pi}{2}}\frac{dx}{\sqrt{k+ \sin(x)}},
\end{equation}
and 
\begin{equation}\label{kgammamaggiore3}
J_1(k)=2\int_{-\arcsin(k)}^{0}\frac{\sin(x)}{\sqrt{k+\sin(x)}}dx +\int_{0}^{\frac{\pi}{2}}\frac{\sin(x)}{\sqrt{k+ \sin(x)}}dx.
\end{equation}
Similarly to Proposition \ref{noswitchlemma}, equations (\ref{gammamaggiore2}) and (\ref{gammamaggiore3}) are verified if and only
\begin{equation}\label{kequiV5}
I_1(k)=a^{\frac{1}{2}}, \qquad J_1(k)=\frac{a^{\frac{3}{2}}}{2\gamma}= \frac{I_1^3(k)}{2\gamma}, 
\end{equation}
The function $I_1(k)$ is increasing   and $\lim_{k\to 1} I(k)=+\infty$. . 
The function $J_1(k)$ is decreasing   and $\lim_{k\to 1} J(k)=-\infty$.  These properties of the functions $I_1$ and $J_1$  follow as a special case of the properties of $I_m=I_m(k)$ and $J_m=J_m(k)$  described in Appendix \ref{IJC} setting $m=1$.  Notice that  $I_1=I_1(k)$  and $J_1=J_1(k)$  are 
defined also for $k=0$, thus  the second equality in (\ref{kequiV5}) can be satisfied if and only if 
$J_1(0)>\frac{I_1^3(0)}{2\gamma}$.
Since  $I_1(0)=I(0)$ and  $J_1(0)=J(0)$ (see (\ref{basicinte})), this is equivalent to saying   $\gamma>\frac{I^3(0)}{2J(0)}=\gamma_c$.

Thus fix  $\gamma>\gamma_c$.  There exists a unique $0<k_{\gamma}<1$, which satisfies the second equality in (\ref{kequiV5}).
Now by setting, 
\begin{equation}\label{oneswitchSc}
S_{\gamma}=\frac{I_1^2(k_{\gamma})}{2\gamma}\qquad  c_{\gamma}=-\sqrt{k_{\gamma}I_1^2(k_{\gamma})}
\end{equation}
also the first equality in (\ref{kequiV5}) holds. With this choice $S_{\gamma}<1$. In fact, if $S_{\gamma}\geq 1$, then $J_1(k_{\gamma})=\frac{I_1^3(k_{\gamma)}}{2\gamma} \geq I_1(k_\gamma) > I(0) $, which is not possible since $J_1(k)\leq J(0)<I(0)$ for all $0\leq k<1$.
This concludes the proof.
\end{proof}

Summarizing, for $\gamma > \gamma_c $ the {\bf simplest extremal} is defined as follows:  Choose $k_\gamma$ as the unique solution $k\in (0,1)$  of the second one of (\ref{kequiV5}) with the definition (\ref{kgammamaggiore2}) (\ref{kgammamaggiore3}). Then choose $c=c_\gamma$ and $S=S_\gamma$ as defined in (\ref{oneswitchSc}). The calculation in Lemma \ref{oneswitchlemma} shows that this is an extremal, that is, the solution of the equation (\ref{sim1NOT}), (\ref{sim2NOT}), (\ref{sim4NOT}) with these parameters/initial conditions are such that $\theta(1)=\frac{\pi}{2}$ and $S_x(1)=S$. 

\subsubsection{The cost of the simplest extremal}

Let us consider the expression of the cost $C=C(\gamma)$ in (\ref{costosipi2}). In the case $\gamma \in [0,\gamma_c]$ choose $k_\gamma$ as the solution of the second one of $(\ref{equiV5})$. Then replacing the corresponding values of $c^2=kI^2(k_\gamma)$ and $S^2_\gamma=S_x^2(1)=\frac{J^2(k_\gamma)}{I^2(k_\gamma)}$ in (\ref{costosipi2}), we obtain the cost 
\begin{equation}\label{costogc}
C_0(\gamma) :=\frac{1}{2}k_\gamma I^2(k_\gamma)+\frac{3}{2} \gamma \frac{J^2(k_\gamma)}{I^2(k_\gamma)}=\frac{1}{2}k_\gamma I^2(k_\gamma)+\frac{3}{4} I(k_\gamma)J(k_\gamma), 
\end{equation}
where we used again the second one of (\ref{equiV5}). For $\gamma \rightarrow 0$ this cost tends to the minimum energy cost $\frac{\pi^2}{8}$, while for $\gamma \rightarrow \gamma_c$ this is the cost for the $\gamma_c$ corresponding $k_{\gamma_c}=0$, which is equal to 
$$
C_0(\gamma_c)=\frac{3}{4}I(0)J(0)=\frac{3\pi}{4} > \frac{\pi^2}{8}=C_0(0), 
$$ 
where we used (\ref{basicinte}). 
For $\gamma >\gamma_c$, using again the expression for the cost given by (\ref{costosipi2}) and the formulas for $c$ and $S=S_x(1)$ as for the prescription described in Lemma \ref{oneswitchlemma},  we obtain the expression for the cost 
\begin{equation}\label{cost1}
C_1(\gamma)=\frac{1}{2}k_\gamma I_1^2(k_\gamma)+\frac{3}{2} \gamma \frac{J_1^2(k_\gamma)}{I_1^2(k_\gamma)}=\frac{1}{2}k_\gamma I_1^2(k_\gamma)+\frac{3}{4} J_1(k_\gamma) I_1(k_\gamma),
\end{equation}
where $k_\gamma$ is defined now implicitly as the unique solution (in $(0,1)$) of the second equation in (\ref{kequiV5}). At the limit when $\gamma \rightarrow \infty$ $k_\gamma$ is the value $k_{lim}$ such that 
$J(k_{lim})=0$. Numerical evaluation gives $k_{lim}\approx 0.4639$ and the limiting cost is $C_1(\infty):=\frac{1}{2} k_{lim}I_1^2(k_{lim})\approx 4.609:=U$. This cost is almost twice as large as  the cost for $\gamma_c$ which is  $\frac{3}{4}\pi$, and gives a cost for a control which gives zero sensitivity. In fact,  the cost for the simplest extremal is increasing with  $\gamma$. This follows from the analysis in appendix \ref{IJC} specializing to the case $m=1$ for $C_m$ if $\gamma > \gamma_c$. For the case $\gamma \leq \gamma_c$, since $k_\gamma$ as a function of $\gamma$ is decreasing as from Lemma \ref{tty}, the proof consists of showing that the right hand side of (\ref{costogc}) has negative derivative when differentiated with respect to $k$, that is,  
$$
2I^{2}+4kII^{'}+3I^{'}J+3J^{'}I<0. 
$$
However, we have already shown in Lemma \ref{tty} that $3I^{'} J< J^{'}I$. Inserting this in the above left hand side, we get 
$$
2I^{2}+4kII^{'}+3I^{'}J+3J^{'}I<2I(I+2kI^{'}+2J^{'})=0, 
$$
by direct differentiation of $I$ and $J$.

In conclusion,  the cost of the simplest extremal is increasing with $\gamma$ on the full range of $\gamma$, $[0, \infty)$. This is consistent with the behavior of the optimal cost discussed in subsection (\ref{monot}). In fact we will show in the next section that the simplest extremal is the optimal. Notice also that the limit (for $\gamma \rightarrow \infty$)  value $U\approx 4.609$ is a uniform (on $\gamma$) upper bound for $C(\gamma)$. The results of the next subsection will show this bound to be tight.

\subsection{The simplest extremals are optimal}\label{simopt}

We now consider different extremals and prove that they necessarily give a higher cost than the simplest extremal above considered. Notce that the second part of Proposition \ref{noswitchlemma} shows that for $\gamma > \gamma_c$ there is no extremal without switches.\footnote{From now one we shall call `switches' points where the control $u$ changes sign (and therefore $u=0$).} On the other hand, for $\gamma \leq \gamma_c$ we have already considered  the extremal {\it without} switches, which is the simplest extremal. Therefore, from now on we shall assume that the candidate control has at least one switch, and in fact, because of the symmetry of Proposition \ref{simmetrie} it must have an {\it even} number of switches.   As we did in the previous subsection we define $a=2\gamma S$ and $k=\frac{c^2}{a} \in [0,1)$. We also define formally  the functions of $k$ that appear in the first term of the right hand side of (\ref{kgammamaggiore2}) and (\ref{kgammamaggiore3}), that is 
\begin{equation}\label{definizioni}
 A(k) = \int_{-\arcsin(k)}^0 \frac{dx}{\sqrt{k+\sin(x)}}, \  \  \  \ B(k)  =\int_{-\arcsin(k)}^0 \frac{\sin(x)}{\sqrt{k+\sin(x)}}dx. 
 \end{equation}
With these definitions the treatment is different if $c<0$ and $c\geq 0$ and in fact the second case follows from the first one. We treat these two cases separately. 

\subsubsection{Case $u(0)=c<0$} 
By extending the calculation  which led to (\ref{gammamaggiore2}) and (\ref{gammamaggiore3}) we obtain that $k$ and $a$ have to be such that 
\begin{equation}\label{optimal11}
I_m(k):=2m A(k) +(2m-1)I(k)=\sqrt{a}  \ \ \  J_m(k):=2m B(k) +(2m-1)J(k)=S\sqrt{a},
\end{equation}
The details leading to these conditions are reported in the Appendix \ref{proofop}. 
The definition of $I_m$ and $J_m$ given here  with the definitions of $A$ and $B$ in (\ref{definizioni}) and $I$ and $J$ in (\ref{IandJ})  coincide with (\ref{kgammamaggiore2}) and (\ref{kgammamaggiore3}) for the case $m=1$. 
 Conditions (\ref{optimal11}) are equivalent to (cf. (\ref{kequiV5}))
\begin{equation}\label{equat7}
I_m(k)=a^{\frac{1}{2}}, \qquad 2\gamma=\frac{I_m^3(k)}{J_m(k)}. 
\end{equation}

In Appendix \ref{IJC} it is shown that for every $m\geq 1$ and $0 \leq k <  0$, $I_m=I_m(k)$ is increasing while $J_m=J_m(k)$ is decreasing. Therefore, the function on the right hand side of the second one of (\ref{equat7}) is increasing with $k$ in the range where $J_m(k)>0$.  The smallest $\gamma$ for which the second one of (\ref{equat7}) can be verified satisfies 
$$
2\gamma=\frac{I_m^3(0)}{J_m(0)}=(2m-1)^2\frac{I^3(0)}{J(0)}. 
$$
With the definition  (\ref{gammacriticodef}), we need 
$
\gamma\geq (2m-1)^2 \gamma_c. 
$
Let us write the corresponding cost as $C_m(\gamma)=C_{m}(k(\gamma))$ where $k$ as a function of $\gamma$ is defined implicitly from the second one of (\ref{equat7}). Repeating the computations that led to (\ref{cost1}), we obtain 
\begin{equation}\label{costm}
C_m(k)=\frac{1}{2} k I_m^2(k)+\frac{3}{2} \gamma \frac{J_m^2(k)}{I_m^2(k)}= \frac{1}{2} k I_m^2(k)+\frac{3}{4}I_m(k) J_m(k). 
\end{equation}
When $k=0$, we have $C_m(0)=\frac{3}{4}\frac{I_m(0)}{J_m(0)}=\frac{3}{4}(2m-1)^2 I(0)J(0)=\frac{3}{4}(2m-1)^2 I(0)J(0)=
\frac{3}{4}(2m-1)^2 \pi $, where we used the definitions (\ref{basicinte}). The smallest possible value of the cost for $m\geq 2$ is therefore $\frac{9}{4} \pi \approx 7.0686$. This value is already larger than the limiting cost $U\approx 4.609$ (which is a uniform upper bound) for the cost found for the simplest extremal. Therefore, to show that none of these extremals can be optimal, it is sufficient to show that $C_m(k(\gamma))$ is nondecreasing as a function of $\gamma$. Furthermore, since $k=k(\gamma)$ in (\ref{equat7}) is increasing with $\gamma$, it is sufficient to show that $C_m(k)$ is increasing as a function of $k$, for $k\in [0,1)$.  
This is shown   in Proposition \ref{Cmincreasing} of  Appendix \ref{IJC}. Therefore we can conclude with the following. 
\begin{proposition}\label{cminor0}
There are  no extremals with $u(0):=c<0$ that have a  cost lower than the one of the simplest extremal. 
\end{proposition}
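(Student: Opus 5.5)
The plan is to split extremals with $u(0)=c<0$ according to the number $2m$ of switches; by Proposition \ref{simmetrie} this number is even and $m\geq 1$. For $m=1$ the extremal is exactly the simplest extremal of Lemma \ref{oneswitchlemma}, which is unique for each $\gamma>\gamma_c$. For $m=1$ and $\gamma\leq\gamma_c$ there is no solution, since (\ref{kequiV5}) requires $\gamma>\gamma_c$. So the $m=1$ family contributes nothing new, and it remains to treat $m\geq 2$.

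For $m\geq 2$ I will use the conditions (\ref{optimal11}), which are derived in Appendix \ref{proofop}, in the equivalent form (\ref{equat7}). I will invoke the monotonicity results of Appendix \ref{IJC}: $I_m$ is increasing and $J_m$ is decreasing on $[0,1)$. It follows that $I_m^3/J_m$ is increasing wherever $J_m>0$, so the implicit function $k=k(\gamma)$ defined by the second equation of (\ref{equat7}) exists only for $\gamma\geq(2m-1)^2\gamma_c$ and is increasing in $\gamma$. By the same reduction used for (\ref{cost1}), which combines (\ref{costosipi2}) with (\ref{equat7}), the cost is $C_m(k)=\frac12 kI_m^2+\frac34 I_mJ_m$. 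At $k=0$ this equals $\frac34(2m-1)^2\pi\geq\frac94\pi\approx 7.07$.

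The key comparison is with $U\approx 4.609$. The cost of the simplest extremal is increasing in $\gamma$, as shown at the end of the previous subsection, and it tends to $U$. Hence the simplest extremal's cost is at most $U$ for every $\gamma$. If $C_m(k)$ is nondecreasing in $k$ on the admissible range, then every extremal with $m\geq 2$ has cost at least $C_m(0)\geq \frac94\pi>U$, which exceeds the simplest extremal's cost, and the proposition follows. The monotonicity of $C_m$ in $k$ is Proposition \ref{Cmincreasing} in Appendix \ref{IJC}, and I will cite it.

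The main obstacle is this monotonicity of $C_m(k)$, if it has to be checked directly. Differentiating gives $\frac12 I_m^2+kI_mI_m'+\frac34(I_m'J_m+I_mJ_m')$. The term $J_m'$ is negative, so positivity is not automatic. I would try first to bound $|J_m'|$ by $I_m'$, using $B'$ and $A'$ together with the improper-integral behaviour near $-\arcsin k$, exploiting $I_m'>0$ and the relation $I+2kI'+2J'=0$ for the $I,J$ parts. A further subtlety is that only the range where $J_m(k)>0$ is admissible. I will also note that the cost for $m\geq2$ is bounded below by $C_m(0)$ even without monotonicity in $\gamma$, provided $C_m$ is nondecreasing in $k$ on that range.
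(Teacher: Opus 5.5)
Your proposal is correct and follows the paper's argument step for step. The case $m=1$ yields only the simplest extremal; for $m\ge 2$ you combine (\ref{equat7}) with the cost formula $C_m(k)=\frac12 kI_m^2+\frac34 I_mJ_m$, note that $C_m(0)=\frac34(2m-1)^2\pi\ge\frac94\pi>U$, and cite Proposition \ref{Cmincreasing} on the interval where $J_m>0$. Because you cite that proposition, your fallback sketch is not needed; note, however, that a bound of the form $|J_m'|\le I_m'$ would not be enough near $k=0$, where $I_m'\to+\infty$ while $J_m<I_m$, which is why the paper proves the sharper termwise estimate $J_m'\ge-\frac23 I_m-\frac43 kI_m'$.
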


\subsubsection{Case $u(0)=c \geq 0$}

Now  we consider the case when $u(0)=c\geq 0$. If there are no switches, the control coincides with the simplest extremal. Thus we assume a positive number of switches that by symmetry has  to be even, that is equal to $2m$, for some $m \geq 1$.  As before, it is convenient to define $a:=2\gamma S$ and $k=\frac{c^2}{a}\in [0,1)$, so that, a calculation similar to the one for the case $c<0$ and which is reported in Appendix \ref{proofop} gives the two conditions 
\begin{equation}\label{optimal12}
I_{+m}(k):=2m A(k) +(2m+1)I(k)=\sqrt{a}  \ \ \  J_{+m}(k):=2m B(k) +(2m+1)J(k)=S\sqrt{a}, 
\end{equation}
or equivalently  (cf. (\ref{equiV5}))
\begin{equation}\label{equiV6}
I_{+m}(k)=a^{\frac{1}{2}}, \qquad 2\gamma=\frac{I^3_{+m}(k)}{J_{+m}(k)}. 
\end{equation}
Similarly to (\ref{costm}), we can now express the cost $C$, which we denote by $C_{+m}$  in terms of the previous  functions. Using (\ref{costosipi2}) and (\ref{equiV6}), we have, 
$
c^2=k a=kI^2_{+m}(k), \text{ and } \ S=I_{+m}(k)J_{+m}(k),
$
thus we get for the cost $C_{+m}(k)$. 
\begin{equation}{\label{costoc>0}}
C_{+m}(k)=\frac{I_{+m}(k)}{2}\left( kI_{+m}(k) +\frac{3}{2}J_{+m}(k) \right). 
\end{equation}
Notice that $k$ is defined implicitly as a function of $\gamma$ by the second one of (\ref{equiV6}). We remark that, contrary to what happens in the case $c<0$ the right hand side is not a strictly increasing function of $k$ in this case. In fact one can show that it has a  minimum which is positive and close to but not equal to $k=0$. However, this does not affect the proof that $C_{+m}(k)$ is always greater than the bound $U$ for any $k \in (0,1)$. This proof uses our knowledge for the case $c<0$ and it  follows:

Defining $q:=\frac{2m}{2m+1} \in [\frac{2}{3},1)$ write $C_{+m}$ in (\ref{costoc>0}) with the definitions in (\ref{optimal12}) as 
\begin{equation}\label{CMK}
\begin{split}
C_{+m}(k)  & =\frac{(2m+1)^2}{2} \left( k(qA(k)+I(k))^2+\frac{3}{2}(qA(k)+I(k))(qB(k)+I(k))\right) \\
& \geq \frac{(2m+1)^2}{2} q^2 \left(k(A(k)+I(k))^2+\frac{3}{2}(A(k)+I(k))(B(k)+J(k)) \right). 
\end{split}
\end{equation}
The analysis for the case $c<0$, recall the definitions of $C_m(k)$, $I_m(k)$ and $J_m(k)$ in equations (\ref{costm}) and (\ref{optimal11}), has shown that for every $m\geq 1$ 
\[
\begin{split}
F_m(k): &=(2m-1)^2C_m(k) \\
& = k(\frac{2m}{2m-1}A(k)+I(k))^2+ \frac{3}{2}\left(\frac{2m}{2m-1}A(k)+I(k)\right)\left(\frac{2m}{2m-1}B(k)+I(k)\right). \\
\end{split}
\]
 is such that 
$F_m(k) \geq F_m(0)$. Since 
$$\lim_{m\rightarrow +\infty} F_m(k)= \left(k(A(k)+I(k))^2+\frac{3}{2}(A(k)+I(k))(B(k)+J(k)) \right),$$ 
for every $k$ we have:
$$k(A(k)+I(k))^2+\frac{3}{2}(A(k)+I(k))(B(k)+J(k)){\geq}   \frac{3}{2}(A(0)+I(0))(B(0)+J(0)) =\frac{3}{2} I(0)J(0).$$ 
 We know that $I(0)J(0)=\frac{3}{2} \pi$ and therefore, 
recalling the definition of $q$, this  gives in (\ref{CMK}), 
$$
C_{+m}(k)\geq \frac{(2m+1)^2}{2}\frac{4}{9} \frac{3}{2}\pi \geq \frac{(2m+1)^2}{3}\pi\geq 3\pi \approx 9.4247 > U, 
$$
for any $m\geq 1$. Therefore we have the counterpart of Proposition \ref{cminor0} for the case $c\geq 0$. 

\begin{proposition}\label{cmajor0}
There are  no extremals with $u(0):=c\geq 0$ that have a  cost lower than the one of the simplest extremal. 
\end{proposition}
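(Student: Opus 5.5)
The plan is to split according to the number of switches of the candidate extremal with $u(0)=c\geq 0$. If $u$ never vanishes on $[0,1]$, then $u>0$ throughout. The conditions (\ref{sim1NOT})--(\ref{sim4NOT}) then reduce to (\ref{sim5NOT})--(\ref{sim6NOT}). By Proposition \ref{noswitchlemma}, this forces $\gamma\leq\gamma_c$ and $(c,S)=(c_\gamma,S_\gamma)$, so the candidate \emph{is} the simplest extremal and there is nothing to prove.

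Otherwise $u$ has switches. By Proposition \ref{simmetrie} their number is even, say $2m$ with $m\geq 1$. Writing $a=2\gamma S$ and $k=c^2/a\in[0,1)$, the separation-of-variables computation (Appendix \ref{proofop}) yields (\ref{optimal12}), hence (\ref{equiV6}), and the cost becomes $C_{+m}(k)$ in (\ref{costoc>0}). It therefore suffices to show that $C_{+m}(k)>U$ for every $k\in[0,1)$ and every $m\geq1$. Indeed, $U$ is the limit of the increasing cost of the simplest extremal, so it bounds that cost from above for every $\gamma$.

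To get the lower bound, set $q=\frac{2m}{2m+1}\in[\frac23,1)$. Since $q<1$, we have $qA+I\geq q(A+I)$ and $qB+J\geq q(B+J)$, which gives the inequality (\ref{CMK}). Some care is needed here because $B\leq 0$. For the product, I would use the identity
\[
qB+J=q(B+J)+(1-q)J
\]
together with $J\geq 0$, and similarly for $A$. This requires both factors to be nonnegative, i.e. $B(k)+J(k)\geq 0$ in the relevant range. In the range where $J_{+m}>0$, which is forced by (\ref{equiV6}) with $\gamma>0$, I expect this to hold, or the sign issue to be handled by the same monotonicity facts.

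The bracket in (\ref{CMK}) is then the limit as $m\to\infty$ of $F_m(k)=(2m-1)^{-2}C_m(k)$ (with the analogous normalization). Proposition \ref{Cmincreasing} of Appendix \ref{IJC} says $C_m$ is increasing in $k$, so $F_m(k)\geq F_m(0)$, and the inequality passes to the limit. This gives a lower bound of $\frac32 I(0)J(0)=\frac32\pi$ by (\ref{basicinte}). Combining,
\[
C_{+m}(k)\geq\frac{(2m+1)^2}{2}\cdot\frac49\cdot\frac32\pi\geq 3\pi>U\approx4.609 .
\]

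The main obstacle I expect is the comparison step (\ref{CMK}), namely justifying the factor $q^2$ when $B(k)<0$. Positivity of $B(k)+J(k)$ must be checked, possibly via $B+J\geq J_m/(2m-1)$-type inequalities from Appendix \ref{IJC}. Passing the monotonicity of $C_m$ to the limit $m\to\infty$ is routine by pointwise convergence.
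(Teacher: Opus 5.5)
Your outline is the paper's own argument: the no-switch case via Proposition~\ref{noswitchlemma}, then (\ref{CMK}), the limit of $F_m$, and $C_{+m}\ge\frac{(2m+1)^2}{3}\pi\ge 3\pi>U$. The point you leave open is a genuine gap, but it sits one step later than where you place it.

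\textbf{Where the gap actually is.} Inequality (\ref{CMK}) needs no sign information on the admissible set $\{J_{+m}>0\}=\{qB+J>0\}$. If $B+J<0$, the second term on its right-hand side is negative, whereas $(qA+I)(qB+J)>0$.

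What fails is the claim that $k(A+I)^2+\frac32(A+I)(B+J)\ge\frac32\pi$ for every admissible $k$; the paper's text asserts the same thing. The proof of Proposition~\ref{Cmincreasing} only yields $2C_m'\ge\frac32 I_m'J_m\ge 0$ where $J_m\ge 0$. In fact $C_m$ is not monotone on $[0,1)$: as $k\to1^-$, $A(k)\sim -B(k)\sim\frac{1}{\sqrt2}\ln\frac{1}{1-k}$. Hence $kI_m+\frac32 J_m\to-\infty$ and $C_m\to-\infty$, and the limiting bracket also tends to $-\infty$.

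Since $J_m=(2m-1)(B+J)+B\le(2m-1)(B+J)$, any $k$ with $B(k)+J(k)<0$ has $J_m(k)<0$ for every $m$. So no inequality $F_m(k)\ge F_m(0)$ is available there, and the limit argument gives nothing.

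Such $k$ are admissible. Because $B\to-\infty$ while $J$ stays bounded, $B+J$ has a zero $k_1\in(0,1)$ (numerically $k_1\approx 0.65$). At that point $J_{+m}(k_1)=2mB(k_1)+(2m+1)J(k_1)=-B(k_1)>0$. So your expectation that $B+J\ge 0$ wherever $J_{+m}>0$ is false, and the interval $\{B+J<0<qB+J\}$ is left uncovered.

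\textbf{How to close it.} The repair is short.
\begin{itemize}
\item Where $B+J>0$, your limiting argument is valid: $J_m>0$ on $[0,k]$ for all large $m$, hence $F_m(k)\ge F_m(0)$. The case $B+J=0$ follows by continuity.
\item On the remaining admissible $k$, use (\ref{costoc>0}) directly. Since $J_{+m}>0$, $C_{+m}(k)\ge\frac{k}{2}I_{+m}^2$. Also $I_{+m}=2m(A+I)+I\ge 2(A+I)\ge 2I(0)$, because $A+I$ is increasing (Appendix~\ref{IJC}). Thus $C_{+m}(k)\ge 2kI(0)^2$.
\item This set lies in $k>0.4$. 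Bounding $1-(y-k)^2\ge 1-k^2$ in $B(k)=\int_0^k\frac{(y-k)\,dy}{\sqrt{y}\sqrt{1-(y-k)^2}}$ gives $B(k)\ge-\frac{4k^{3/2}}{3\sqrt{1-k^2}}$. Together with $J(k)\ge(1+k)^{-1/2}$, this shows $B+J>0$ for all $k\le 0.4$.
\end{itemize}
Hence on the uncovered set $C_{+m}(k)\ge 0.8\,I(0)^2\approx 5.5>U$, which completes your proof.
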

This concludes the proof of the optimality of the simplest extremal.

\subsection{Limit for $\gamma \rightarrow \infty$ and the solution of Problem 1}

At the limit for $\gamma \rightarrow \infty$ the value of $k$ tends to the value where $J_1(k)=0$ which is $k_{lim}\approx 0.462366051$. 
One calculates $I_1(k_{lim})=4.45256301339$ and $c=-\sqrt{k_{lim}}I_{1}=-3.0276317026$. The value of $a$ is, according to (\ref{kequiV5}), $a=I_1^2=19.82531738821$. The system of (\ref{sim1NOT}) and (\ref{sim2NOT}) with $a=2\gamma S$ gives the control $u$ that in an interval $[0,1]$
\begin{enumerate}
\item Drives the evolution operator to $e^{i\sigma_y\frac{\pi}{2}}$ with $\sigma_y$ defined by the Pauli matrices (\ref{Paulimat}). 

\item Gives zero first order sensitivity

\item Among the controls that achieve these two tasks simultaneously is the one that minimizes the energy cost $ \frac{1}{2}\int_0^{\frac{\pi}{2}} u^2(t)dt$. The minimum energy cost is given by $U\approx 4.609$. 

\end{enumerate}

Figures \ref{Fig1} and \ref{Fig2} describe the $\theta$ and $u$ functions, respectively,  in the interval $[0,1]$ respectively.

\begin{figure}[h]
\centering
\includegraphics[width=0.7\textwidth]{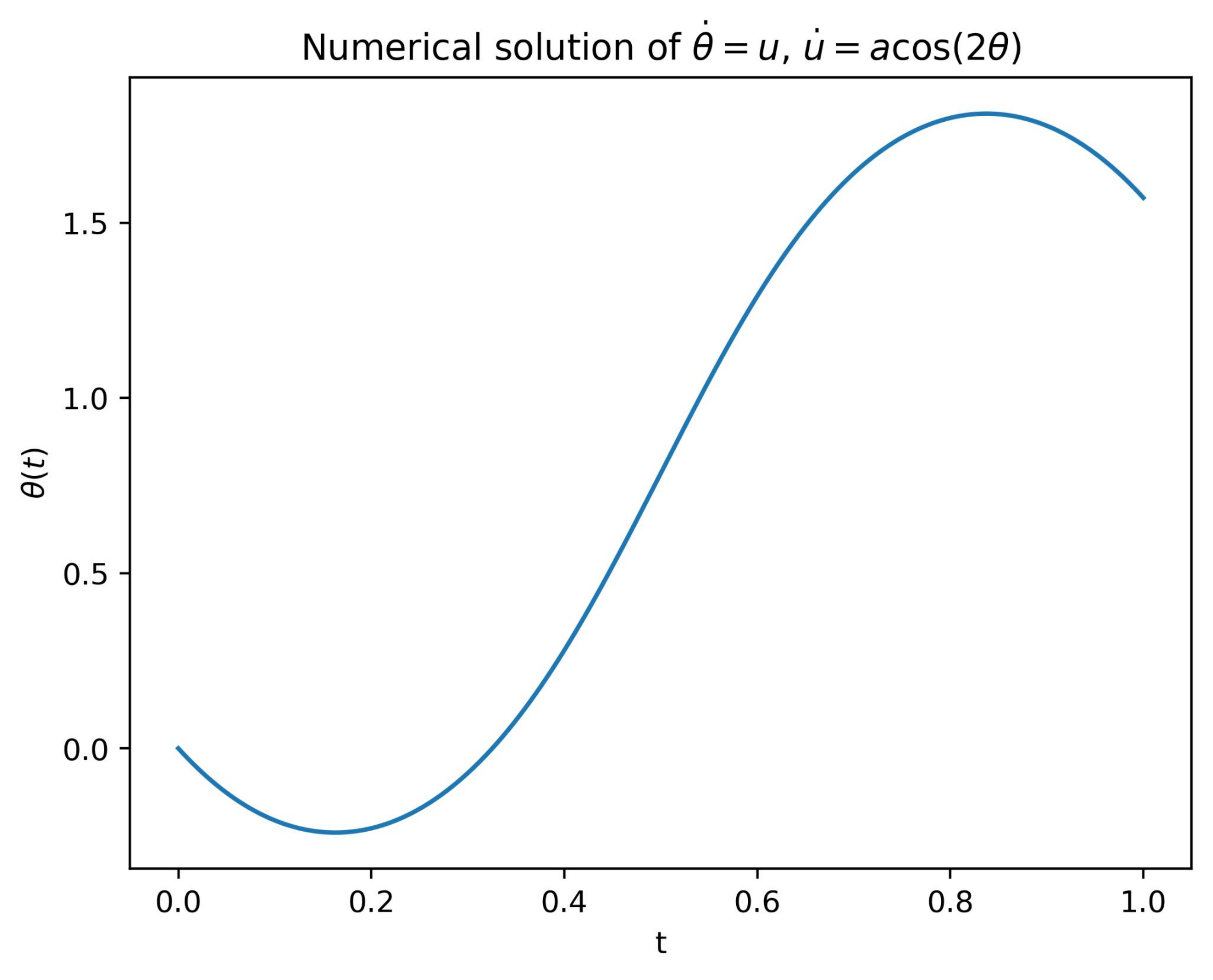}
\caption{$\theta$-trajectory for the solution of Problem 1}
\label{Fig1}
\end{figure}

\begin{figure}[h]
\centering
\includegraphics[width=0.7\textwidth]{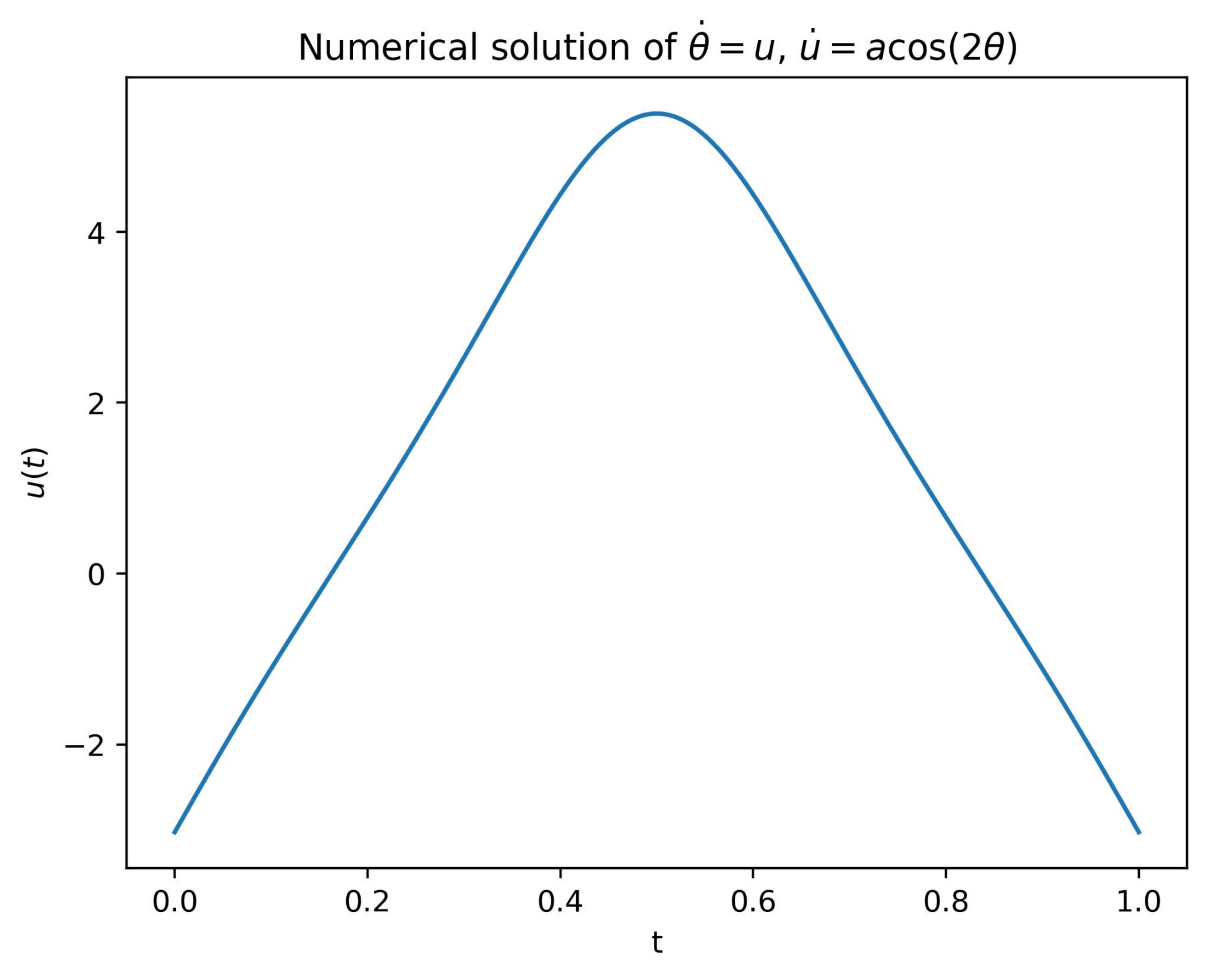}
\caption{Optimal control for the solution of Problem 1}
\label{Fig2}
\end{figure}

\section{Extension to two Quantum Bits; Optimal Cross Talk Mitigation}\label{TQB}

We know look at  the $1-\gamma$ problem for  a  model of two interacting  qubits with no enviroment. The goal is to drive the two qubits independently of each other. Therefore the interaction, often referred to as {\it cross-talk} is seen as a disturbance. We shall see that this problem  can be decoupled to two $1-\gamma$ problems for a one qubit  model. For the two separate problems we can therefore use the solution described in the previous sections.

The Hamiltonian for the model  is:
\begin{equation}\label{Hamiltonian2QB}
H=u_1\sigma_y\otimes {\bf{1}}+u_2{\bf{1}}\otimes\sigma_y+ \delta\sigma_z\otimes\sigma_z, 
\end{equation}
where $u_1$ and $u_2$ are the controls and $\delta$ is a small parameter. 
For the Hamiltonian (\ref{Hamiltonian2QB}), in relation to (\ref{Hamil}), the nominal Hamiltonian is $H_S=H_{nom}=u_1\sigma_y\otimes {\bf{1}}+u_2{\bf{1}}\otimes\sigma_y$ and $H=\sigma_z\otimes\sigma_z$.
Write  the nominal solution of the Schr\"odinger operator equation as  
$$
X_S(t)=Y_1(t)\otimes Y_2(t)
$$
where, using the results   for the 1-qubit case (see equaton (\ref{nominals})), we have, for $j=1,\,2$,
\[
Y_j(t)=\cos(\theta_j(t)){\bf{1}}+ \sin(\theta_j(t))i\sigma_y
\]
with $\theta_j(t)$ satisfying 
$
\dot \theta_j(t)=u_j(t),$ and  $\theta_j(0)=0$.

The sensitivity function of the first order  at time $t$, $Z^1_1(t)$ can be obtained   similarly to the 1-qubit case (cf., the paragraph following (\ref{nominals}))  and it is given by:
\begin{equation}\label{sensitivity2}
Z^1_1(t) 
 =\int_0^t  \left(Y^\dagger_1(s)\sigma_zY_1(s)\right) \otimes \left( Y^\dagger_2(s)\sigma_z Y_2(s)\right)ds
\end{equation}
with
$
Y^\dagger_i(s)\sigma_zY_i(s)= \cos(2\theta_i(s))\sigma_z+\sin(2\theta_i(s))\sigma_x
$, for $i=1,2. $
Thus:
\[
\begin{split}
Z^1_1(t)=\int_0^t  \Big ( &\cos(2\theta_1(s))\cos(2\theta_2(s))\sigma_z\otimes \sigma_z+
 \cos(2\theta_1(s))\sin(2\theta_2(s))\sigma_z\otimes \sigma_x + \\&
+\sin(2\theta_1(s))\cos(2\theta_2(s))\sigma_x\otimes \sigma_z+ \sin(2\theta_1(s))\sin(2\theta_2(s))\sigma_x\otimes \sigma_x \Big) ds
\end{split}
\]
We rewrite the previous equation as:
\begin{equation}\label{sensitivity2-f}
Z^1_1(t)=
S_{zz}(t)\sigma_z\otimes\sigma_z+ S_{xz}(t)\sigma_x\otimes\sigma_z+ S_{zx}(t)\sigma_z\otimes\sigma_x+ 
S_{xx}(t)\sigma_x\otimes\sigma_x, 
\end{equation}
where we have defined $S_{zz}=\int_0^t \cos(2\theta_1(s))\cos(2\theta_2(s))ds$ and similarly for the other constant $S_{ij}$, $i,j\in \{x,z\}$.
Thus 
we get that the augmented system (state+sensitivity  variables) satisfies the following  equations, which correspond to equations 
({\ref{E1}-\ref{E3}) for the 1-qubit case:
\begin{eqnarray}
&&\dot \theta_1=u_1 \qquad \qquad\qquad\qquad\quad\theta_1(0)=0\label{prova} \\
&&\dot \theta_2=u_2 \qquad \qquad\qquad\qquad\quad\theta_2(0)=0 \nonumber \\
&&\dot S_{zz}=\cos(2\theta_1)\cos(2\theta_2) \qquad  S_{zz}(0)=0 \label{prova2} \\
&&\dot S_{zx}= \cos(2\theta_1)\sin(2\theta_2) \qquad  S_{zx}(0)=0 \nonumber \\
&&\dot S_{xz}=\sin(2\theta_1) \cos(2\theta_2)  \qquad  S_{xz}(0)=0 \nonumber \\\
&&\dot S_{xx}=\sin(2\theta_1) \sin(2\theta_2)  \qquad  S_{xx}(0)=0\label{prova1}\
\end{eqnarray}
The Problem  $1-\gamma$ in this context is to drive  the state $\theta_1\otimes\theta_2$ from $0\otimes 0$ to a desired value $\theta_{1{des}}\otimes \theta_{2{des}}$ in finite time $T$ minimizing 
the cost (\ref{Jgamma}).  The  energy cost $C_E$ in this context is  
\begin{equation}\label{energycost2}
C_E=\frac{1}{2}\int_0^T\left( u_1^2(t)+u_2^2(t)\right)dt. 
\end{equation}
The sensitivity cost $C^1_S$ is
$ C^1_S=\frac{1}{2} \|Z^1_1(T)\|^2$, which we can write in Lagrange form as:
\[
 C^1_S=\frac{1}{2}\left( S_{zz}^2(T)+S_{xx}^2(T)+S_{xz}^2(T)+S_{zx}^2(T) \right)=\frac{1}{2} \int_0^T\frac{d}{dt} (S_{zz}^2(t)+S_{zx}^2(t)+S_{xz}^2(t)+S_{xx}^2(t))dt=
\]
\begin{equation}\label{senscost2}
\begin{split}
\int_0^T \Big( S_{zz}(t)\cos(2\theta_1(t))\cos(2\theta_2(t)) &+S_{zx}(t)\cos(2\theta_1(t))\sin(2\theta_2(t))+ \\
 & S_{xz}(t)\sin(2\theta_1(t))\cos(2\theta_2(t))
+S_{xx}(t)\sin(2\theta_1(t))\sin(2\theta_2(t))\Big)dt.
\end{split}
\end{equation}

\subsection{Necessary conditions of optimality} 

We now apply, as for the 1-qubit case,  the conditions of the Pontryagin maximum principle, 
in this case we write the costate as $(\lambda_{\theta_1},\lambda_{\theta_2}, \lambda_{zz},\lambda_{zx}, \lambda_{xz},\lambda_{xx})^t$,
so that, from (\ref{prova})-(\ref{prova1}), (\ref{energycost2}), and (\ref{senscost2})  we get, for the PMP control Hamiltonian, 
\begin{equation}\label{Hhat-2q}
\begin{split}
\hat H & = \lambda_{\theta_1} u_1+\lambda_{\theta_2} u_2+ \lambda_{zz}  \cos(2\theta_1)\cos(2\theta_2)+
\lambda_{zx} \cos(2\theta_1)\sin(2\theta_2)+\lambda_{xz} \sin(2\theta_1)\cos(2\theta_2)+ \\
& + \lambda_{xx} \sin(2\theta_1)\sin(2\theta_2)+ 
\mu_0\Big( \frac{1}{2} u_1^2+\frac{1}{2} u_2^2+ \gamma(S_{zz} \cos(2\theta_1)\cos(2\theta_2) +
\\ 
& +
S_{zx} \cos(2\theta_1)\sin(2\theta_2)+S_{xz} \sin(2\theta_1)\cos(2\theta_2)+S_{xx} \sin(2\theta_1)\sin(2\theta_2)\Big). 
\end{split}
\end{equation}
Now we let, for $i,j\in \{z,x\}$
\begin{equation}\label{Azx}
a_{ij}=2\left(\lambda_{ij}+\mu_0\gamma S_{ij}\right). 
\end{equation}
With these definitions, the costate equations (cf. (\ref{equala})) are:
\begin{equation}\label{costate-equa-2}
\begin{array}{l}
\dot \lambda_{\theta_1}=
 a_{zz}\sin(2\theta_1)\cos(2\theta_2) +  a_{zx}\sin(2\theta_1)\sin(2\theta_2)- 
 a_{xz}\cos(2\theta_1)\cos(2\theta_2)-  a_{xx}\cos(2\theta_1)\sin(2\theta_2),   \\
\dot \lambda_{\theta_2}=
 a_{zz}\cos(2\theta_1)\sin(2\theta_2) -  a_{zx}\cos(2\theta_1)\cos(2\theta_2) +
a_{xz}\sin(2\theta_1)\sin(2\theta_2)-  a_{xx}\sin(2\theta_1)\cos(2\theta_2),  \\
 \dot\lambda_{zz}=-\mu_0\gamma \cos(2\theta_1)\cos(2\theta_2), \\
 \dot\lambda_{zx}=-\mu_0\gamma \cos(2\theta_1)\sin(2\theta_2), \\
 \dot\lambda_{xz}=-\mu_0\gamma \sin(2\theta_1)\cos(2\theta_2), \\
 \dot\lambda_{xx}=-\mu_0\gamma \sin(2\theta_1)\sin(2\theta_2). \\
\end{array}
\end{equation} 
These equations have to be integrated together with (\ref{prova})-(\ref{prova1})  with boundary conditions given by the transversality  conditions, which, in this case,  give $\lambda_{i,j}(T)=0$  for $i,j\in \{z,x\}$, together with  $
\left(\theta_1(T), \theta_2(T)\right)=\left((\theta_{1{des}},(\theta_{2{des}} \right)\mod 2\pi $  with the control maximizing $\hat H$ in (\ref{Hhat-2q}).  Each possible trajectory would give an optimal candidate for which to compute and compare the costs.  

Notice that from the last four equations in (\ref{costate-equa-2}) and equations (\ref{prova2})-(\ref{prova1})  a calculation similar to the one at the beginning of subsection \ref{simplyR} 
 gives that $a_{ij}$ in (\ref{Azx}), are constants. We also have the corresponding of Proposition \ref{DNEx} for the two qubits problem. 
 
 \begin{proposition}\label{DNEx2}
 For the two qubits Problem $1-\gamma$,  abnormal extremals do not exists. 
 \end{proposition}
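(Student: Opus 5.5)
We argue by contradiction, copying the argument of Proposition \ref{DNEx}. Suppose an extremal of the two qubits Problem $1-\gamma$ is abnormal, so that $\mu_0=0$ in (\ref{Hhat-2q}). Then $\hat H$ depends on $u_1,u_2$ only through the linear term $\lambda_{\theta_1}u_1+\lambda_{\theta_2}u_2$. Since the controls are unconstrained ($u_1,u_2\in\mathbb{R}$), this term has a finite maximum only if $\lambda_{\theta_1}(t)=\lambda_{\theta_2}(t)=0$ for almost every $t$. By continuity of the costate, these two components then vanish identically on $[0,T]$.

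Next we use the last four equations in (\ref{costate-equa-2}). With $\mu_0=0$ they give $\dot\lambda_{ij}\equiv 0$ for $i,j\in\{z,x\}$, so each $\lambda_{ij}$ is constant. The transversality conditions give $\lambda_{ij}(T)=0$, because the final values of the sensitivity variables are free. Hence $\lambda_{zz}\equiv\lambda_{zx}\equiv\lambda_{xz}\equiv\lambda_{xx}\equiv 0$.

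Combining the two steps, the whole costate vector $(\lambda_{\theta_1},\lambda_{\theta_2},\lambda_{zz},\lambda_{zx},\lambda_{xz},\lambda_{xx})$ vanishes identically while $\mu_0=0$. This contradicts the nontriviality condition of the Pontryagin Maximum Principle (Theorem \ref{PMP}), so abnormal extremals do not exist.

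No step is genuinely hard; the most delicate one is the first, which needs the controls to range over all of $\mathbb{R}^2$ for the maximization to force $\lambda_{\theta_i}\equiv 0$. One might worry that the equations for $\dot\lambda_{\theta_i}$ in (\ref{costate-equa-2}) with $\mu_0=0$ would then impose further conditions on the $a_{ij}$. They are not needed: $\lambda_{ij}\equiv 0$ already follows from transversality alone. The only thing to check is the boundary conditions. The endpoint is constrained only in $(\theta_1,\theta_2)$ (modulo $2\pi$) and free in the four sensitivity components, which is exactly what gives $\lambda_{ij}(T)=0$ for all four.
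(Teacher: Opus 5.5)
Your proof is correct and follows the paper's own argument essentially step for step. With $\mu_0=0$, maximization over the unconstrained $u_1,u_2$ forces $\lambda_{\theta_1}\equiv\lambda_{\theta_2}\equiv 0$; the last four costate equations make the $\lambda_{ij}$ constant; transversality then makes them vanish, so the entire costate is zero, contradicting the nontriviality condition of the PMP.
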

 
 \begin{proof}
Let us assume, by the way of contradiction,  that there exists an abnormal extremal, i.e. $\mu_0=0$.  Then maximization of  the control Hamiltonian $\hat H$ in (\ref{Hhat-2q}) 
 with respect to $u_i$ gives that $\lambda_{\theta_i}$ must be identically zero. 
Furthermore from the last 4 equations we have that $\lambda_{ij}$  are constants.  Thus from  the transversality conditions we have $\lambda_{ij}(t)= \lambda_{i,j}(T)=0$.
However, $\mu_0=0$, and the costate vector is zero,  contradicts the Pontryagin Maximum Principle  of Theorem \ref{PMP}. 
\end{proof}

In view of the above proposition we can and will  set  $\mu_0=-1$, in the following. 

\subsection{Reduction to two 1-qubit  problems}

Since  $a_{ij}$ in (\ref{Azx}) are constants, and $\mu_0=-1$, using  the transversality conditions which give $\lambda_{ij}(T)=0$, we have $a_{ij}=-2\gamma S_{ij}(T)$.
Adding and subtracting  the first two equations in (\ref{costate-equa-2}), we have:
\begin{equation}\label{somma}
\begin{split}
&\left( \dot{\lambda}_{\theta_1}+ \dot{\lambda}_{\theta_2}\right)=-2\gamma \left(S_{zz}(T)-S_{xx}(T) \right) \sin (2(\theta_1+\theta_2))+ 2\gamma  \left( S_{xz}(T)+S_{zx}(T) \right) \cos (2(\theta_1+\theta_2))\\
&\left( \dot{\lambda}_{\theta_1}- \dot{\lambda}_{\theta_2}\right)=-2\gamma \left(S_{zz}(T)+S_{xx}(T) \right)  \sin (2(\theta_1-\theta_2))+
2\gamma  \left( S_{xz}(T)-S_{zx}(T) \right)\cos (2(\theta_1-\theta_2))
 \end{split}
\end{equation}
 From the maximization of the PMP Hamiltonian  (\ref{Hhat-2q}) (recall that $\mu_0=-1)$ with respect to $u_1$ and $u_2$, we get $u_i=\lambda_{\theta_i}$, and therefore we can identify $u_i$ with $\lambda_{\theta_i}$ and eliminate $\lambda_{\theta_i}$ from the equations,  (\ref{prova})-(\ref{prova1}) and (\ref{costate-equa-2}), similarly to  the case of one  qubit.  Also define $c_1:=u_1(0)$ and $c_2:=u_2(0)$. 
 
  Now we set $\theta^{\pm}=\theta_1 \pm \theta_2$, $u^{\pm}=u_1 \pm u_2$, $S_z^{\pm}=S_{zz}\mp S_{xx}$ and $S^{\pm}_x=S_{xz}\pm S_{zx}$, and $c^{\pm}=c_1\pm c_2$.
Using  (\ref{somma}), we can write the  equations for the two qubits case,  as two decoupled  systems (\ref{sim1})-(\ref{aX}) for two qubit cases, as follows:
\begin{equation}{\label{piu}}
\begin{split}
 &\dot{\theta}^{\pm} = u^{\pm}, \qquad  \dot{\theta}^{\pm}(0)=0  \\
&\dot{u}^{\pm}  =-2\gamma S^{\pm}_z(T)  \sin (2\theta^{\pm})+  2\gamma  S^{\pm}_{x}(T) \cos (2\theta^{\pm}), \qquad  u^{\pm}(0)=c^{\pm} \\
&\dot{S}_z^{\pm} =  \cos (2\theta^{\pm}), \qquad   {S}_z^{\pm}(0)=0 \\
&\dot{S}_x^{\pm} =  \sin (2\theta^{\pm}), \qquad   {S}_x^{\pm}(0)\\
 \end{split}
 \end{equation}
with final condition $\theta^{\pm}(T)=(\theta_1)_{des}{\pm}(\theta_2)_{des}$.

Therefore the equations that describe the extremals splits in two sets of equations identical to the ones for the one qubit case (\ref{sim1})-({\ref{sim2})


We can rewrite the cost (\ref{Jgamma}) with (\ref{energycost2}) (\ref{senscost2}) as 
\begin{equation}\label{costsplit}
C(\gamma)=\frac{1}{2}\int_0^T\frac{(u^+(t))^2}{2}+\frac{(u^-(t))^2}{2} dt+S_{zz}^2(T)+S_{xx}^2(T)+S_{xz}^2(T)+S_{zx}^2(T). 
\end{equation}
However notice that it follows from the definitions that 
$S_{zz}^2(T)+S_{xx}^2(T)+S_{xz}^2(T)+S_{zx}^2(T)=
(S_z^{\pm}(T))^2+(S_x^{\pm}(T))^2 \pm 2S_{zz}S_{xx} \mp 2S_{xz}S_{zx}$. 
This is valid with both 
sign combinations. Therefore summing the two equalities we have in (\ref{costsplit})
\begin{equation}\label{costsplit2}
2C(\gamma)=\left(\int_0^T u^+(t)dt+(S_z^+(T))^2+ (S_x^+(T))^2 \right)+\left(\int_0^T u^-(t)dt+(S_z^-(T))^2+ (S_x^-(T))^2 \right). 
\end{equation}
At this point, the problem is to choose $c^+,$ $S_z^+(T)$, $S_x^+(T)$ (+ problem) as well as   $c^-,$ $S_z^-(T)$, $S_x^+(T)$ (- problem) so that the solutions of equation (\ref{piu}) are consistent with the final conditions and the cost in (\ref{costsplit2}) is    minimized.  Since the choice of the parameters in the $+$ problem does not affect the second part (-) of the cost and viceversa   the choice of the parameters in the $-$ problem does not affect the first part (+) of the cost in (\ref{costsplit2}), the problem can be solved by solving two independent one qubit problems as described in the previous sections.

\section{Concluding Remarks}\label{Conclu}

The problem of designing control laws that are robust to uncertainty in the model parameters and (weak) interaction with the environment can be naturally formulated within the framework of geometric 
optimal control. Central to this analysis is the concept of {\it sensitivity functions} which are the coefficients in a Taylor series expansion with respect to the unknown parameters and weak interaction. The resulting system of 
differential equations is an augmented control system where the sensitivity portion has to be driven, ideally,  from zero to zero. This class of nonlinear systems needs to be studied in terms of controllability and the results would have an immediate interpretation in terms of control robustness. Besides, minimizing the sensitivity and the distance between the final target and the desired one, an optimal control problem may incorporate costs describing other quantities of interest such as energy and time. 

In this paper we have formulated a class of optimal robust control problem (Problem $n$ and Problem $n-\gamma$) where the cost can be taken as the weighted sum of the energy of the control and the sensitivity of order $n$ while all sensitivity of order up to $n-1$ are forced to zero. We explictly solved the simplest nontrivial cases of Problems 1 and $1-\gamma$ for the fundamental system  of one qubit. This problem is mathematically rich but its solution is particularly simple. As compared to existing protocols which zero the first order sensitivity, the solution we found is simpler and smooth and it minimizes the energy of the control. Furthermore, we have shown how this solution can be used in a two qubits cross talk mitigation problem which splits into two separate and independent one qubits problems. 

In future research  it will be of interest to explore the implication of these results on more general controlled qubits networks as well as extending the approach to optimal control problems involving higher order sensitivities as well as more general higher dimensional quantum systems. }


 \section*{Acknowledgement} This material is based upon work supported   in part by the U. S. Army Research Laboratory and the U. S. Army Research Office under contract/grant number W911NF2310255. 
 D. D'Alessandro also would like to acknowledge support from a Scott Hanna Professorship at Iowa State University.

\appendix
\numberwithin{equation}{section}

\section{Review of the necessary conditions of geometric optimal control theory}\label{Pontriag}

In geometric optimal control, first–order necessary conditions for optimality are provided by the Pontryagin Maximum Principle (PMP); see, for instance, \cite{Agrachev} and \cite{BoscaRev} for applications to quantum control problems.  
We recall here the formulation of the PMP for a \emph{Lagrange-type} optimal control problem, namely one in which the performance index consists solely of an integral cost over a fixed time interval $[0,T]$, as is the case in the present work.

For our purposes it is sufficient to state the result in the Euclidean setting $\mathbb{R}^n$, avoiding the differential–geometric formalism adopted in, e.g., \cite{Agrachev}.%
\footnote{A more general statement of the Pontryagin Maximum Principle in $\mathbb{R}^n$ can be found in \cite{Flerish}.}
We therefore consider a controlled dynamical system of the form
\begin{equation}\label{gensys}
\dot x = f(x,u), \qquad x(0)=x_0,
\end{equation}
where $x(t)\in\mathbb{R}^n$ and the control $u(t)$ is a measurable and bounded function taking values in a given set $U$.

The cost functional to be minimized is
\begin{equation}\label{costtoB}
C=\int_0^T L(x,u)\,dt,
\end{equation}
and the terminal state $x(T)$ is required to belong to a prescribed hypersurface ${\cal S}\subset\mathbb{R}^n$.

Within this framework, the Pontryagin Maximum Principle can be stated as follows.

\begin{theorem}[Pontryagin Maximum Principle]\label{PMP}
Let $(x^*,u^*)$ be an optimal trajectory–control pair for the problem above. Then there exist a vector-valued function $\lambda(t)$ and a scalar constant $\mu_0\le 0$, not both identically zero, such that, defining the PMP control Hamiltonian
\begin{equation}\label{PMPHamiltonian}
\hat H(\lambda,x,u)=\lambda^T f(x,u)+\mu_0 L(x,u),
\end{equation}
the following conditions hold:
\begin{enumerate}
\item For almost every $t\in[0,T]$, the optimal control $u^*(t)$ maximizes the PMP control Hamiltonian, i.e.
\begin{equation}\label{maximizcond}
\hat H(\lambda(t),x^*(t),u^*(t))
\ge
\hat H(\lambda(t),x^*(t),v),
\qquad \forall\, v\in U.
\end{equation}

\item The adjoint variable $\lambda(t)$ satisfies the costate equation, coupled with \eqref{gensys},
\begin{equation}\label{equala}
\dot\lambda^T
=
-\lambda^T f_x(x^*,u^*)
-\mu_0 L_x(x^*,u^*)
=
-\frac{\partial \hat H}{\partial x}(\lambda,x^*,u^*).
\end{equation}

\item The PMP control Hamiltonian $\hat H(\lambda(t),x^*(t),u^*(t))$ remains constant along the optimal trajectory for all $t\in[0,T]$.

\item (Transversality condition)
For every vector $\vec v$ tangent to the terminal hypersurface  ${\cal S}$ at $x^*(T)$, one has
\begin{equation}\label{transversalitycond}
\lambda^T(T)\,\vec v = 0.
\end{equation}
\end{enumerate}
\end{theorem}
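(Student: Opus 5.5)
The plan is the classical needle-variation argument, after rewriting the Lagrange problem in Mayer form. Adjoin a cost coordinate $x^0$ with $\dot x^0=L(x,u)$, $x^0(0)=0$, and work with the augmented state $\hat x=(x^0,x)\in\mathbb{R}^{n+1}$ and dynamics $\hat f=(L,f)$. Optimality of $(x^*,u^*)$ then says the following. Among all trajectories of the augmented system whose $x$-component ends on ${\cal S}$, none reaches a point with $x^0(T)<x^{0*}(T)$. The multiplier $\mu_0$ will appear as the $x^0$-component of the augmented costate. I assume $f$ and $L$ are $C^1$ in $x$ and continuous in $(x,u)$, which covers all the systems in the paper.

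\textbf{Step 1: needle variations.} Fix a Lebesgue point $\tau\in(0,T)$ of $u^*$, a value $v\in U$ and a small $\epsilon>0$. Replace $u^*$ by $v$ on $[\tau-\epsilon,\tau]$. Standard ODE estimates give
\begin{equation*}
\hat x_\epsilon(T)=\hat x^*(T)+\epsilon\,\Phi(T,\tau)\bigl[\hat f(\hat x^*(\tau),v)-\hat f(\hat x^*(\tau),u^*(\tau))\bigr]+o(\epsilon),
\end{equation*}
where $\Phi$ is the fundamental matrix of the linearization $\dot\xi=\hat f_{\hat x}(\hat x^*,u^*)\xi$. Combining finitely many needles at distinct Lebesgue points yields the convex cone $K$ generated by these vectors. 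Its first-order approximation property is: every vector in $K$ is realized, up to $o(\epsilon)$ and continuously in the needle parameters, by an admissible endpoint.

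\textbf{Step 2: separation.} This is where I expect the main difficulty, because the terminal constraint is a hypersurface rather than a free endpoint. Let $T{\cal S}$ be the tangent space of ${\cal S}$ at $x^*(T)$, and let $e_0=(-1,0,\dots,0)$ be the cost-decreasing direction. I claim $K+(\{0\}\times T{\cal S})$ does not contain $e_0$ in its interior. Suppose it did. Then a Brouwer fixed-point (or open-mapping) argument would produce, for small $\epsilon$, a needle-varied control whose endpoint lies exactly on ${\cal S}$ with strictly smaller cost, contradicting optimality. Making this rigorous requires the continuity of the endpoint map in the needle parameters from Step 1, together with a local chart straightening ${\cal S}$. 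Granted the claim, a separating hyperplane gives a nonzero $\hat\lambda_T=(\mu_0,\lambda_T)$ satisfying three conditions: $\hat\lambda_T^T w\le 0$ for all $w\in K$, $\lambda_T^T\vec v=0$ for all $\vec v\in T{\cal S}$ (since $T{\cal S}$ is a subspace), and $\mu_0\le 0$ (from $\hat\lambda_T^T e_0\ge 0$).

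\textbf{Step 3: costate, maximum condition, constancy.} Define $\hat\lambda(t)^T=\hat\lambda_T^T\Phi(T,t)$. Then $\hat\lambda$ solves $\dot{\hat\lambda}^T=-\hat\lambda^T\hat f_{\hat x}$. Because $\hat f$ does not depend on $x^0$, the component $\mu_0$ stays constant and the remaining components satisfy (\ref{equala}). The inequality on $K$, read at each needle, is exactly the maximum condition (\ref{maximizcond}) at almost every $\tau$. The transversality condition (\ref{transversalitycond}) comes from Step 2, and nontriviality of $(\mu_0,\lambda)$ comes from $\hat\lambda_T\ne 0$.

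For the constancy of $\hat H$ along the optimal pair, define $h(t)=\max_{v\in U}\hat H(\lambda(t),x^*(t),v)$, which by the maximum condition equals $\hat H$ along $(x^*,u^*)$ almost everywhere. Since $f$ and $L$ do not depend on time, $h$ is Lipschitz. Comparing $h(t+s)$ with $\hat H(\lambda(t+s),x^*(t+s),u^*(t))$ from both sides and using the state and costate equations gives $\dot h=0$ almost everywhere, so $h$ is constant.
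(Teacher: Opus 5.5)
The paper gives no proof of this statement. Appendix \ref{Pontriag} quotes the Pontryagin Maximum Principle as a standard result and defers to the geometric control literature (Agrachev--Sachkov, Fleming--Rishel). What you outline is the classical needle-variation proof from those sources, and its logic is sound. In particular, the following are all correct:
\begin{itemize}
\item the Mayer reformulation;
\item the sign bookkeeping that yields $\mu_0\le 0$ from $\hat\lambda_T^T e_0\ge 0$;
\item the constancy of the $x^0$-component of the costate, because $\hat f$ does not depend on $x^0$;
\item reading $\hat\lambda_T^T w\le 0$ on single needles as the maximum condition.
\end{itemize}
Writing the argument out gives two things the bare statement lacks. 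First, explicit regularity hypotheses on $f$ and $L$. Second, nothing in your Step 2 uses codimension one, since $T{\cal S}$ enters only as a subspace annihilated by $\lambda(T)$. That matters here, because the paper applies the theorem to codimension-two terminal sets, which the hypersurface version as stated does not literally cover. These are the circle ${\cal S}$ in the proof of Proposition \ref{equainfin}, and the two-qubit terminal set with $\theta_1(T),\theta_2(T)$ fixed.

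Three points need tightening.

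First, the Brouwer step, which you rightly flag, is the real content of the theorem and is only asserted. It needs the $o(\epsilon)$ remainder of Step 1 to be uniform for needle parameters in a compact simplex. Then the rescaled endpoint map, composed with local defining functions of ${\cal S}$, converges uniformly to a linear map. That linear map carries the simplex onto a set containing a ball around a point of the cost-decreasing ray.

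Second, nonnegative combinations of needle vectors that share a base point $\tau$ but use different values $v$ are not produced by needles at distinct points. Stack them on consecutive subintervals of $[\tau-\epsilon\sum_i\beta_i,\tau]$.

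Third, the paper's controls are unconstrained ($U=\mathbb{R}$ or $\mathbb{R}^2$), and then $\max_{v\in U}\hat H$ need not be finite or Lipschitz in general. Use instead
$h(t)=\sup\{\hat H(\lambda(t),x^*(t),v): v\in U,\ |v|\le\|u^*\|_\infty\}$, which still equals $\hat H$ along $(x^*,u^*)$ almost everywhere. Its Lipschitz continuity comes from $\hat H$ being Lipschitz in $(\lambda,x)$ uniformly for bounded $v$, together with the Lipschitz continuity of $\lambda$ and $x^*$. It does not come from time-independence. Time-independence is what makes $\frac{d}{ds}\hat H(\lambda(t+s),x^*(t+s),u^*(t))|_{s=0}$ vanish by the state and costate equations, rather than equal $\partial_t\hat H$.
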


Any pair $(x^*,u^*)$ satisfying the above conditions is referred to as an \emph{extremal} and it is  a {\it candidate optimal solution}.  
If the conditions are satisfied with $\mu_0=0$, the extremal is said to be \emph{abnormal}; in this case the necessary conditions do not depend on the cost function $L$.  
Otherwise, the extremal is called \emph{normal}, and the scalar multiplier $\mu_0$ can be normalized to $\mu_0=-1$.  
The function $\lambda(t)$ is known as the \emph{costate}.

\section{Proofs of formulas (\ref{optimal11}) and (\ref{optimal12})}\label{proofop}

\subsection{Proof of formula (\ref{optimal11})}

Assume that the control $u:=u(t)$ is zero at some times $0<t_i<1$,  with  $0<t_{1}<\cdots<s_{2m}<1$ be the $2m$ times with $u(t_i)=0$.
From  the constant of motion (equation (\ref{const1})),    we have that $u(t_i)=0$ if and only if
$$
\sin(2\theta(t_i))=-\frac{c^2}{2\gamma S}.
$$

We are considering the case $u(0)=c<0$, $\dot{u}(t)<0$ for $0<t<t_1$, $t_i<t<t_{i+1}$  when $i<2m$ is even, and $t_{2m}<t<1$, and $u(t)>0$ in the other intervals. 
Since $c<0$ we have:
\[
\theta(t_i)=-\frac{1}{2}\arcsin \left( \frac{c^2}{2\gamma S}\right) \  \   \ \text{ if $i$ is odd} \qquad
\theta(t_i)=\frac{\pi}{2}+\frac{1}{2}\arcsin \left( \frac{c^2}{2\gamma S}\right) \  \   \ \text{ if $i$ is even}
\]
Using the fact that $\dot{\theta}(t)=\pm{\sqrt{c^2+2\gamma S \sin(2\theta(t))}}$ according to the sign of $u(t)$, and denoting by $f(t)= \frac{\dot{\theta}(t)}{\sqrt{c^2+2\gamma S \sin(2\theta(t))}}$ the following equation hold:
\begin{equation}\label{optimal1}
-\int_0^{t_1} f(t)dt + \int_{t_1}^{t_2} f(t)dt  - \int_{t_2}^{t_3} f(t)dt +\cdots -\int_{t_{2m}}^{1}f(t)dt=1
\end{equation}
Let $\alpha=\arcsin\left(\frac{c^2}{2\gamma S}\right)$,  and  performing the change of variables, $x= 2\theta(t)$, we have, denoting by $g(\sin(x))= \frac{1}{\sqrt{c^2+2\gamma S \sin(x)}}$ 
\begin{equation}\label{optimal2}
-\int_0^{-\alpha} g(\sin(x)) dx  + \int_{-\alpha}^{\pi+\alpha} g(\sin(x)) dx  - \int_{\pi+\alpha}^{-\alpha} g(\sin(x))dx +\cdots -\int_{\pi+\alpha}^{\pi}g(\sin(x)) dx =2
\end{equation}
Notice that:
\[
\begin{split}
 \int_{-\alpha}^{\pi+\alpha} g(\sin(x)) dx &= \int_{-\alpha}^{0} g(\sin(x)) dx+ \int_{0}^{\pi} g(\sin(x))dx +\int_{\pi}^{\pi+\alpha}g(\sin(x))dx \\
 &= 2 \int_{-\alpha}^{0} g(\sin(x)) dx+\int_{0}^{\pi} g(\sin(x)) dx
 \end{split}
 \]
Thus equation (\ref{optimal2}) can be rewritten as:
\begin{equation}{\label{optimal3}}
4m \int_{-\alpha}^{0} g(\sin(x)) dx+ (2m-1) \int_{0}^{\pi} g(\sin(x))dx=2, 
\end{equation}
which is, 
$$
2m \int_{-\arcsin(\frac{c^2}{2\gamma S})}^{0} \frac{dx}{\sqrt{c^2+2\gamma S \sin(x)}}+(2m-1)\int_{0}^{\frac{\pi}{2}}\frac{dx}{\sqrt{c^2+2\gamma S \sin(x)}}=1.
$$
Using the same argument, we can also  write  the equation $\dot{S}_x=\sin(2\theta)$ with conditions $S_x(0)=0$ and $S_x(1)=S$, in the following integral form:
$$
2m \int_{-\arcsin(\frac{c^2}{2\gamma S})}^{0} \frac{\sin(x)}{\sqrt{c^2+2\gamma S \sin(x)}}dx +(2m-1)\int_{0}^{\frac{\pi}{2}}\frac{\sin(x)}{\sqrt{c^2+2\gamma S \sin(x)}}dx=S.
$$
These two equations  are the integral form of  equations (\ref{sim1NOT}) and  (\ref{sim4NOT}) together with their desired final condition, assuming $c<0$, with $2m$ switches, and, when $m=1$, they coincide with  equations (\ref{gammamaggiore2}) and  (\ref{gammamaggiore3}).

\subsection{Proof of formula (\ref{optimal12})}

Assume $\dot{u}(t)>0$ for $0<t<t_1$, $\quad t_i<t<t_{i+1}$  when $i<2m$ is even, and $t_{2m}<t<1$, and $u(s)>0$ in the other intervals. 
Since $c\geq0$ we have:
\[
\theta(t_i)= \frac{\pi}{2}+\frac{1}{2}\arcsin \left( \frac{c^2}{2\gamma S}\right) \  \   \ \text{ if $i$ is odd} \qquad
\theta(t_i)=-\frac{1}{2}\arcsin \left( \frac{c^2}{2\gamma S}\right)\  \   \ \text{ if $i$ is even.}
\]
So, in  this case, using the fact that $\dot{\theta}(t)=\pm{\sqrt{c^2+2\gamma S \sin(2\theta(t))}}$ according to the sign of $u(t)$, we obtain an equation which is similar to (\ref{optimal1}) but with opposite signs:
\begin{equation}\label{optimal6}
 \int_0^{t_1} f(t)dt -\int_{t_1}^{t_2} f(t)dt  + \int_{t_2}^{t_3} f(t)dt +\cdots +\int_{t_{2m}}^{1}f(t)dt=1
\end{equation}
As before we let $\alpha=\arcsin\left(\frac{c^2}{2\gamma S}\right)$,  and  performing the change of variables, $x= 2\theta(t)$, we have, denoting by $g(\sin(x))= \frac{1}{\sqrt{c^2+2\gamma S \sin(x)}}$ 
\begin{equation}\label{optimal7}
\int_0^{\pi+\alpha} g(\sin(x)) dx -\int_{\pi+\alpha}^{-\alpha}g(\sin(x)) dx  + \int^{\pi+\alpha}_{-\alpha} g(\sin(x))dx -\cdots +\int_{-\alpha}^{\pi}g(\sin(x)) dx =2
\end{equation}
We can, as before (see equation (\ref{optimal3})), rewrite this equation in terms of the $\int_{-\alpha}^0$ and $\int_0^{\pi}$ and we get:
\[
4m \int_{-\alpha}^{0} g(\sin(x)) dx+ (2m+1) \int_{0}^{\pi} g(\sin(x))dx=2.
\]
This is equivalent to:
\begin{equation}\label{optimal9}
2m \int_{-\alpha}^{0} \frac{dx}{\sqrt{c^2+2\gamma S \sin(x)}}+(2m+1)\int_{0}^{\frac{\pi}{2}}\frac{dx}{\sqrt{c^2+2\gamma S \sin(x)}}=1.
\end{equation}
Once again we can use the same argument, starting with the equation $\dot{S}_x=\sin(2\theta)$ with conditions $S_x(0)=0$ and $S_x(1)=S$, and we get   the following integral form that the extremals have to satisfy. 
\begin{equation}\label{optimal10}
2m \int_{-\alpha}^{0} \frac{\sin(x)}{\sqrt{c^2+2\gamma S \sin(x)}}dx +(2m+1)\int_{0}^{\frac{\pi}{2}}\frac{\sin(x)}{\sqrt{c^2+2\gamma S \sin(x)}}dx=S.
\end{equation}
Notice that these two equations,  are, again,  the integral form of  equations (\ref{sim1NOT}) and  (\ref{sim4NOT}) together with their final condition, assuming $c\geq 0$, and, when  $m=0$ are exactly equations (\ref{sim5NOT}) and  (\ref{sim6NOT}).

\section{The functions $I_m$, $J_m$ and $C_m$}\label{IJC}

\subsection{Definitions: $I_m$ and $J_m$}

Recall the four functions of $k \in (0,1)$, defined in equations (\ref{IandJ}) and (\ref{definizioni}) denoting  by $s:=\sin(x)$
\begin{align*}
A(k) &:= \int_{-\arcsin(k)}^0 \frac{dx}{\sqrt{k+s}}=  \sqrt{k}\int_0^{\pi/2}\frac{\sqrt{1+s}}{\sqrt{1-k^2s^2}}\,dx,\\
I(k) &:= \int_0^{\pi/2}\frac{dx}{\sqrt{k+s}},\\
B(k) &:=\int_{-\arcsin(k)}^0 \frac{s}{\sqrt{k+s}}dx=  -k\sqrt{k}\int_0^{\pi/2}\frac{s\sqrt{1+s}}{\sqrt{1-k^2s^2}}\,dx,\\
J(k) &:= \int_0^{\pi/2}\frac{s}{\sqrt{k+s}}\,dx.
\end{align*}
The equality in the definition of $A(k)$ is obtained by making a triple substitution which eliminates the singularity. In particular, define $k+\sin(x)=2kt$ and the integral becomes 
$A(k)=\sqrt{2k} \int_0^{\frac{1}{2}} \frac{dt}{\sqrt{t}{\sqrt{1-k^2(2t-1)^2}}}$. Defining $u=:1-2t$, the integral becomes $A(k):=\sqrt{k}\int_0^1 \frac{du}{\sqrt{1-u} \sqrt{1-k^2 u^2}}$, which, with an extra change of coordinates $u:=\sin(x)$ gives the third one in the definition of $A(k)$. Notice that this expression is also better suited for numerical integration because for a fixed $k \in (0,1)$ there is no singularity. The equality in the definition of $B=B(k)$ is obtained similarly with the same sequence of substitutions.

For each integer $m\geq 1$,  we consider the functions defined in  (\ref{optimal11})):
\begin{equation}\label{ImJmdef}
I_m(k)=2m A(k) +(2m-1)I(k) ,   \ \ \  J_m(k)=2m B(k) +(2m-1)J(k).
\end{equation}

\subsection{Monotonicity  of $I_m$ and $J_m$} 

We remark that because of (\ref{equat7}) we are only interested in the range of $k$ where $J_m(k)>0$. The fact that $J_m$ is decreasing with $k$ follows immediately from the fact that both $B$ and $J$ are  decreasing with $k$. In the case of $I_m$ however while $A(k)$ is increasing $I(k)$ is decreasing. Therefore the proof of monotonicity has to take into account the relative weight  of these two terms.


We prove that $I_m$ is increasing for any $m\geq 1$, by proving that its derivative is positive. Let us rewrite 
\[
I_m'(k)=2m\left(A'(k)+I'(k)\right) - I'(k).
\] 
We have:
\[
I'(k)=\left(-\frac{1}{2} \right) \int_0^{\pi/2}\frac{dx}{(k+\sin(x))^{\frac{3}{2}}}
<0.
\]
Thus to get that $I_m'(k)>0$ for any $m\geq 1$ it is sufficient (and necessary) to prove:
\begin{equation}{\label{finale}}
A'(k)+I'(k) \geq 0
\end{equation}
First we compute $A'(k)$. With the change of variables $y=k+\sin(x)$, we have:
\[
A(k)= \int_{-\arcsin(k)}^0 \frac{dx}{\sqrt{k+\sin(x)}}= \int_0^k \frac{dy}{\sqrt{y}\sqrt{1-(y-k)^2}}
 \]
 Thus:
 \[
 A'(k)= \int_0^k \frac{dy}{\sqrt{y}} \left(-\frac{1}{2} \right)\frac{2(y-k)}{(1-(y-k)^2)^{\frac{3}{2}}}+\frac{1}{\sqrt{k}} =  \int_0^k \frac{k-y}{\sqrt{y}(1-(y-k)^2)^{\frac{3}{2}}}dy+ \frac{1}{\sqrt{k}} 
 \]
Thus, we need to prove:
\begin{equation}\label{WNTP}
A'(k)+I'(k)=
 \int_0^k \frac{k-y}{\sqrt{y}(1-(y-k)^2)^{\frac{3}{2}}}dy+ \frac{1}{\sqrt{k}}  -\frac{1}{2}  \int_0^{\pi/2}\frac{dx}{(k+\sin(x))^{\frac{3}{2}}}\geq 0
 \end{equation}
Using   a change of variable $y=k+\sin(x)$, we have
\[
\int_0^{\pi/2}\frac{dx}{(k+\sin(x))^{\frac{3}{2}}}=\int_k^{k+1}\frac{dy}{(y)^{\frac{3}{2}}\sqrt{1-(y-k)^2}}
\]
Let  $0<\delta<1$, we divide this integral in the two intervals $[k,k+\delta]$ and $[k+\delta,k+1]$ we have:
\[
\begin{split}
  \int_k^{k+\delta}\frac{dy}{(y)^{\frac{3}{2}}\sqrt{1-(y-k)^2}} 
&=\frac{-2}{\sqrt{y}}\frac{1}{\sqrt{1-(y-k)^2}}\Big|_{k}^{k+\delta} - \int_k^{k+\delta} \frac{-2}{\sqrt{y}} \left(\frac{-1}{2}\right)\frac{-2(y-k)}{(1-(y-k)^2)^{\frac{3}{2}}}dy=\\
&= -\frac{2}{\sqrt{k+\delta}\sqrt{1-\delta^2}}+ \frac{2}{\sqrt{k}} + \int_k^{k+\delta} \frac{2(y-k)}{\sqrt{y}(1-(y-k)^2)^{\frac{3}{2}}}dy
\end{split}
\]
Thus using the above equation in (\ref{WNTP}), we have, after simplifications, 
\begin{equation}{\label{appB2}}
A'(k)+I'(k)=
 \int_0^{k+\delta} \frac{k-y}{\sqrt{y}(1-(y-k)^2)^{\frac{3}{2}}}dy+\frac{1}{\sqrt{k+\delta}\sqrt{1-\delta^2}}
 -\frac{1}{2} \int_{k+\delta}^{k+1}\frac{dy}{(y)^{\frac{3}{2}}\sqrt{1-(y-k)^2}}
 \end{equation}
 Let 
 \[
 G_\delta(k)=  \int_0^{k+\delta} \frac{k-y}{\sqrt{y}(1-(y-k)^2)^{\frac{3}{2}}}dy+\frac{1}{\sqrt{k+\delta}\sqrt{1-\delta^2}}
 \]
We first prove that   $G_{\delta}(k)\geq  0$ for all $0<\delta<1$. 
Now we treat first the first two terms and then the last one. We have 
\[
 G_\delta(k) \geq  \int_{k}^{k+\delta} \frac{k-y}{\sqrt{y}(1-(y-k)^2)^{\frac{3}{2}}}dy+\frac{1}{\sqrt{k+\delta}\sqrt{1-\delta^2}},
 \]
 Moreover, using the change of variables $y-k=\sin(t)$, and  since the function $\frac{\sin(t)}{\sqrt{k+\sin(t)}}$  is increasing, we have
 \[
 \begin{split}
 &- \int_{k}^{k+\delta} \frac{k-y}{\sqrt{y}(1-(y-k)^2)^{\frac{3}{2}}}dy= \int_0^{\arcsin(\delta)} \frac{\sin(t)}{\sqrt{k+\sin(t)}\cos^2(t)} dt \\
 &\leq \frac{\delta}{\sqrt{k+\delta}}  \int_0^{\arcsin(\delta)} \frac{1}{\cos^2(t)} dt= \frac{\delta}{\sqrt{k+\delta}} \frac{\delta}{\sqrt{1-\delta^2}} 
 \end{split}
\]
Thus
\begin{equation}\label{GK}
G_\delta(k)\geq \frac{\sqrt{1-\delta^2}}{\sqrt{k+\delta}}, 
\end{equation}
We also have:
\[
\begin{split}
\int_{k+\delta}^{k+1}\frac{dy}{(y)^{\frac{3}{2}}\sqrt{1-(y-k)^2}} &\leq \frac{1}{\left(k+\delta\right)^{\frac{3}{2}}} \int_{k+\delta}^{k+1} \frac{dy}{\sqrt{1-(y-k)^2}} \\ & = 
\frac{1}{\left(k+\delta\right)^{\frac{3}{2}}}\left[ \arcsin(y-k)\right]_{k+\delta}^{k+1}= \frac{1}{\left(k+\delta\right)^{\frac{3}{2}}}\left(\frac{\pi}{2}-\arcsin (\delta)\right).
\end{split}
 \]
Using this inequality  together with equation (\ref{GK}), we have:
\[
\begin{split}
 A'(k)+I'(k) & \geq \frac{\sqrt{1-\delta^2}}{\sqrt{k+\delta}}  -\frac{1}{2\left(k+\delta\right)^{\frac{3}{2}}}\left(\frac{\pi}{2}-\arcsin (\delta)\right) \\
& = \frac{1}{\sqrt{k+\delta}} \left( \sqrt{1-\delta^2} -\frac{1}{2(k+\delta)}\left(\frac{\pi}{2}-\arcsin (\delta)\right) \right)\\
& \geq  \frac{1}{\sqrt{k+\delta}} \left( \sqrt{1-\delta^2} -\frac{1}{2\delta}\left(\frac{\pi}{2}-\arcsin (\delta)\right) \right).
  \end{split}
\]
It is easy to see that the the expression inside the parentheses is positive for $\delta$ near 1, for example if $\delta=\frac{\sqrt{3}}{2}$, we have:
\[
 \sqrt{1-\frac{3}{4}}-\frac{1}{{\sqrt{3}}}\left(\frac{\pi}{2}-\frac{\pi}{3} \right)=\frac{1}{2}-\frac{\pi}{6\sqrt{3}}>0.
 \]
 Thus $A'(k)+I'(k) >0$ for all $0\leq k\leq 1$.
We conclude by formally stating: 

\begin{proposition}\label{monotImJm}
For every $m\geq 1$, and $k \in (0,1)$, $I_m=I_m(k)$ in (\ref{ImJmdef}) is an increasing function of $k$ and $J_m=J_m(k)$  in (\ref{ImJmdef}) is a decreasing function of $k$. 
\end{proposition}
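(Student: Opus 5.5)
The monotonicity of $I_m$ has in effect already been obtained in the computation preceding the statement. Since
\[
I_m'(k)=2m\bigl(A'(k)+I'(k)\bigr)-I'(k),
\]
with $I'(k)<0$, it suffices to have $A'(k)+I'(k)>0$ on $(0,1)$. This is the estimate just derived, with the choice $\delta=\frac{\sqrt{3}}{2}$. I would recheck that chain carefully, since it carries the whole $I_m$ part.

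First, the differentiation of $A(k)=\int_0^k \frac{dy}{\sqrt{y}\sqrt{1-(y-k)^2}}$ under the integral sign. The boundary term $\frac{1}{\sqrt{k}}$ arises from the endpoint $y=k$, where the integrand equals $\frac{1}{\sqrt{k}}$. The lower endpoint singularity $y^{-1/2}$ is integrable and independent of $k$.

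Second, the integration by parts on $[k,k+\delta]$. Its boundary term $\frac{2}{\sqrt{k}}$ cancels exactly against the $\frac{1}{\sqrt{k}}$ from $A'$ after the factor $-\frac12$ is applied.

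Third, the bounds. I would drop the nonnegative part $\int_0^k$ of the integrand $(k-y)/\ldots$. On $[k,k+\delta]$ the integrand is bounded using the monotonicity of $\sin t/\sqrt{k+\sin t}$. On $[k+\delta,k+1]$ the factor $y^{-3/2}$ is bounded by $(k+\delta)^{-3/2}$. Finally, $k+\delta\ge\delta$ is used. The numeric check $\frac12-\frac{\pi}{6\sqrt3}>0$ then gives strict positivity uniformly in $k\in(0,1)$, so $I_m'(k)>0$ for every $m\ge1$.

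For $J_m=2mB+(2m-1)J$ it is enough to show that both $B$ and $J$ are decreasing, since $2m>0$ and $2m-1>0$. For $J$, I would differentiate under the integral sign:
\[
J'(k)=-\tfrac12\int_0^{\pi/2}\sin x\,(k+\sin x)^{-3/2}\,dx<0.
\]
For $B$, I would use the nonsingular representation from the Appendix,
\[
B(k)=-k^{3/2}\int_0^{\pi/2}\frac{s\sqrt{1+s}}{\sqrt{1-k^2s^2}}\,dx,\qquad s=\sin x .
\]
The prefactor $k^{3/2}$ is positive and increasing. The integrand is positive and increasing in $k$ because $1-k^2s^2$ decreases. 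Hence the integral is a positive increasing function of $k$, so $B$ is negative and strictly decreasing on $(0,1)$. Differentiation under the integral is legitimate for $k<1$, since $1-k^2s^2\ge1-k^2>0$. Thus $J_m'<0$.

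The main obstacle is entirely on the $I_m$ side, because $A$ increases while $I$ decreases and the two must be balanced. The delicate point is the singular behaviour of $I'(k)$ near $k=0$, of order $\int(k+\sin x)^{-3/2}\,dx\sim k^{-1/2}$. This has to be matched by the $\frac{1}{\sqrt{k}}$ term in $A'$, which is exactly why the integration by parts is done on a $\delta$-window before estimating. If any constant in that chain failed, I would try optimizing $\delta$ numerically, or splitting at $y=k+\delta$ with the sharper bound $\int_{k+\delta}^{k+1}y^{-3/2}\,dy$ in place of the crude $(k+\delta)^{-3/2}$ factor.
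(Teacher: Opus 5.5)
Your proposal is correct and follows the paper's proof essentially step by step. That includes the reduction $I_m'=2m(A'+I')-I'$, the substitution $y=k+\sin x$, the integration by parts on $[k,k+\delta]$ whose $\frac{2}{\sqrt{k}}$ boundary term cancels the $\frac{1}{\sqrt{k}}$ from $A'$, the same three bounds, and the choice $\delta=\frac{\sqrt{3}}{2}$. The one difference is an improvement: you justify that $J$ and $B$ are decreasing (via $J'<0$ and the nonsingular representation $B(k)=-k^{3/2}\int_0^{\pi/2}\frac{s\sqrt{1+s}}{\sqrt{1-k^2s^2}}\,dx$), which the paper only asserts.
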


\subsection{Definition of $C_m(k)$}

The function $C_m=C_m(k)$ is the function that gives the cost as a function of $k \in (0,1)$. When $k$ is defined implicitly as a function of $\gamma$ from the second one of (\ref{equiV5}) $C_m$ gives the cost. The function $C_m=C_m(k)$ is defined by the right hand side of (\ref{costm}) that is, 
\begin{equation}\label{costm2}
C_m(k)=\frac{1}{2}kI_m^2(k)+\frac{3}{4} I_m(k)J_m(k)
\end{equation}

\subsection{Monotonicity of $C_m(k)$}
Calculate $2C_m^{'}$, 
\begin{equation}\label{costder}
2C_m^{'}=I_m^2+2kI_mI_m^{'}+\frac{3}{2}I_m^{'}J_m+\frac{3}{2}I_mJ_m^{'}. 
\end{equation}
In the right hand side of this expression all terms are positive except the last one, due to the fact that $J_m^{'}\leq 0$. $C_m^{'}\geq 0$ is proved if we prove that 
\begin{equation}\label{tobeproven}
J_m^{'}\geq -\frac{2}{3} I_m-\frac{4}{3}kI_m^{'},
\end{equation} 
because when plugged in (\ref{costder}) this would give\footnote{Notice $J_m$ can in principle be negative when $k\in (0,1)$  but because of (\ref{equiV5}) we only need to consider the case where $J_m$ is positive.}
$$
2 C_m^{'}\geq \frac{3}{2} I_m^{'} J_m \geq 0.
$$
Formula (\ref{tobeproven}) is proved if we prove the two inequalities 
\begin{equation}\label{ine1}
J'(k) \geq  - \frac{2}{3}I(k)-\frac{4}{3}kI'(k)
\end{equation}
and
\begin{equation}\label{ine2}
\,B'(k) \geq - \frac{2}{3}A(k)-\frac{4}{3}kA'(k)
\end{equation}
for all $k\in(0,1)$.
Inequality (\ref{ine1}) follows from the easily verified (and already used) equality $I+2J^{'}+2kI^{'}=0$, which when replaced in (\ref{ine1}) gives, after elementary manipulations, $I\geq -2kI^{'}$. However, using the same equality, we get $I+2kI^{'}=-2J^{'}\geq 0$, which proves (\ref{ine1}).  

As for formula (\ref{ine2}), it is equivalent to $3B^{'}+2A+4kA^{'}\geq 0$. By explicitly differentiating with respect to $k$ to find the expressions of $B^{'}$ and $A^{'}$, we obtain that this relation is given by (still denoting by $s:=\sin(x)$),
\[
\begin{split}
0\leq 3B^{'}+2A+4kA^{'} &= 
\sqrt{k}\left(    -\frac{9}{2} \int_0^{\frac{\pi}{2}} \frac{s\sqrt{1+s}}{\sqrt{1-k^2s^2}}dx \right. \\ 
& \left. -3k^2  \int_0^{\frac{\pi}{2}} \frac{s^3\sqrt{1+s}}{(1-k^2s^2)^{\frac{3}{2}}}dx +
4 \int_0^{\frac{\pi}{2}} \frac{\sqrt{1+s}}{\sqrt{1-k^2s^2}}dx +4k^2 \int_0^{\frac{\pi}{2}} \frac{s^2\sqrt{1+s}}{(1-k^2s^2)^{\frac{3}{2}}}dx\right). 
\end{split}
\] 
Collecting a factor $\frac{\sqrt{k}}{2}$, collecting all the functions under only one integral and expressing the function inside the integral with a single denominator $(1-k^2s^2)^{\frac{3}{2}}$, we find that to prove  the inequality, we need to prove that:
\[
F(k):=\int_0^{\pi/2}\frac{(-3s(3-k^2s^2)+8)\sqrt{1+s}}{(1-k^2s^2)^{3/2}}\,dx\ge 0.
\]
$F(0)$ can be computed analytically and it is equal to $F(0)=4$. Furthermore differentiating under the integral  sign, we find
\[
F'(k)=\int_0^{\pi/2}\frac{3k s^2\sqrt{1+s}}{(1-k^2s^2)^{5/2}}\,(-7s+k^2s^3+8)\,dx.> 0
\]
Therefore $F(k)$ is an increasing function in $(0,1)$ which is positive at $k=0$ and therefore always positive. This concludes the proof and we can state. 

\begin{proposition}\label{Cmincreasing}
The function $C_m(k)$ giving the cost as a function of $m$ is always increasing for $k\in [0,1)$. 
\end{proposition}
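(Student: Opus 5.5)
The statement (Proposition \ref{Cmincreasing}) records the outcome of the computation just carried out, so the plan is to organize that computation into a self-contained argument. First, differentiate $C_m(k)=\frac12 kI_m^2+\frac34 I_mJ_m$ as in (\ref{costder}). Then use Proposition \ref{monotImJm}, which gives $I_m'>0$ and $J_m'<0$ on $(0,1)$. We restrict attention to the range where $J_m>0$; by (\ref{equat7}) this is the only range relevant to the cost. There, every term of $2C_m'$ is nonnegative except $\frac32 I_mJ_m'$.

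The reduction is to show (\ref{tobeproven}), namely $J_m'\ge -\frac23 I_m-\frac43 kI_m'$. Substituting this into (\ref{costder}) leaves $2C_m'\ge \frac32 I_m'J_m\ge 0$, with strict inequality wherever $J_m>0$. Since $I_m$, $J_m$, $I_m'$ are linear in the pairs $(A,I)$, $(B,J)$, $(A',I')$ with nonnegative coefficients $2m$ and $2m-1$, it suffices to prove the two pieces separately:
\begin{itemize}
\item inequality (\ref{ine1}) for $(I,J)$;
\item inequality (\ref{ine2}) for $(A,B)$.
\end{itemize}
This splitting is uniform in $m$, which is the point of the approach.

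For (\ref{ine1}), I would use the identity $I+2kI'+2J'=0$. It follows by writing $J=\int \frac{(k+s)-k}{\sqrt{k+s}}$, which gives $J=\int\sqrt{k+s}\,dx-kI$, and then differentiating. Substituting $J'=-\frac12 I-kI'$ turns (\ref{ine1}) into $I+2kI'\ge 0$. This holds because $I+2kI'=-2J'$ and $J'=-\frac12\int s(k+s)^{-3/2}\,dx\le 0$.

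For (\ref{ine2}), I would use the desingularized representations $A=\sqrt k\int_0^{\pi/2}\frac{\sqrt{1+s}}{\sqrt{1-k^2s^2}}\,dx$ and $B=-k^{3/2}\int_0^{\pi/2}\frac{s\sqrt{1+s}}{\sqrt{1-k^2s^2}}\,dx$. These remove the moving endpoint and the singularity, so differentiation under the integral is legitimate. Compute $3B'+2A+4kA'$ explicitly, factor out $\frac{\sqrt k}{2}$, and put everything over the common denominator $(1-k^2s^2)^{3/2}$. This reduces the claim to $F(k)\ge0$, where $F$ is the integral displayed just before the statement.

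The main obstacle is the sign of $F$, since its integrand $(8-9s+3k^2s^3)\sqrt{1+s}$ changes sign. I would therefore argue through monotonicity rather than pointwise.
\begin{enumerate}
\item Evaluate $F(0)=\int_0^{\pi/2}(8-9\sin x)\sqrt{1+\sin x}\,dx$ in closed form. The substitution $1+\sin x=2\sin^2(x/2+\pi/4)$ makes this an elementary integral, which I expect to give a positive value (the text reports $4$).
\item Differentiate under the integral sign to get $F'(k)=\int \frac{3ks^2\sqrt{1+s}\,(8-7s+k^2s^3)}{(1-k^2s^2)^{5/2}}\,dx$. Its integrand is now pointwise nonnegative, since $8-7s\ge 1$ for $s\in[0,1]$.
\end{enumerate}
Hence $F$ is increasing on $[0,1)$ and $F(k)\ge F(0)>0$. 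This gives (\ref{ine2}), then (\ref{tobeproven}), and finally $C_m'\ge0$. If the value of $F(0)$ turned out delicate, I would fall back to a direct numerical-free bound, for instance splitting the integral where $8-9s$ changes sign.
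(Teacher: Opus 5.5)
Your proposal is correct and follows the paper's proof essentially step for step. It uses the same reduction of $C_m'\ge 0$ to (\ref{tobeproven}) and the same split into (\ref{ine1}) and (\ref{ine2}): the first is settled by the identity $I+2kI'+2J'=0$, the second by $F(k)\ge F(0)=4$, which follows from $F'\ge 0$.

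Your explicit restriction to the range where $J_m>0$ (which the paper makes only in a footnote) is necessary, not a convenience. As $k\to 1^-$ one has $A(k)\to+\infty$ and $B(k)=-A(k)+O(1)$, hence $C_m(k)=-(3-2k)m^2A(k)^2+O(A(k))\to-\infty$. So monotonicity holds only on the interval where $J_m>0$, not on all of $[0,1)$ as the statement literally says.
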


\end{document}